\definecolor{urlcolor}{rgb}{0, 0.5, 0}
\definecolor{citecolor}{rgb}{.5,0,.25}
\definecolor{linkcolor}{rgb}{0,0,1}
\def\svgpath{figures/}
\newtheorem{thm}{Theorem}
\newtheorem{prop}[thm]{Proposition}
\newtheorem{cor}[thm]{Corollary}
\newtheorem{lem}[thm]{Lemma}
\newtheorem{claim}{Claim}
\newtheorem*{nonumberlem}{Lemma}
\newtheorem*{nonumberprop}{Proposition}
\theoremstyle{remark} 
\newtheorem{remark}{Remark}
\def\R{\mathbb{R}}
\def\Z{\mathbb{Z}}
\def\D{\mathbb{D}}
\def\dP{\mathbb{P}} 
\def\define#1{\textbf{#1}}
\def\Ipath#1#2{[{#1}\stackrel{#2}{\to}]}
\def\Dpoint#1#2{({#1},{#2})}
\def\Focc#1{[{#1},{#1+1}]}
\def\Bocc#1{[{#1},{#1-1}]}
\def\Imath{i}
\def\Jmath{j}
\def\Ibar{i}
\def\Jbar{j}
\def\Overline{}
\def\zipper{unzip}
\def\bhom{\overset{*}{\sim}}
\def\immersion{perturbation}
\def\animmersion{a perturbation}
\def\Animmersion{A perturbation}
\newcommand{\inv}{\ensuremath{^{-1}}}
\newcommand{\includesvg}[2][]{%
\def\tempa{#1}\def\tempb{}%
\ifx\tempa\tempb\else\let\svgwidth\tempa\fi
\input{\svgpath#2.pdf_tex}%
}
 \title{Computing the Geometric Intersection Number of Curves\thanks{This work was supported by the LabEx PERSYVAL-Lab ANR-11-LABX-0025-01.}}
\author{
Vincent Despr\'e%
    \thanks{\'Equipe INRIA-GAMBLE, Universit\'e de Nancy, France,
\protect\url{Vincent.Despre@inria.fr}}
  \and
Francis Lazarus%
    \thanks{GIPSA-Lab, CNRS, Grenoble, France,
\protect\url{Francis.Lazarus@grenoble-inp.fr}}
}
\begin{document}
\maketitle

\begin{abstract}
The geometric intersection number of a curve on a surface is the minimal number of self-intersections of any homotopic curve, i.e. of any curve obtained by continuous deformation. 
Given a curve $c$ represented by a closed walk of length at most $\ell$ on a combinatorial surface of complexity $n$ we describe simple algorithms  to (1) compute the geometric intersection number of $c$ in $O(n+ \ell^2)$ time, (2) construct a curve homotopic to $c$ that realizes this geometric intersection number in $O(n+\ell^4)$ time, (3) decide if the geometric intersection number of $c$ is zero, i.e. if $c$ is homotopic to a simple curve, in $O(n+\ell\log\ell)$ time. The algorithms for (2) and (3) are restricted to orientable surfaces, but the algorithm for (1) is also valid on non-orientable surfaces.

To our knowledge, no exact complexity analysis had yet appeared on those problems. An optimistic analysis of the complexity of the published algorithms for problems (1) and (3) gives at best a $O(n+g^2\ell^2)$ time complexity on a genus $g$ surface without boundary. No polynomial time algorithm was known for problem (2) for surfaces without boundary. Interestingly, our solution to problem (3) provides a quasi-linear algorithm to a problem raised by Poincar\'e more than a century ago. Finally, we note that our algorithm for problem (1) extends to computing the geometric intersection number of two curves of length at most $\ell$ in $O(n+ \ell^2)$ time.
\end{abstract}
\newpage
\tableofcontents
\newpage
\section{Introduction}
Let $S$ be a surface. Two closed curves $\alpha,\beta: \R/\Z\to S$ are \define{freely homotopic}, written $\alpha\sim \beta$, if there exists a continuous map $h: [0,1]\times\R/\Z\to S$ such that $h(0,t)=\alpha(t)$ and $h(1,t)=\beta(t)$ for all $t\in \R/\Z$. The curves $\alpha$ and $\beta$ being homotopic or not, their number of intersections is
\[ |\alpha\cap\beta| = |\{(t,t')\mid t, t' \in \R/\Z \text{ and } \alpha(t) = \beta(t')\}|.
\]
Their \define{geometric intersection number} only depends on their free homotopy classes and is defined as
\[i(\alpha,\beta) = \min_{\alpha'\sim\alpha, \beta'\sim\beta} |\alpha'\cap\beta'| 
\]
This minimum is finite and obtained with curves that intersect transversally.
Likewise, the number of self-intersections of $\alpha$ is given by
\[\frac{1}{2}|\{(t,t')\mid t\neq t' \in \R/\Z \text{ and } \alpha(t) = \alpha(t')\}|,
\] 
and its minimum over all the curves freely homotopic to $\alpha$ is its \define{geometric self-intersection number} $i(\alpha)$. Note the one half factor that comes from the identification of $(t,t')$ with $(t',t)$.

The geometric intersection number is an important parameter that allows to stratify the set of homotopy classes of curves on a surface. The surface is usually endowed with a hyperbolic metric, implying that each homotopy class is identified by its unique geodesic representative.  Extending a former result by Mirzakhani~\cite{m-gnscg-08}, Sapir~\cite{s-bnnsc-15,m-cmcgo-16} has recently provided upper and lower bounds for the number of closed geodesics with bounded length and bounded geometric intersection number. Chas and Lalley~\cite{cl-sicts-12} also proved that the distribution of the geometric intersection number with respect to the word length approaches the Gaussian distribution as the length grows to infinity. Other more experimental results were obtained with the help of a computer to show the existence of length-equivalent homotopy classes with distinct geometric intersection numbers~\cite{c-sinle-14}.
Hence, for both theoretical and practical reasons, various aspects of 
the computation of geometric intersection numbers have been studied in the past including the algorithmic ones. Nonetheless, all the previous approaches rely on rather complex mathematical arguments and to our knowledge no exact complexity analysis has yet appeared.
 In this paper, we make our own the words of Dehn who noted that the metric on words (on some basis of the fundamental group of the surface) can advantageously replace the hyperbolic metric~\cite{l-ttgpd-10}. We propose a combinatorial framework that leads to simple algorithms of  low complexity to compute the geometric intersection number of curves or to test if this number is zero. Our approach is based on the computation of canonical forms as recently introduced for the purpose of testing whether two curves are homotopic~\cite{lr-hts-12,ew-tcsr-13}. Canonical forms are instances of combinatorial geodesics who share nice properties with the geodesics of a hyperbolic surface. On hyperbolic surfaces each homotopy class contains a unique geodesic that moreover minimizes the number of self-intersections. 
Although a combinatorial geodesic is generally not unique in its homotopy class, it must stay at distance one from its canonical representative and a careful analysis of its structure leads to the first result of the paper.
 \begin{thm}\label{th:main-result}
   Given two curves represented by closed walks of length at most $\ell$ on a combinatorial surface of complexity $n$ we can compute the geometric intersection number of each curve or of the two curves in $O(n+ \ell^2)$ time.
 \end{thm}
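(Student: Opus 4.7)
The plan is to reduce each input curve to a canonical form, and then count intersections directly in these canonical forms. Preprocessing of the combinatorial surface takes $O(n)$ time: I would build an explicit representation (for instance a rotation system), choose a system of loops, and prepare the data structures needed to walk one step at a time in the universal cover. With this preprocessing in place, the canonical form of each input curve can be computed in $O(n+\ell)$ time using the algorithms of~\cite{lr-hts-12,ew-tcsr-13}, yielding a combinatorial geodesic of length $O(\ell)$ representing the same free homotopy class as the input.

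Next, I would enumerate candidate intersections. For two curves $\alpha,\beta$ in canonical form, I would scan all $O(\ell^2)$ pairs $(e_i,f_j)$ of edge-occurrences of $\alpha$ and $\beta$ and decide, from the local rotation at the common vertex, whether the pair witnesses a transverse crossing; the same is done for pairs of edge-occurrences of a single curve, with an overall factor $1/2$, for the self-intersection count. Each check runs in $O(1)$ time, so this stage costs $O(\ell^2)$ overall.

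The core theoretical content, and the main obstacle, is proving that the count obtained from canonical forms equals the geometric intersection number. The inequality ``count $\geq i(\alpha,\beta)$'' is free, since the canonical pair is one admissible representative of the pair of homotopy classes. For the reverse direction, I would establish a bigon-elimination lemma: any embedded bigon bounded by an arc of the canonical form of $\alpha$ and an arc of the canonical form of $\beta$ (respectively by two arcs of the canonical form of $\alpha$) would force one of these arcs to admit a local shortening, contradicting the defining property of a canonical form. The delicate point is that combinatorial canonical forms are not unique in their free homotopy class---as the introduction stresses, a combinatorial geodesic is only determined up to moves that keep it within distance one of its canonical representative---so the bigon analysis must be carried out modulo this ambiguity. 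The self-intersection case additionally requires handling monogons and the double counting built into the definition of $i(\alpha)$. Combining the three phases then yields the claimed $O(n) + O(n+\ell) + O(\ell^2) = O(n+\ell^2)$ running time.
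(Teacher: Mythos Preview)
Your plan has a genuine gap at the counting step. Simply scanning all $O(\ell^2)$ pairs of edge occurrences in the canonical forms and checking the local rotation does \emph{not} return the geometric intersection number. On a system of quads, two canonical curves routinely share arcs: whenever $c[i,i+1]=d[j,j+1]$ the ``local rotation at the common vertex'' carries no information, and more importantly a whole run of coinciding arcs (what the paper calls a maximal double path) corresponds to at most one genuine crossing, not one per shared vertex. Your proposed bigon-elimination lemma is therefore false as stated: two canonical forms do form bigons --- flat ones along double paths (Lemma~\ref{lem:flat-bigons}) and non-flat ones between a forward and a backward stretch --- and a canonical curve with itself can have excess self-crossings in any given immersion (this is precisely why Theorem~\ref{th:compute-immersion} needs $O(\ell^4)$ time to \emph{realize} a minimal immersion, whereas Theorem~\ref{th:main-result} only \emph{counts}).

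The paper's route is different in kind. It does not try to place the canonical curves in minimal position. Instead it identifies $i(c,d)$ with the number of classes in $B/\tau$ (Lemma~\ref{lem:strategy}) and then sets up an explicit bijection between $B/\tau$ and a union $\mathcal{D}_+\cup\mathcal{D}_0\cup\mathcal{D}_-$ of \emph{crossing double paths} (Proposition~\ref{prop:counting-crossings}). Crucially, this bijection uses \emph{both} the rightmost canonical form $c_R$ and the leftmost $c_L$ (the canonical form of $c^{-1}$ reversed), together with the annular diagram between them; the sets $\mathcal{D}_0$ and $\mathcal{D}_-$ capture exactly the lifts of $d$ that miss $\tilde{c}_R$ but still cross the strip $\tilde{\Delta}$. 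The $O(\ell^2)$ bound then comes from the fact that maximal double paths partition the grid of double points (Corollary~\ref{cor:partition-double-paths}), not from a per-vertex crossing test. You are also missing the reduction to primitive curves: the paper first extracts the primitive root in $O(\ell)$ time (Lemma~\ref{lem:non-primitive}) and then applies the formulas of Proposition~\ref{prop:non-primitive-formulas}; without this, the double-path partition of Corollary~\ref{cor:partition-double-paths} fails.
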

As usual the complexity of a combinatorial surface stands for its total number of vertices, edges and faces. A key point in our algorithm is the ability to compute the primitive root of a canonical curve $c$ in linear time. This is a curve $r$ that is not homotopic to a proper power of any other curve and such that $c\sim r^k$ for some integer $k$. We next provide an algorithm to compute an actual curve immersion -- its combinatorial description is part of our combinatorial framework --  that minimizes the number of self-intersections in its homotopy class. While the combinatorial surface in Theorem~\ref{th:main-result} may be non-orientable, the next two results only apply to orientable surfaces. 
\begin{thm}\label{th:compute-immersion}
  Given a curve $c$ represented by a closed walk of length $\ell$ on an orientable combinatorial surface of complexity $n$ we can compute a combinatorial immersion with $i(c)$ crossings in $O(n+\ell^4)$ time.
\end{thm}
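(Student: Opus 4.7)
The plan is to start from the canonical walk $c$ on the combinatorial surface $\Sigma$, convert it into a transverse combinatorial immersion, and then iteratively apply Reidemeister-style reductions to eliminate excess self-crossings until $i(c)$ of them remain. The target value $i(c)$ is supplied by Theorem~\ref{th:main-result} in $O(\ell^2)$ time, so the procedure knows when to stop.

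First I would perturb $c$ at each vertex and along each multiply-traversed edge so that every pair of passes meets in at most one transverse double point; this produces an immersion $\iota_0$ whose number of crossings is bounded by the number of pairs of half-edges traversed by $c$, hence by $O(\ell^2)$. At this stage I would record $\iota_0$ by its Gauss data, namely the cyclic sequence of crossings encountered along $c$ together with the pairing and the local rotation at each crossing, all extractable from the combinatorial surface in $O(\ell^2)$ time.

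Next I would invoke the classical theorem of Hass and Scott: a transverse immersion of a closed curve on a surface is in minimal position if and only if it contains neither a singular monogon nor an embedded singular bigon, and whenever such a configuration exists an \emph{innermost} one exists whose bounded disk is disjoint from the rest of the curve. Pushing the curve across such an innermost monogon (resp. bigon) is a homotopy that lowers the crossing count by one (resp. two) without creating new crossings. Algorithmically, a candidate monogon is determined by a single crossing together with one of the two arcs joining its two preimages along $c$, and a candidate bigon by a pair of crossings together with two such arcs; deciding whether such arcs bound an embedded disk can be done in $O(\ell)$ time by cutting along the arcs and computing the Euler characteristic of the resulting region. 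This yields an $O(\ell^2)$ search per round, and since each successful reduction removes at least one crossing from an initial budget of $O(\ell^2)$, the total running time is $O(\ell^4)$.

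The main obstacle will be the local bookkeeping around each Reidemeister move: after pushing the curve across an innermost monogon or bigon, the Gauss data and the embedding of the image on $\Sigma$ must be updated consistently, the resulting immersion must be re-certified as transverse, and the search for the next monogon or bigon must be restarted without breaking the $O(\ell^2)$ per-round bound. A secondary point is to justify that every Hass--Scott reduction can be realized combinatorially on $\Sigma$ so that the output remains a combinatorial immersion; this holds because each reduction is supported inside a disk whose boundary can be pushed onto the $1$-skeleton of $\Sigma$.
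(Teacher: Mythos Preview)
Your complexity accounting does not close. After perturbation the immersion may have $\Theta(\ell^2)$ crossings, so there are $\Theta(\ell^4)$ pairs of crossings and hence $\Theta(\ell^4)$ candidate bigons; even with an $O(1)$ test per candidate the per-round search is $\Omega(\ell^4)$, not $O(\ell^2)$, and with $O(\ell^2)$ rounds you are at $O(\ell^6)$ at best. More seriously, the version of Hass--Scott you invoke is not what they prove: their Theorem~4.2 guarantees, for a \emph{primitive} curve with excess self-intersection, a \emph{singular} bigon (two index paths with disjoint interiors and homotopic images), not an innermost embedded bigon whose disk misses the rest of the curve. The sides of a singular bigon may cross one another and the rest of $c$, so an Euler-characteristic test for an embedded disk will simply fail to detect it, and the Reidemeister-II ``push across the disk'' move is not available. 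You also never use that $c$ is canonical, and you do not handle the non-primitive case.

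The paper's argument exploits canonicity at every step. Since $c$ is a combinatorial geodesic there are no monogons at all (Corollary~\ref{cor:no-monogon}), and the two sides of any bigon have equal length and remain at combinatorial distance at most one, so the bigon is a \emph{thick double path} whose diagram is an alternation of paths and quad staircases (Theorem~\ref{th:geodesic}). Lemma~\ref{lem:configurations} and Corollary~\ref{cor:index-pairs} then bound the total number of index-pair configurations over all maximal partial diagrams by $O(\ell^2)$; one scans these, testing in $O(1)$ whether consecutive crossing pairs form a singular bigon, and then \emph{swaps} its two sides (Lemma~\ref{lem:swap}) rather than pushing across a disk. The swap alters only the left-to-right orders, not the underlying walk, so $c$ stays geodesic and the structural bounds persist through all $O(\ell^2)$ rounds, giving $O(\ell^4)$ in total. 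The non-primitive case $c=d^{\,p}$ is treated separately by first minimizing the primitive root $d$ and then wrapping $p$ parallel copies, realizing the formula of Proposition~\ref{prop:non-primitive-formulas}.
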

We also propose a nearly optimal algorithm that answers an old problem studied by Poincar\'e~\cite[\S 4]{p-ccal-04}:  decide if the geometric intersection number of a curve is null, that is if the curve is homotopic to a simple curve. 
\begin{thm}\label{th:simple-curve}
  Given a curve represented by a closed walk of length $\ell$ on an orientable combinatorial surface of complexity $n$ we can decide  if the curve is homotopic to a simple curve in  $O(n+ \ell\log\ell)$ time\footnote{In the preliminary version of this work, as it appeared in the proceddings of the 33rd International Symposium on Computational Geometry (SoCG 2017), the announced complexity had an extra $\log\ell$ factor.}. In the affirmative we can construct an embedding of $c$ in the same amount of time.
\end{thm}
We emphasize that our results represent significant progress with respect to the state of the art. No precise analysis appeared in the previously proposed algorithms~\cite{bs-ascs-84,c-sccs-69,c-wns2-72,cl-pggin-87,l-pggin2-87,gs-mcmcr-97,p-cails-02,gkz-amnip-05} concerning Theorems~\ref{th:main-result} or~\ref{th:simple-curve}. An optimistic analysis of what seems the most efficient approach~\cite[Th. 3.7]{l-pggin2-87}, although particularly complex, gives at best a quadratic time complexity for computing the geometric intersection number on an orientable genus $g$ surface without boundary, assuming that the curves are primitive and expressed as words in a canonical basis of the fundamental group. Note that these assumptions are not required in Theorems~\ref{th:main-result},~\ref{th:compute-immersion} or ~\ref{th:simple-curve}.
Schaefer et al.~\cite{sss-cdtgi-08} propose an efficient computation of the geometric intersection number of a set of curves represented by normal coordinates in a triangulated surface. However, their approach is limited to \emph{simple} input curves.  Apart from a recent algorithm by Aretinnes~\cite{a-cavrm-15}, which is restricted to surfaces with \emph{nonempty} boundary, we know of no polynomial time algorithm for Theorem~\ref{th:compute-immersion}. Finally, Theorem~\ref{th:simple-curve} states the first quasi-linear algorithm for detecting homotopy classes of simple curves since the problem was raised by Poincar\'e more than a century ago~\cite[\S 4]{p-ccal-04}. A related problem was tackled by Chang \emph{et al.}~\cite[Th. 8.2]{cex-dwsp-15} (see also~\cite {aaet-rwsp-17} for the planar case) who describe an algorithm to decide if a given closed path on a combinatorial surface is \emph{weakly simple}, i.e. admits an infinitesimal perturbation that is an embedding. Note that a path may go several times through the same vertex or edge and still have infinitesimal perturbations that are simple. However, the algorithm in~\cite{cex-dwsp-15} does not always detect if the path is homotopic to a simple curve since it is only authorized an infinitesimal perturbation. 

In the next section we review some of the previous relevant works. Section~\ref{sec:strategy} presents our general simple strategy to compute the geometric intersection number. This strategy will be applied in a combinatorial framework introduced in Section~\ref{sec:framework}. 
The proof of Theorem~\ref{th:main-result} restricted to primitive curves on orientable surfaces is then given in Section~\ref{sec:counting-on-oriented}. The general case for curves that may be non-primitive on possibly non-orientable surfaces is treated in Section~\ref{sec:counting}. We remark that the number of crossings of two given combinatorial curves is not always well-defined as parts of curves may overlap along shared edges. We resolve this ambiguity with the help of combinatorial perturbations as presented in Section~\ref{sec:perturbation}. The computation of a minimally crossing perturbation is then presented in Section~\ref{sec:computing-immersions} with the proof of Theorem~\ref{th:compute-immersion}. We finally propose a simple algorithm to detect and embed curves homotopic to simple curves (Theorem~\ref{th:simple-curve}) in Section~\ref{sec:simple-curves}. 

\section{Historical notes}
In the fifth supplement to its \emph{Analysis situs} Poincar\'e~\cite[\S 4]{p-ccal-04} describes a method to decide whether a given closed curve $\gamma$ on a surface can be continuously deformed to a simple curve. For this purpose, he considers the surface as the quotient $\mathbb{D}/\Gamma$ of the Poincar\'e disk $\mathbb{D}$ by a (Fuchsian) group $\Gamma$ of hyperbolic transformations. The endpoints of a lift of $\gamma$ in the Poincar\'e disk are related by a hyperbolic transformation whose axis is a hyperbolic line $L$ representing the unique geodesic homotopic to $\gamma$. He concludes that $\gamma$ is homotopic to a simple curve if and only if all the transforms of $L$ by $\Gamma$ are pairwise disjoint or equal. This method was turned into an algorithm by Reinhart~\cite{r-ajccs-62} who worked out the explicit computations in the Poincar\'e disk using the usual representation of hyperbolic transformations by two-by-two matrices. The entries of the matrices being algebraic, the computation could indeed be performed accurately on a computer. The ability to recognize curves that are  \define{primitive}, i.e. whose homotopy class cannot be expressed as a proper power of another class, happens to be crucial in this algorithm though computationally expensive.

Birman and Series~\cite{bs-ascs-84} subsequently proposed an algorithm for the case of surfaces with nonempty boundary that avoids manipulating algebraic numbers. While their arguments appeal to a hyperbolic structure, their algorithm is purely combinatorial. Intuitively, a surface with boundary deform retracts onto a fat graph (in fact a fat bouquet of circles) whose universal covering space embeds as a fat tree in the Poincar\'e disk. The successive lifts of a curve $\gamma$ trace a bi-infinite path in this tree. The limit points of this path belongs to the circle $\partial\mathbb{D}$ at infinity and coincide with the ideal endpoints of the axis of the hyperbolic transformation corresponding to any of the lifts of the curve in the path. The question as to whether two lifts give rise to intersecting axes can thus be reduced to test if the corresponding bi-infinite paths have separating limit points on $\partial\mathbb{D}$. As for Reinhart, Birman and Series assume that the homotopy class of $\gamma$ is given by a word $W$ on some given set of generators of the fundamental group of the surface. In turn, the above test on bi-infinite paths boils down to consider the cyclic permutations of $W$ and $W^{-1}$ in a cyclic lexicographic order and to check if this ordering is  well-parenthesized with respect to some pairing of the words.

Cohen and Lustig~\cite{cl-pggin-87} further observed that the approach of Birman and Series on surfaces with boundary could be extended to count the geometric intersection number of curves. 
In a second paper Lustig~\cite{l-pggin2-87} tackles the case where the curves are taken on a surface without boundary. Like Poincar\'e he considers a closed surface with negative Euler characteristic as the quotient of the Poincar\'e disk by a group of transformations isomorphic to the fundamental group of the surface. The main contribution of the paper is to define a canonical representative for every free homotopy class $\alpha$ given as a word in some fixed system of loops generating the fundamental group. Lustig first notes that there is no obvious way of choosing a canonical form among the words representing $\alpha$. In particular, the shortest words are far from unique. He rather represents (a lift of) $\alpha$ by a path in the union $G\cup N\cup H$ of three tessellations of $\mathbb{D}$, where the edges of $G$ are all the lifts of the generating loops, $N$ is the dual tessellation, and the edges of $H$ joins the vertices of $G$ with the vertices of $N$. (Although the graphs of $G$, $N$ and $H$ are embedded their union is not as every edge of $G$ crosses its dual edge in $N$.) Lustig gives a purely combinatorial characterization of canonical paths and argues that the method in his first paper~\cite{cl-pggin-87} can be applied to the canonical representative of $\alpha$. Overall the two papers add up to 60 pages with essential arguments from hyperbolic geometry and the complexity analysis of the whole procedure remains to be done.

Other approaches were developed without assuming any hyperbolic structure.
Based on the notion of winding number, Chillingsworth~\cite{c-sccs-69,c-wns2-72} provides an algorithm to test whether a curve is homotopic (this time with fixed basepoint) to a simple curve on a surface with nonempty boundary. He also proposed an algorithm for determining when a given set of \emph{simple} closed curves can be made disjoint by (free) homotopy~\cite{c-afdsc-71}.  While the winding number relies on a differentiable structure of the surface, Zieschang~\cite{z-aekf-65,z-aekf2-69} appeals to  the topological structure only. He used the connection between the automorphisms of a topological surface and the automorphisms of its fundamental group in order to detect the homotopy classes of simple closed curves. We also mention some works by Schaefer et al.~\cite{sss-cdtgi-08} to compute the geometric intersection number of two \emph{simple} curves given by their normal coordinates in a triangulated surface. Their algorithm is based on repeated applications of Dehn twists and is claimed to have polynomial time complexity. 

A related work by Hass and Scott~\cite{hs-ics-85} is concerned with curves that have excess intersection, i.e. that can be homotoped so as to reduce their number of intersections. Hass and Scott introduce various types of monogons and bigons that can be either embedded, singular or weak.
A \emph{singular monogon} of a curve $\gamma: \R/\Z\to S$ is a contractible subpath of $\gamma$ whose endpoints define a self-intersection of $\gamma$.
A \emph{bigon} of $\gamma$ is defined by two self-intersections joined by two homotopic subpaths (with fixed endpoints) $\gamma|_\sigma$ and $\gamma|_{\sigma'}$ with $\sigma, \sigma' \subset \R/\Z$. 
The bigon is said \emph{singular} if the defining segments $\sigma,\sigma'$ are disjoint and \emph{weak} otherwise. Hass and Scott prove that a curve with excess self-intersection on an orientable surface must have a singular monogon or a singular bigon. Their result directly suggests an algorithm to compute the geometric intersection number of a curve: iteratively remove monogons or untie bigons until there are no more. The final configuration must have the minimal number of self-intersections. Designing an efficient procedure to find monogons and bigons remains the crux of this approach. Arettines~\cite{a-cavrm-15} has proposed an algorithm based on this approach to compute a minimal configuration of a single curve on an orientable surface with \emph{nonempty} boundary. 
Note that the method cannot be extended to compute the geometric intersection number of a curve on a non-orientable surface since Hass and Scott give a counter-example of a curve with excess self-intersections but with no singular monogon or bigon. The method also cannot be extended to compute the geometric intersection number of two curves on an orientable surface. Indeed, Hass and Scott give two counter-examples to the fact that two curves with excess intersections should have a singular bigon. (A singular bigon between two curves is a pair of homotopic subpaths, one of each curve.) One of their counter-examples contains a curve with excess self-intersections and the other one contains a non-primitive curve. Our counter-example, Figure~\ref{fig:no-singular-bigon}, shows that even assuming each curve to be primitive and in minimal configuration, we may have excess intersections without singular bigons. 
\begin{figure}[h]
  \centering
\includesvg[.5\linewidth]{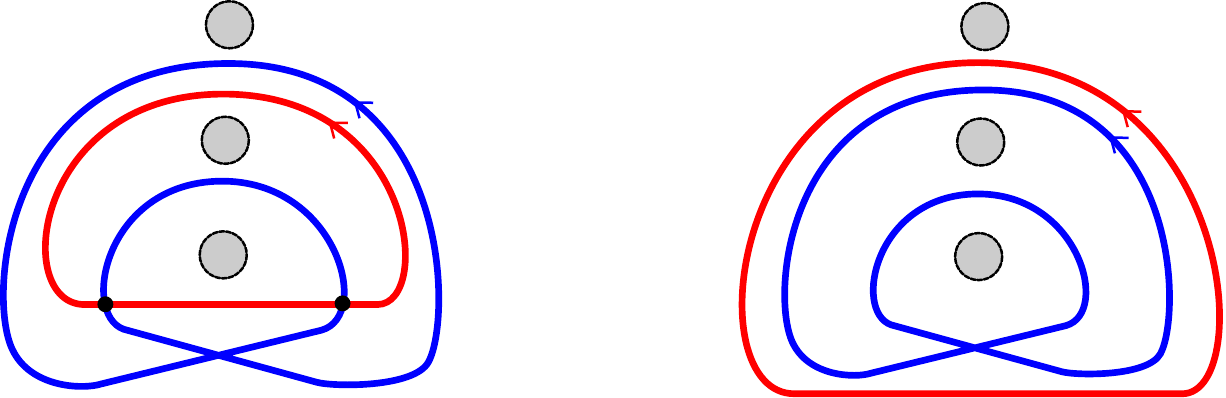}
  \caption{The plain circles represent non-contractible curves. The two curves $\gamma$ and $\delta$ on the left have homotopic disjoint curves $\gamma'$ and $\delta'$. The curves $\gamma$ and $\delta$  thus have excess intersection although there is no singular bigon between the two. If $A=\gamma(0)=\delta(0)$ and $B=\gamma(u)=\delta(v)$ we nonetheless have $\delta|_{[0,v]}\sim \gamma|_{[0,1+u]}$  where $\gamma|_{[0,1+u]}$ is the concatenation of $\gamma$ with $\gamma|_{[0,u]}$. In particular, $\gamma|_{[0,1+u]}$ wraps more that once around $\gamma$.
}
  \label{fig:no-singular-bigon}
\end{figure}
Nonetheless, it was proved~\cite{hs-scs-94,gs-mcmcr-97,p-cails-02} that starting from any configuration of curves one may reach a configuration with a minimal number of intersections by applying a finite sequence of elementary moves involving monogons, bigons and trigons similar to the Reidemeister moves in knot theory. A surprising consequence was obtained by Neumann-Coto~\cite{n-csgs-01}. Define a \emph{cut and paste} on a family of curves by cutting the curves at some of their intersection points and glueing the resulting arcs in a different order. Neumann-Coto proves that any set of primitive curves can be brought to a homotopic set with minimal (self-)intersections by a set of cut and paste operations. Note that each intersection of two curve pieces can be re-arranged in three different manners, including the original one, by a cut and paste. Hence, if the curves have $I$ (self-)intersections we may find a minimal configuration out of the $3^I$ possible re-configurations! 

We also mention the algebraic approach of Gon\c{c}alves et al.~\cite{gkz-amnip-05} based on previous works by Turaev~\cite{t-iltdm-79} who introduced intersection forms over the integral group ring of the fundamental group of the surface. This approach is based on the transformation of each curve into a certain algebraic sum of homotopy classes. A simple analysis of the time complexity for the computation of this sum leads to $O(\ell^5)$ operations for a curve with $\ell$ self-intersections. See~\cite{cl-pggin-87,gkz-amnip-05} for more historical notes.

\section{Our strategy for counting intersections}\label{sec:strategy}
Here, we assume some familiarity with basic hyperbolic geometry in the Poincar\'e disk and with the notion of (universal) covering of a surface. We refer the reader to Stillwell~\cite{s-ctcgt-93}, especially Chapter 1 and 6, for a gentle introduction. 
Following Poincar\'e's original approach we   
represent a surface $S$ with negative Euler characteristic as the hyperbolic quotient surface $\D/\Gamma$ where $\Gamma$ is a discrete group of hyperbolic motions of the Poincar\'e disk $\D$.
We denote by $p: \D\to \D/\Gamma = S$ the universal covering map. Any closed curve $\alpha:\R/\Z\to S$ gives rise to its infinite power $\alpha^\infty:\R\to\R/\Z \to S$ that wraps around $\alpha$ infinitely many times. A \define{lift} of $\alpha^\infty$ is any curve $\tilde{\alpha}:\R\to \D$ such that $p\circ \tilde{\alpha} = \alpha^\infty$ where the parameter of $\tilde{\alpha}$ is defined up to an integer translation (we thus identify the curves $t\mapsto \tilde{\alpha}(t+k)$, $k\in\Z$). \define{For conciseness, we shall write from now on ``a lift of $\alpha$'' where we actually mean a lift of $\alpha^\infty$.}
Note that $p\inv(\alpha)$ is the union of all the images $\Gamma\cdot \tilde{\alpha}$ of $\tilde{\alpha}$ by the motions in $\Gamma$. 
The curve $\tilde{\alpha}$ has two limit points on the boundary of $\D$ which can be joined by a unique hyperbolic line $L$.
The projection $p(L)$ wraps infinitely many times around the unique geodesic homotopic to $\alpha$. In particular, the set of pairs of limit points  of all lifts of $\alpha$ only depends on the homotopy class of $\alpha$.

No two motions of $\Gamma$ have a limit point in common unless they are powers of the same motion. This can be used to show that when $\alpha$ is primitive, its lifts are uniquely identified by their limit points~\cite{fm-pmcg-12}.
Let $\alpha$ and $\beta$ be two primitive curves. We fix a lift $\tilde{\alpha}$ of $\alpha$ and denote by $\tau\in\Gamma$ the hyperbolic motion sending  $\tilde{\alpha}(0)$ to $\tilde{\alpha}(1)$.  Note that $\tau$ leaves $\tilde{\alpha}$ globally invariant. Let $\Gamma\cdot \tilde{\beta}$ be the set of lifts of $\beta$. We consider the subset
\[B=\{\tilde{\beta}'\in \Gamma\cdot \tilde{\beta}\mid \text{ the limit points of } \tilde{\beta}' \text{ and }  \tilde{\alpha} \text{ alternate along } \partial\D\},
\]
and we denote by $B/\tau$ the set of equivalence classes of lifts generated by the relations $\tilde{\beta}'\approx \tau(\tilde{\beta}')$. In other words, two lifts are equivalent if one is the image of the other by some power of $\tau$.
\begin{lem}[\cite{r-ajccs-62}]\label{lem:strategy}
 $i(\alpha,\beta) = |B/\tau|$.
\end{lem}
\begin{proof}
  Put $I(\alpha,\beta) =\{(t,t')\mid t, t' \in \R/\Z \text{ and } \alpha(t) = \beta(t') \text{ and if } \alpha=\beta: t\neq t'\}$. 
Define a map $\varphi: I(\alpha,\beta)\to (\Gamma\cdot \tilde{\beta})/\tau$ as follows. 
Given $(u\bmod 1, v\bmod 1)\in I(\alpha,\beta)$ there is, by the unique lifting property of coverings, a unique lift $\tilde{\beta}'$ of $\beta$ that satisfies $\tilde{\beta}'(v) = \tilde{\alpha}(u)$. We set $\varphi(u\bmod 1, v\bmod 1)$ to the class of this lift. Note that changing $u$ to $u+k$, $k\in\Z$, leads to the lift $\tau^k(\tilde{\beta}')$, so that $\varphi$ is well-defined. We have $B/\tau \subset \varphi(I(\alpha,\beta))$. Indeed, if $\tilde{\beta}'\in B$ then $\tilde{\beta}'$ and $\tilde{\alpha}$ must intersect at some point $\tilde{\beta}'(v) = \tilde{\alpha}(u)$. It follows that $\varphi(u\bmod 1, v\bmod 1)$ is the class of $\tilde{\beta}'$. As an immediate consequence,
\[ |I(\alpha,\beta)| \geq |B/\tau|
\]
When $\alpha$ and $\beta$ are geodesics all their lifts are hyperbolic lines and $\varphi$ is a bijection onto its image $B/\tau$. We conclude that $|I(\alpha,\beta)|$ is minimized among all homotopic curves, so that
$i(\alpha,\beta) = |B/\tau|$.
\end{proof}
When $\alpha$ and $\beta$ are hyperbolic geodesics, their lifts being hyperbolic lines have alternating limit points exactly when they have a non-empty intersection and they have a unique intersection point in that case. This point projects to a crossing of $\alpha$ and $\beta$ that actually identifies the corresponding element of $B/\tau$.
When $\alpha$ and $\beta$ are not geodesic the situation is more ambiguous and their lifts may have multiple intersection points. 
Those intersection points project to crossings of $\alpha$ and $\beta$, so that the elements of $B/\tau$ are now identified with subsets of crossing points (with odd cardinality) rather than single crossing points. The induced partition is generated by the following relation: two crossings are equivalent if they are connected by a pair of homotopic subpaths of $\alpha$ and $\beta$, namely one of the two subpaths of $\alpha$ and one of the two subpaths of $\beta$ cut by the two crossings. Indeed, if the two crossings are projections of intersections of two lifts, then the paths between those intersections in each lift project to homotopic paths. Conversely, homotopic paths lift to paths with common endpoints that can be seen as subpaths of two intersecting lifts. In order to compute the above partition, we thus essentially need an efficient procedure for testing if two paths are homotopic. This homotopy test can be performed in linear time according to Theorem~\ref{th:canonical}. Since a combinatorial curve of length $\ell$ may have $O(\ell^2)$ crossings, we directly obtain an algorithm with time complexity $O(\ell^5)$  to compute the above partition.

In practice, we shall work in a combinatorial framework as presented in the next section.
In this framework, curves are identified with paths in the 1-skeleton of a combinatorial surface. It appears more convenient to assume that all the faces of the combinatorial surface are quadrilaterals. We thus need to reduce the given surface to a system of quads\footnote{Some topologists would call it a \emph{square-tiled surface}.} as described in Section~\ref{sec:quad-system}. Each curve may now be replaced by a homotopic combinatorial geodesic in such a system of quads. The situation becomes somehow intermediate between the ideal hyperbolic case and the most general situation; the unique intersection point of hyperbolic lines is replaced by an elongated pair of homotopic paths. See Figure~\ref{fig:strategy}.
By taking advantage of the specific structure of combinatorial geodesics in the system of quads we can indeed avoid computing the above partition and directly identify $B/\tau$ with certain pairs of homotopic subpaths of the combinatorial geodesics. See Proposition~\ref{prop:counting-crossings}. This leads to a more efficient algorithm with quadratic complexity.
\begin{figure}
  \centering
  \includesvg[\linewidth]{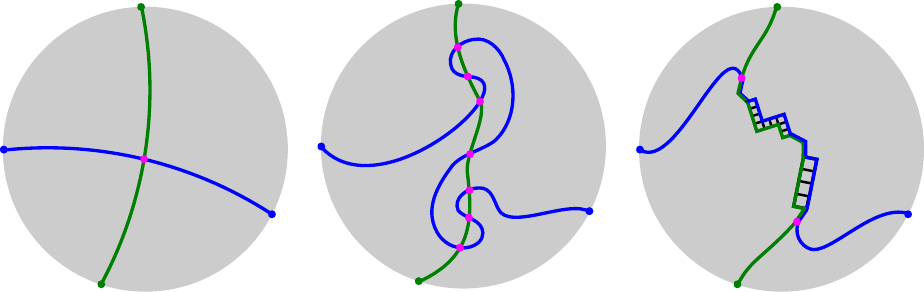}
  \caption{Left, two intersecting hyperbolic lines corresponding to lifts of geodesic curves, say $\alpha$ and $\beta$. Middle, lifts of non-geodesic curves homotopic to $\alpha$ and $\beta$ may intersect several times. Right, lifts of combinatorial geodesics.}
  \label{fig:strategy}
\end{figure}

\section{Combinatorial framework}\label{sec:framework}
In order to study the computational aspects of the intersection number we need to specify the encoding of curves on surfaces. A common representation of the homotopy class of a curve on a surface consists of a word in some set of generators of the fundamental group of the surface. Typically, when the fundamental group is given by a combinatorial presentation with a single relation, one can build a two dimensional complex that is homeomorphic to the surface and such that the word for the curve corresponds to a homotopic closed path in this complex. We shall use a more general notion of combinatorial surfaces where curves a represented by closed walks.
We introduce below this combinatorial framework. In order to mimic as closely as possible hyperbolic geometry we further restrict our framework to system of quads which are combinatorial surfaces composed only of quadrilaterals. We finally study the structure of geodesics in such systems.

\subsection{Combinatorial curves on surfaces}
\paragraph{Combinatorial surfaces.}
As usual in computational topology, we model a surface by a cellular embedding of a graph $G$ in a compact topological surface $S$. Such a cellular embedding can be encoded by a \define{combinatorial surface} composed of the graph $G$ itself together with a rotation system~\cite{mt-gs-01} that records for every vertex of the graph the \define{direct} cyclic order of the incident arcs. When $S$ is orientable, this order can be chosen consistently for all vertices. We call this order clockwise. Nonorientable surfaces can be encoded by  providing an additional signature with every edge that indicates if the direct orientations at its endpoints are consistent or not. The edge is said \define{half-twisted} when the orientations are inconsistent. The \define{facial walks} are obtained from the rotation system by the face traversal procedure as described in~\cite[p.93]{mt-gs-01}.
In order to handle surfaces with boundaries we allow every face of $G$ in $S$ to be either an open disk or an annulus (open on one side). In other words $G$ is a cellular embedding in the closure of $S$ obtained by attaching a disk to every boundary of $S$. We record this information by storing a boolean for every facial walk of $G$ indicating whether the associated face is perforated or not. All the considered graphs may have loops and multiple edges. A directed edge will be called an \define{arc} and each edge corresponds to two opposite arcs. We denote by $a\inv$ the arc opposite to an arc $a$. 
Every combinatorial surface $\Sigma$ can be reduced by first contracting the edges of a spanning tree and then deleting edges incident to distinct plain, i.e. non-perforated, faces. The resulting \define{reduced surface} has a single vertex. A reduced surface without boundary has a single face, but in general a reduced surface may have several faces. The combinatorial surface $\Sigma$ and its reduced version encode different cellular embeddings on a same topological surface.

\paragraph{Combinatorial curves.}
Consider a combinatorial surface with its graph $G$. A combinatorial curve (or path) $c$ is a  walk in $G$, i.e. an alternating sequence of vertices and arcs, starting and ending with a vertex, such that each vertex in the sequence is the target vertex of the previous arc and the source vertex of the next arc. We generally omit the vertices in the sequence. A combinatorial curve is closed when additionally the first and last vertex are equal. When no confusion is possible we shall drop the adjective combinatorial. A closed curve is \define{2-sided} if its number of half-twisted arcs is even. It is otherwise \define{1-sided}.
The \define{length} of $c$ is its total number of arc occurrences, which we denote by $|c|$. If $c$ is closed, we write $c(\Ibar)$, $\Ibar\in \Z/|c|\Z$, for the vertex of index $\Ibar$ of $c$ and $c\Focc{\Imath}$ for the  arc joining $c(\Ibar)$ to $c(\Ibar+1)$.  A \define{double point} of $c$ is a pair of indices $\Dpoint{\Imath}{\Jmath}\in \Z/|c|\Z\times \Z/|c|\Z$ such that $\Ibar\neq \Jbar$ and $c(\Ibar)=c(\Jbar)$. Likewise, given another closed curve $d$, we define a double point of $(c,d)$ as a pair $\Dpoint{\Imath}{\Jmath}\in \Z/|c|\Z\times \Z/|d|\Z$ with $c(\Ibar)=d(\Jbar)$.

For convenience we set $c[\Ibar+1,\Ibar]=c[\Ibar,\Ibar+1]\inv$ to allow the traversal of $c$ in reverse direction. In order to differentiate the arcs from their occurrences we denote by $[\Ibar,\Ibar\pm 1]_c$ the corresponding occurrence of the arc $c[\Ibar,\Ibar\pm 1]$ in $c^{\pm 1}$, where $c\inv$ is obtained by traversing $c^{1}:=c$ in the opposite direction. More generally, for any non-negative integer $\ell$ and any sign $\varepsilon\in\{-1,1\}$,
The sequence of indices 
\[(\Ibar,\Overline{\Imath+\varepsilon},\Overline{\Imath+2\varepsilon},\dots,\Overline{\Imath+\varepsilon\ell})
\]
is called an \define{index path} of $c$ of length $\ell$. The index path can be \define{forward} ($\varepsilon =1$) or \define{backward} ($\varepsilon =-1$) and can be longer than $c$ so that an index may appear more than once in the sequence. We denote this path by $\Ipath{\Imath}{\varepsilon\ell}_{c}$. Its \define{image path} is given by the arc sequence
\[c\Ipath{\Imath}{\varepsilon\ell} := 
(c[\Ibar,\Overline{\Imath+\varepsilon}],c[\Overline{\Imath+\varepsilon},\Overline{\Imath+2\varepsilon}],\dots,c[\Overline{\varepsilon(\ell-1)},\Overline{\varepsilon\ell}]).
\]
The image path of a length zero index path is just a vertex. There is an evident notion of inclusion between index paths: a forward index path $\Ipath{\Imath}{\ell}_{c}$ is included in a forward index path $\Ipath{\Jmath}{k}_{c}$ if there exists a natural number $s$ such that $s+\ell \leq k$ and $i=j+s$. This inclusion relationship extends to both forward and backward paths, replacing each backward index path $\Ipath{\Imath}{-\ell}_{c}$ by $\Ipath{\Imath-\ell}{\ell}_{c}$ to perform the comparison.

A closed curve is \define{contractible} if it is homotopic to a trivial curve (i.e., a curve reduced to a single vertex). We will  implicitly assume that a  homotopy has fixed endpoints when applied to paths and is free when applied to closed curves. If $\alpha, \beta$ are two closed paths, we shall sometimes write $\alpha\, \bhom\,\beta$ to denote homotopy with fixed basepoint  as opposed to $\alpha\sim\beta$ for the free homotopy relation.

\subsection{Systems of quads}\label{sec:quad-system}
\paragraph{Reduction to a system of quads.}
Let $\Sigma$ be a combinatorial surface with negative Euler characteristic. Following Lazarus and Rivaud~\cite{lr-hts-12} we describe the reduction of  $\Sigma$ to a standard quadrangulation, called a \define{system of quads} by Erickson and Whittlesey~\cite{ew-tcsr-13}. We first consider the case of a surface without boundary, i.e., when $\Sigma$ has no perforated faces. After reducing $\Sigma$ to a surface $\Sigma'$ with a single vertex $v$ and a single face $f$ this system of quads is obtained by adding a vertex $w$ at the center of $f$, adding edges between $w$ and all occurrences of $v$ in the facial walk of $f$, and finally deleting the edges of $\Sigma'$. See left and middle Figure~\ref{fig:quadSystem}.
\begin{figure}[h]
  \def\ea{\alpha}
  \def\eb{\beta}
  \def\ec{\gamma}
  \def\ed{\delta}
  \def\ee{\varepsilon}
  \def\ef{\zeta}
  \def\eg{\eta}
  \def\eh{\theta}
  \centering
  \includesvg[\linewidth]{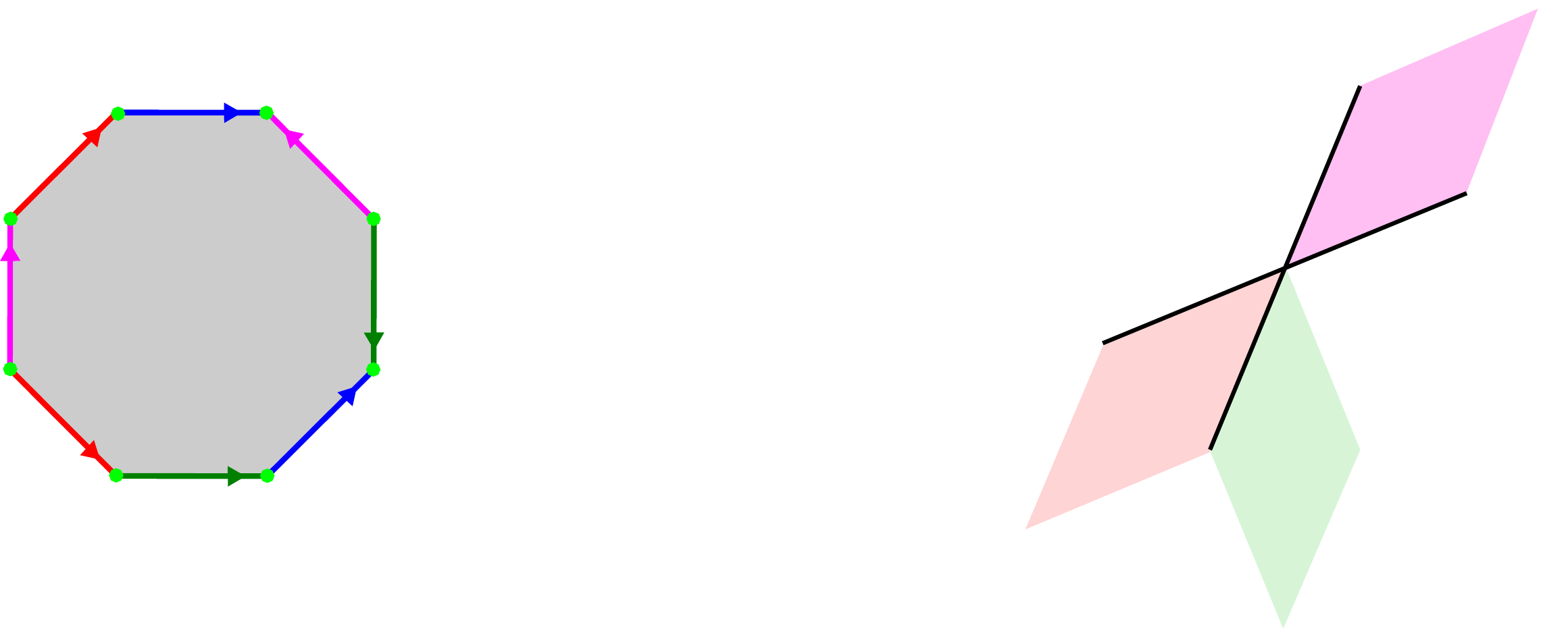}
  \caption{Left, a planar unfolding of a reduced orientable surface of genus two after cutting along its edges. Middle, the corresponding system of quads with its radial graph. Each face is a quadrilateral obtained by gluing two triangular faces along a deleted edge of the reduced surface. Right, a disk diagram for the contractible curve $(\eb,\ea\inv,\ed,\eb\inv,\ea,\ee\inv,\ef,\ef\inv,\eg,\ef\inv,\ee,\ec\inv,\eb)$. The unbounded face corresponds to the perforated face in $\Delta$.}
  \label{fig:quadSystem}
\end{figure}
The graph of the resulting system of quads, called the \define{radial graph}~\cite{lr-hts-12}, is bipartite. It contains two vertices, namely $v$ and $w$, and $4g$ edges (resp. $2g$ edges) if the surface is orientable (resp. non-orientable), where $g$ is the genus of $\Sigma$. All its faces are quadrilaterals. A similar construction applies when $\Sigma$ has perforated faces. In this case, the reduced surface $\Sigma'$ may have several plain faces, each surrounded by perforated faces. 
We essentially perform the same sequence of operations to each plain face. However, when deleting edges of $\Sigma'$ incident to both a perforated and a plain face, we merge the two faces to obtain a larger perforated face\footnote{In general, we cannot afford merging the plain faces with the adjacent perforated faces in a systematic manner to get a surface with perforated faces only. Indeed, when transforming a path of $\Sigma$ into a homotopic path of this reduced version this may increase the complexity of the path by a factor of $g$.}. In the end, the plain faces are quadrilaterals and every vertex is incident to a perforated face. In order to get a bipartite graph, we eventually subdivide the remaining loop edges by introducing a vertex in the middle. 

Note that the system of quads can be deduced from $\Sigma$ by a sequence of edge contractions, deletions, insertions and  subdivisions. Every cycle $c$ of $\Sigma$ can thus be modified accordingly to give a cycle $\varphi(c)$ in the system of quads. Clearly, there exist cellular  embeddings $\rho, \rho'$ of the graphs of $\Sigma$ and of the system of quads in a same topological surface such that $\rho(c)\sim\rho'(\varphi(c))$. 
\begin{lem}[\cite{dg-tcs-99,lr-hts-12}]\label{lem:systems-of-quads}
  Let $n$ be the number of edges of the combinatorial surface $\Sigma$. The above construction of a system of quads can be performed in $O(n)$ time so that for every closed curve $c$ of length $\ell$ in $\Sigma$, we can compute in $O(\ell)$ time a curve $c'$ of length at most $2\ell$ in the system of quads such that $c'\sim \varphi(c)$.
\end{lem}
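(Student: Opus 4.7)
The plan is to execute the construction of the system of quads in linear time while simultaneously building a table that translates each arc of $\Sigma$ into a walk of length at most $2$ in the final radial graph; a single pass over $c$ then produces the desired $c'$.

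For the construction, I would first compute a spanning tree $T$ of the graph of $\Sigma$ by breadth-first search and contract its edges by splicing the rotation systems at the identified vertices, producing a surface $\Sigma_1$ with a single vertex $v$. I would then compute a spanning tree $T^\ast$ of the dual graph of $\Sigma_1$ and delete the corresponding primal edges, splicing the facial walks on the fly, yielding $\Sigma'$ with one vertex, $2g$ loops, and a single face $f$ of boundary length $4g$. Both reductions cost $O(n)$ time. Finally I insert the center $w$ of $f$ together with the $4g$ radial edges from $w$ to the corner occurrences of $v$ on $\partial f$, and delete the $2g$ edges of $\Sigma'$. The total construction fits in $O(n)$ time.

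For the translation I attach to each arc $a$ of $\Sigma$ a rule $\tau(a)$ giving a walk of length at most $2$ in the system of quads, with both endpoints at $v$ and freely homotopic to $a$ in the underlying topological surface. Tree arcs of $T$ get the empty walk at $v$. For an arc $a$ surviving as an edge of $\Sigma'$, I take $\tau(a) = r\,r'\inv$ where $r,r'$ are the radial edges at the two corners of a chosen occurrence of $a$ on $\partial f$; the triangle with sides $a,r,r'$ is a half-quad in the pre-deletion surface, providing the homotopy. For a face-tree arc $a$ deleted in the second reduction, the two corners where $a$ was attached to the facial walks of its two incident faces survive the splicing as two specific positions in $\partial f$; I take $r,r'$ to be the radial edges at those positions and again set $\tau(a)=r\,r'\inv$. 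The homotopy holds because $a$ sat inside the common topological disk formed by its two incident faces before deletion, and can be swept through $w$ to $r\,r'\inv$. Maintaining, for every arc of $\Sigma$, a pointer to the pair of corners it ends up bracketing in $\partial f$ costs $O(n)$ overall. Applying $\tau$ in a single pass along $c$ then produces a closed walk $c'$ of length at most $2\ell$ in $O(\ell)$ time, and $c'\sim\varphi(c)$ by concatenation of the elementary homotopies.

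The main obstacle is the bookkeeping that consistently identifies, for every arc of $\Sigma$ including the face-tree arcs, the pair of radial edges of the final system of quads serving as its length-$2$ translation. I would carry this out by maintaining corner occurrences in a union-find or pointer-chasing structure that is updated at each facial-walk splice, so that every merging step takes amortized constant time. This is the only non-obvious part; the remaining operations are standard uses of breadth-first search and rotation-system manipulation, and the overall complexity decomposes cleanly into $O(n)$ preprocessing and $O(\ell)$ curve translation.
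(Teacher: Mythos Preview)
The paper does not prove this lemma; it is stated with citations to Dey--Guha and Lazarus--Rivaud, where the construction originates. Your sketch is the standard argument those references make explicit: contract a spanning tree, delete a dual spanning tree, and replace each surviving loop by the pair of radial edges bounding one of its half-quads, while deleted face-tree arcs are handled by tracking the two corners they leave behind in the merged facial walk.

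The only place to tighten is your final paragraph. Invoking union-find would cost an inverse-Ackermann factor and spoil the clean $O(n)$ bound; what you actually want is the pointer-splicing version. Maintain each facial walk as a doubly linked list of corner records, and store for every arc of $\Sigma_1$ direct pointers to the corner records flanking its two occurrences. Each edge deletion is then a constant-time list splice, and the corner pointers you stored remain valid afterward with no find operation needed. With that adjustment the argument is complete and matches the construction in the cited references.
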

For the rest of the paper, unless stated otherwise, we shall assume that \define{all surfaces have negative Euler characteristic}. 

\paragraph{Diagrams.}
A \define{disk diagram} over the combinatorial surface $\Sigma$ is a combinatorial sphere $\Delta$ with one perforated face together with a labelling of the arcs of $\Delta$ by the arcs of $\Sigma$ such that
\begin{enumerate}
\item opposite arcs receive opposite labels, 
\item the facial walk of each plain face of $\Delta$ is labelled by the facial walk of some plain face of $\Sigma$.
\end{enumerate}
The diagram is \define{reduced} when no edge of $\Delta$ is incident to two plain faces labelled by the same facial walk (with opposite orientations) of $\Sigma$. An \define{annular diagram} is defined similarly by a combinatorial sphere with two distinct perforated faces. A vertex of a diagram that is not incident to any perforated face is said \define{interior}.
\begin{lem}[van Kampen, See~{\cite[Sec. 2.4]{ew-tcsr-13}}] \label{lem:diagram}
  A cycle of $\Sigma$ is contractible if and only if it is the label of the facial walk of the perforated face of a reduced disk diagram over $\Sigma$. Two cycles are freely homotopic if and only if the facial walks of the two perforated faces of a reduced annular diagram over $\Sigma$ are labelled by these two cycles respectively.
\end{lem}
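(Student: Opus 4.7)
The plan is to handle the two directions of each statement separately, obtaining the straightforward direction from realizing the diagram continuously in $\Sigma$, and the converse from the standard presentation of $\pi_1(\Sigma)$ by facial walks.

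For sufficiency in the disk case, suppose the cycle $c$ labels the perforated face of a reduced disk diagram $\Delta$. The arc and face labellings define a cellular map $\Delta \to \Sigma$ that sends the facial walk of the perforated face (topologically a circle bounding a disk) to $c$. Composing with the continuous realizations of $\Delta$ and $\Sigma$ yields a continuous map from a topological disk to $\Sigma$ whose boundary is $\rho(c)$, which is a null-homotopy. The analogous construction for an annular diagram produces a continuous map from an annulus into $\Sigma$, exhibiting a free homotopy between the two cycles labelling the two perforated faces.

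For necessity in the disk case, contract a spanning tree of $\Sigma$ and delete edges separating two faces to obtain a one-vertex one-face presentation $\pi_1(\Sigma) = \langle E \mid r \rangle$, where $r$ is the facial walk of the unique remaining face. A contractible cycle $c$ represents the identity, so in the free group on $E$ it can be written as a product $\prod_{i=1}^k u_i r^{\varepsilon_i} u_i^{-1}$ of conjugates of $r^{\pm 1}$. Draw each conjugate as a polygonal disk whose boundary reads $r^{\varepsilon_i}$, attached to a basepoint via a thin corridor labelled by $u_i$; concatenating these $k$ balloons at the basepoint produces a disk diagram $\Delta_0$ whose perforated boundary reads exactly $c$. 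Expanding each face back through the edge deletions and tree contractions gives a disk diagram over $\Sigma$ itself. Finally, iteratively cancel any two adjacent non-perforated faces whose shared edge is labelled by the same arc with opposite orientations in their facial walks: collapsing such a dipole removes two faces, re-identifies the two remaining boundary arcs, and leaves the perforated boundary untouched. The procedure terminates because the face count strictly decreases, and by construction the output is reduced. For the annular case, two cycles $c,d$ are freely homotopic iff $c$ is conjugate to $d^{\pm 1}$ in $\pi_1(\Sigma)$, i.e.\ the word $cud^{\mp 1}u^{-1}$ is trivial for some path $u$. The disk construction then yields a reduced disk diagram whose perforated boundary reads this word; cutting along $u$ opens a second perforation and separates the remaining boundary into two arcs labelled $c$ and $d$, producing the desired reduced annular diagram.

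The principal obstacle is verifying that the balloon-glueing construction really produces a combinatorial sphere with a single perforation (i.e.\ a genuinely planar diagram) rather than a more general 2-complex, and that dipole cancellation preserves both this planar structure and the perforated-boundary label. Careful bookkeeping is needed at degenerate cases: empty stems $u_i$, relators $r$ that are proper powers, and dipole collapses that may shrink a face to a single edge, create new dipoles, or leave isolated vertices on the perforated boundary. These points are routine in the standard van Kampen literature but warrant explicit verification in the present combinatorial-surface setting, which justifies citing \cite{ew-tcsr-13} rather than reproving the lemma in full.
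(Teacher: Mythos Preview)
The paper does not prove this lemma at all: it is stated with attribution to van Kampen and a pointer to \cite[Sec.~2.4]{ew-tcsr-13}, and is then used as a black box. Your sketch follows the classical van Kampen construction (realize the diagram for sufficiency; write the word as a product of conjugates of the relator, build the lollipop diagram, then cancel dipoles for necessity; derive the annular case from the disk case via conjugacy), which is precisely the argument the cited reference records. So your approach agrees with what the paper defers to, and your closing remark that the technical bookkeeping is best handled by citing \cite{ew-tcsr-13} matches exactly what the authors chose to do.
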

Figure~\ref{fig:quadSystem} (right) shows a disk diagram for a contractible curve. Note that two plain faces that are adjacent and consistently oriented in a reduced diagram must be labelled by adjacent faces that are consistently oriented in $\Sigma$. Moreover, the degree of an interior vertex of the diagram is a multiple of the degree of  the corresponding vertex in $\Sigma$.  In the sequel, \define{all the considered diagrams will be supposed reduced}. 

\paragraph{Spurs, brackets and canonical curves.}
Thanks to Lemma~\ref{lem:systems-of-quads} we may assume that our combinatorial surface $\Sigma$ is a system of quads. Moreover, the construction of this system of quads with the assumption on the Euler characteristic implies that \define{all interior vertices have degree at least 6}. For surfaces without boundary, this follows from the fact that the two vertices of $\Sigma$ have degree $4g$ in the orientable case and $2g$ otherwise. For surfaces with non-empty boundary, the claim is trivial since the system of quads has no interior vertices.
Let $(a_1,a_2)$ be a pair of arcs sharing their origin vertex $v$ and such that the face corners between $a_1$ and $a_2$ in the direct cyclic order around $v$ belong to 
plain faces.
Following the terminology of Erickson and Whittlesey~\cite{ew-tcsr-13}, we define the \define{turn} of $(a_1,a_2)$ as the number of those face corners. Hence, if $v$ is an interior vertex of degree $d$ in $\Sigma$, the turn of $(a_1,a_2)$ is an integer modulo $d$  that is zero when $a_1=a_2$. If $v$ is a boundary vertex and one of the faces  between $a_1$ and $a_2$ (in the direct order) is
perforated, while none of the faces between $a_2$ and $a_1$ is perforated, then the turn of $(a_1,a_2)$ is minus the number of face corners between $a_2$ and $a_1$. The turn is undefined when one of the faces between $a_1$ and $a_2$  and one of the faces between $a_2$ and $a_1$ are perforated.
When $\Sigma$ is oriented, the \define{turn sequence} of a subpath $(a_i,a_{i+1},\dots,a_{i+j-1})$ of a closed curve of length $\ell$ is the  sequence of $j+1$ turns of $(a_{i+k}\inv,a_{i+k+1})$ for $-1\leq k< j$, where indices are taken modulo $\ell$. The subpath may have length $\ell$, thus leading to a sequence of $\ell+1$ turns. On a non-necessarily oriented surface the sign of the turn of $(a_{i+k}\inv,a_{i+k+1})$ should be changed according to the parity of the number of half-twisted arcs among $(a_i,a_{i+1},\dots,a_{i+k})$. This amounts to untwisting the path $(a_i,a_{i+1},\dots,a_{i+k})$ before computing the turn sequence. A \define{spur} in a curve is a subpath of the form $(a,a\inv)$. Hence,
the turn of $(a_{i+k}\inv,a_{i+k+1})$ is zero precisely when $(a_{i+k},a_{i+k+1})$ is a spur. A \define{bracket} is any subpath whose turn sequence has the form $12^*1$ or $\bar{1}\bar{2}^*\bar{1}$ where $t^*$ stands for a possibly empty sequence of turns $t$ and $\bar{x}$ stands for $-x$.

The next theorem follows from Lemma~\ref{lem:diagram} and a  combinatorial version of Gauss-Bonnet theorem~\cite{gs-sctag-90}.
\begin{thm}[\cite{gs-sctag-90,ew-tcsr-13}]\label{th:four-vertex}
A nontrivial contractible closed curve on a system of quads must have either a spur or four brackets. Moreover, if the curve is the label of the boundary walk
(i.e., of the facial walk of the perforated face)
of a disk diagram with at least one interior vertex, then the curve must have either a spur or five brackets.
\end{thm}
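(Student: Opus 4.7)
The plan is to run a combinatorial Gauss--Bonnet argument on a reduced disk diagram bounded by $c$. First, by Lemma~\ref{lem:diagram} the nontrivial contractible curve $c$ labels the perforated face of some reduced disk diagram $\Delta$ over $\Sigma$; for the second part of the theorem, $\Delta$ is prescribed and is assumed to contain at least one interior vertex. Two structural facts will drive the whole argument: because $\Sigma$ is a system of quads, every non-perforated face of $\Delta$ is a quadrilateral, and by the degree remark following Lemma~\ref{lem:diagram} every interior vertex of $\Delta$ has degree a positive multiple of the degree of its label in $\Sigma$, hence at least $8$.

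Next I would assign the angle $\pi/2$ to every face corner so that every quad is flat, and apply Gauss--Bonnet to the disk $\Delta$. After clearing the factor $\pi/2$, the identity reads
\[
 \sum_{v \in V_{\mathrm{int}}(\Delta)} \bigl(4-\deg(v)\bigr) \;+\; \sum_{v \in V_{\mathrm{bdy}}(\Delta)} \bigl(2-t(v)\bigr) \;=\; 4,
\]
where $t(v)$ is the number of face corners of $\Delta$ incident to the boundary vertex $v$. Each interior vertex contributes at most $-4$, so writing $I$ for their number the boundary excess satisfies
\[
 \sum_{v \in V_{\mathrm{bdy}}(\Delta)} \bigl(2-t(v)\bigr) \;\geq\; 4 + 4I.
\]

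To turn this excess into brackets, I would use the no-spur hypothesis, which gives $t(v)\geq 1$ everywhere on the boundary. Letting $n_t$ count boundary vertices of turn exactly $t$, the inequality above rearranges into
\[
 n_1 \;\geq\; 4+4I+\sum_{t\geq 3}(t-2)\,n_t \;\geq\; 4+4I+n_{\geq 3},
\]
with $n_{\geq 3}=\sum_{t\geq 3} n_t$. Reading the turns cyclically along the boundary, the $n_1$ turn-$1$ corners split the remaining corners into $n_1$ gaps; a gap forms a bracket $1 2^* 1$ precisely when every turn inside it equals $2$, and each turn-$\geq 3$ corner spoils at most one gap. Hence the number of brackets is at least $n_1-n_{\geq 3}\geq 4+4I$, which yields $\geq 4$ brackets when $I=0$ and $\geq 8\geq 5$ brackets whenever $I\geq 1$. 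Brackets of the dual type $\bar 1 \bar 2^* \bar 1$ are produced instead if the interior of $\Delta$ lies on the opposite side of $c$.

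The delicate step I expect is the last one: one must check that the ``inside corner count'' $t(v)$ appearing in Gauss--Bonnet coincides, up to a global sign determined by the chosen side of $c$, with the signed turn of $c$ as defined just before the theorem, so that the $n_1$ turn-$1$ corners of $\Delta$ are genuine turn-$(\pm 1)$ vertices of $c$ and so that the runs $1 2^* 1$ we extract are really brackets of $c$. The small degenerate case in which $\Delta$ has no non-perforated face is ruled out because $c$ is nontrivial and spur-free.
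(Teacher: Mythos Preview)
Your proposal is correct and is precisely the approach the paper indicates: the paper does not actually prove Theorem~\ref{th:four-vertex} in the text but simply records that it follows from Lemma~\ref{lem:diagram} together with a combinatorial Gauss--Bonnet theorem, citing \cite{gs-sctag-90,ew-tcsr-13}. Your outline supplies exactly that argument. Your count in fact yields $n_1-n_{\geq 3}\geq 4+4I$, hence at least eight brackets as soon as an interior vertex is present, which is stronger than the five in the statement; the weaker constant is what the cited references establish under the milder hypothesis that interior vertices have degree at least~$5$, whereas here one has degree at least~$8$. The one point to watch, beyond the identification of $t(v)$ with the turn that you already flag, is that the boundary walk of a disk diagram need not be a simple cycle (cut vertices may occur); the sum $\sum(2-t(v))$ should then be read as a sum over boundary \emph{corners} of the perforated face, and one checks that each cut vertex only pushes the inequality further in your favour.
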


Lazarus and Rivaud~\cite{lr-hts-12} have introduced a canonical form for every nontrivial free homotopy class of closed curves in an oriented system of quads. In particular, two curves are freely homotopic if and only if their canonical forms are equal (up to a circular shift of their vertex indices). This canonical form is defined as the rightmost generator of a certain cylindrical covering surface of the system of quads. It was further characterized by Erickson and Whittlesey~\cite{ew-tcsr-13} in terms of turns and brackets. The \define{canonical form} (of the free homotopy class) of a curve is the homotopic curve that contains no spurs or brackets and whose turning sequence contains  no $-1$'s and contains at least one turn that is not $-2$.
\begin{thm}[\cite{lr-hts-12},\cite{ew-tcsr-13}]\label{th:canonical}
  Up to a circular shift of its vertex indices, the canonical form of
  a combinatorial closed curve of length $\ell$ on an oriented system
  of quads is uniquely defined. It can be computed in $O(\ell)$ time.
\end{thm}

\subsection{Geodesics}\label{subsec:geodesics}
A \define{combinatorial geodesic} of length $\ell$ is a curve that contains no spurs or brackets, except maybe a bracket of length $\ell-1$ if the curve is \emph{exceptional}. See the notion of exceptional curve after Corollary~\ref{cor:equal-length}. Given a curve on a system of quads, orientable or not, we can compute a homotopic geodesic by iteratively removing spurs and brackets as much as possible.  The resulting curve are said \emph{reduced} in \cite{ew-tcsr-13}. The proof therein applies to orientable systems of quads but trivially extends to the non-orientable case.
\begin{thm}[{\cite[Lem. 4.3]{ew-tcsr-13}}]\label{th:compute-geodesic}
  Given a combinatorial closed curve of length $\ell$ on a system of quads, a homotopic geodesic can be computed in $O(\ell)$ time. 
\end{thm}
On an oriented surface, the canonical form is an instance of combinatorial geodesic; this is the rightmost homotopic geodesic. The definitions of a geodesic and of a canonical form extend naturally to paths: a \define{geodesic path} is a path  that contains no spurs or brackets. It is 
\define{canonical} (or in canonical form) if its turning sequence contains  no $-1$'s. Theorem~\ref{th:canonical} extends trivially to paths: every path has a unique homotopic geodesic in canonical form.

Although we cannot claim in general the uniqueness of geodesics in a homotopy class, homotopic geodesics are almost equal and have the same minimal length. Specifically, define a \define{(quad) staircase} as a planar sequence of quads obtained by stitching an alternating sequence of rows and columns of quads  to get the shape of a staircase. Assuming that the staircase goes up from left to right, we define the \define{initial tip} of a quad staircase as the lower left vertex of the first quad in the sequence. The \define{final tip}  is defined as the upper right vertex of the last quad. See Figure~\ref{fig:lentille-plate}. A \define{closed staircase} is obtained by identifying the two vertical arcs incident to the initial and final tips of a staircase.
\begin{thm}\label{th:geodesic}
  Let $c,d$ be two nontrivial homotopic combinatorial geodesics. 
If $c,d$ are closed curves, then they label the two boundary cycles of an annular diagram composed of a single closed staircase or of an alternating sequence of paths (possibly reduced to a vertex) and quad staircases connected through their tips. 
Likewise, if $c,d$ are paths, then 
the closed curve $c\cdot d\inv$ labels the boundary of a disk diagram composed of an alternating sequence of paths (possibly reduced to a vertex) and quad staircases connected through their tips. 
\end{thm}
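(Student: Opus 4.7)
The plan is to apply Lemma~\ref{lem:diagram} to produce a reduced diagram between $c$ and $d$, show by combinatorial Gauss--Bonnet that the geodesic hypothesis forbids interior vertices in the diagram, and then read off the staircase decomposition from the resulting quad tiling. In the closed case, the homotopy $c\sim d$ is witnessed by a reduced annular diagram $A$ whose two perforated facial walks are $c$ and $d$; in the path case, the loop $c\cdot d\inv$ is contractible and bounds a reduced disk diagram $D$. In either case, every non-perforated face is a quadrilateral and every interior vertex of the diagram has degree a positive multiple of its $\Sigma$-degree, hence at least~$8$.

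I first rule out interior vertices. For a boundary vertex $v$ of the diagram let $k_v$ be the number of face corners at $v$; consistency of the arc labelling with the rotation of $\Sigma$ makes $k_v$ equal to the turn of the corresponding perforated walk at~$v$ on the interior side. The geodesic hypothesis on $c$ and $d$ gives $k_v\geq 1$ everywhere (no spur) and forbids the sub-pattern $1\,2^{*}\,1$ within each of $c$ and $d$ (no bracket). Assigning angle $\pi/2$ to each quad corner, the combinatorial Gauss--Bonnet identity~\cite{gs-sctag-90} reads
\begin{equation*}
\sum_{v\in \mathrm{int}}\bigl(4-\deg(v)\bigr)+\sum_{v\in \partial}(2-k_v)=4\chi,
\end{equation*}
with $\chi\in\{0,1\}$. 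A single interior vertex contributes at most $-4$ on the left, so its presence would force $\sum_\partial(2-k_v)\geq 4+4\chi\geq 4$. Setting $n_1=|\{v:k_v=1\}|$ and $S=\sum_{k_v\geq 3}(k_v-2)$, this sum equals $n_1-S$; the no-bracket constraint on each of $c$ and $d$ separates any two consecutive turn-$1$ vertices by at least one turn-$\geq 3$ vertex, so $n_1\leq S+\varepsilon$ with $\varepsilon=0$ in the annular case and $\varepsilon\leq 2$ in the disk case (the two junctions of $c\cdot d\inv$). Either way $n_1-S\leq 2<4$, a contradiction; hence the diagram has no interior vertex.

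It remains to read off the decomposition. With every vertex on the boundary and every face a quad, the dual graph of the non-perforated faces is a single cycle (annulus) or a tree (disk). The Gauss--Bonnet equality combined with the no-bracket constraint applied to both perforated walks forces $k_v\in\{1,2,3\}$, with convex ($k_v=1$) and concave ($k_v=3$) vertices alternating along each walk and separated by possibly empty runs of flat ($k_v=2$) vertices. Following the dual chain from one convex-concave pair to the next exhibits the row-and-column stitching of a quad staircase. Maximal staircases are joined at tip vertices where $c$ and $d$ coincide, possibly separated by a common sub-walk of $c$ and $d$ appearing in the decomposition as a path of arbitrary length (possibly zero); if no such tip exists, which is only possible in the annular case, the entire diagram wraps around to form a single closed staircase. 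The delicate point here is ruling out $k_v\geq 4$, which is done by pairing the extra defect at such a vertex with corresponding turn-$1$ vertices on the opposite perforated walk and invoking the no-bracket condition there; once this is settled, the remaining staircase bookkeeping is routine, so the real obstacle of the whole proof lies in the Gauss--Bonnet step.
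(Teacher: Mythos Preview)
Your approach is in the same spirit as the paper's---both ultimately rest on combinatorial Gauss--Bonnet, which the paper invokes through Theorem~\ref{th:four-vertex}---but there is a genuine gap in how you handle the global shape of the diagram.

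A reduced disk (or annular) diagram produced by Lemma~\ref{lem:diagram} need not be a topological disk (or annulus) in the naive sense: it may have cut vertices and pendant edges, giving it a cactus structure. Your identity $\sum_{\mathrm{int}}(4-\deg v)+\sum_{\partial}(2-k_v)=4\chi$ holds only when each boundary vertex is visited exactly once by the perforated walk; a boundary vertex visited $m$ times contributes an additional $-4(m-1)$, so the true right-hand side is $4\chi+4\cdot(\text{excess visits})$. Concretely, if $c$ and $d$ share a common prefix then $c\cdot d^{-1}$ begins with a spur and the diagram begins with a path of free edges; your equality is then off by a positive multiple of $4$ and the inequality you derive no longer yields a contradiction. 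The same issue makes your assertion ``$k_v\geq1$ everywhere'' false at the two junctions of $c\cdot d^{-1}$.

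The paper addresses exactly this: it first records the cactus structure, then uses Theorem~\ref{th:four-vertex} to show every $2$-cell carries at least two cut vertices, uses the no-spur hypothesis to bound the number of degree-one vertices by two, and concludes that the diagram is a \emph{chain} of paths and $2$-cells, each $2$-cell having exactly two cut vertices. Only after this does it apply the five-bracket clause of Theorem~\ref{th:four-vertex} to each $2$-cell individually to exclude interior vertices. Your Gauss--Bonnet count can be salvaged by applying it block-by-block once the chain structure is in hand, but establishing that structure is precisely the step you skip, and it is where most of the argument lives. Your third paragraph also presupposes it: the claim that the dual of the quads is ``a tree (disk)'' fails when the diagram has several $2$-cells joined by path segments, since the dual is then a forest; and the sketch for ruling out $k_v\geq4$ (``pairing the extra defect with turn-$1$ vertices on the opposite walk'') is not an argument as written.
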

\begin{proof}
We only detail the proof when $c,d$ are paths. See Figure~\ref{fig:lentille-plate}.
\begin{figure}
  \centering
  \includesvg[.6\textwidth]{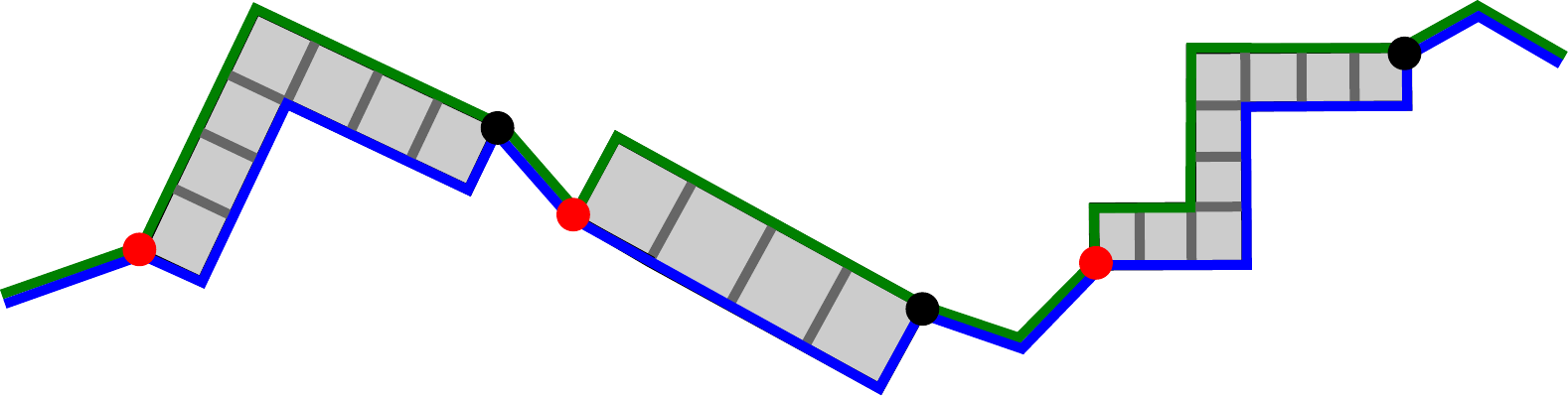}
  \caption{A disk diagram for two homotopic paths $c$ and $d$ composed of paths and staircases.}
  \label{fig:lentille-plate}
\end{figure}
The similar case of closed curves is covered in~\cite{ew-tcsr-13}. By Lemma~\ref{lem:diagram}, $c\cdot d\inv$ is the label of the facial walk of the perforated face of a disk diagram $\Delta$. This diagram has a cactus-like structure composed of 2-cells subdivided into quads and connected by trees. A vertex $v$ such that $\Delta\setminus v$ is not connected is called a \emph{cut vertex}. We also consider as cut vertices the endpoints of $c$ in $\Delta$. A 2-cell of $\Delta$ must have more than one cut vertex on its boundary. Otherwise, this boundary is entirely labelled by a subpath of either $c$ or $d$ and Theorem~\ref{th:four-vertex} implies the existence of four brackets, one of which (in fact two) must avoid the cut vertex, hence be contained in the interior of this subpath. This would contradict that $c$ and $d$ are geodesic. Moreover, because $c$ and $d$ have no spur, $\Delta$ cannot have more than two degree one vertices, namely the common endpoints of $c$ and $d$. It follows that $\Delta$ is an alternating sequence of paths and 2-cells. In particular, each 2-cell has exactly two cut vertices. No 2-cell in this sequence has an interior vertex. For otherwise, by the second part of Theorem~\ref{th:four-vertex}, the boundary of this 2-cell would contain five brackets one of which would be contained in the interior of either $c$ or $d$ and this would again contradict that $c$ and $d$ are geodesic.
It follows that the dual of a 2-cell, viewed as an assembling of quads, is a tree. We finally remark that this tree must be a path with a staircase shape. Indeed, any other shape would imply the existence of a bracket in either $c$ or $d$.
\end{proof}
Since the tips of a staircase divide its boundary into two sides of equal length, and since removing a bracket or a spur shortens a curve, we easily deduce the following.
\begin{cor}\label{cor:equal-length}
 With the hypothesis of Theorem~\ref{th:geodesic}, $c$ and $d$ have equal length which is minimal among homotopic curves.
\end{cor}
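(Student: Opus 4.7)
The plan is to derive both conclusions of the corollary from the staircase decomposition given by Theorem~\ref{th:geodesic}, based on one elementary observation about quad staircases: the two boundary paths joining the initial tip to the final tip of an open quad staircase with $k$ quads each consist of exactly $k+1$ arcs, and the two boundary cycles of a closed staircase with $k$ quads each consist of exactly $k$ arcs. I would prove this observation first, by a direct induction on $k$: attaching a new quad at the far tip of an existing staircase extends each of the two boundary paths by exactly one arc, and closing a staircase identifies one arc on each side.

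From this observation the equality $|c|=|d|$ is immediate. By Theorem~\ref{th:geodesic}, the two curves label the two boundaries of an annular (respectively disk) diagram which is either a single closed staircase or an alternating sequence of shared paths and open staircases joined through their tips. The shared paths contribute the same number of arcs to $|c|$ as to $|d|$, and each staircase contributes equally to the two sides by the observation, so the two totals coincide.

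For the minimality claim I would take an arbitrary closed curve $c'$ freely homotopic to $c$ and reduce it to a homotopic combinatorial geodesic by iteratively applying two local moves: (i)~deleting a spur, which strictly decreases length by $2$; and (ii)~flipping a bracket across the quad staircase it bounds, which by the observation preserves length while strictly decreasing a bracket-based complexity measure. The reduction terminates at a geodesic $c^{*}$ freely homotopic to $c$ with $|c^{*}|\leq |c'|$. Applying the already-established first half of the corollary to the homotopic geodesics $c$ and $c^{*}$ then yields $|c|=|c^{*}|\leq |c'|$, which is the desired minimality.

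The main obstacle is justifying the bracket-flip in step (ii). One needs to identify the quad staircase locally bounded by the bracket using the same combinatorial Gauss--Bonnet argument as in the proof of Theorem~\ref{th:geodesic} (applied to the disk diagram cut out by the bracket together with an opposite geodesic subpath), verify that replacing the bracket by the opposite boundary of the staircase produces a homotopic curve, and control termination via a lexicographic potential such as $(|c'|,\#\text{brackets})$. Once this local procedure is in place, the corollary follows cleanly from the first-half equality.
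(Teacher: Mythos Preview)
Your argument for $|c|=|d|$ is correct and is exactly the intended reading of the corollary: in the diagram of Theorem~\ref{th:geodesic} each shared path and each staircase contributes the same number of arcs to both boundaries.

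For minimality, your overall strategy (reduce an arbitrary homotopic $c'$ to a geodesic $c^{*}$ by local moves that do not increase length, then invoke the equal-length part) is also the standard one and is what the paper has in mind. But step~(ii) is misconceived. A bracket is \emph{not} a tip-to-tip boundary path of a staircase, so your observation does not apply to it. Concretely, a bracket with turn sequence $12^{k}1$ consists of $k+1$ arcs, and the two adjacent arcs that produce the end turns~$1$ lie along the two short sides of a single row of $k+1$ quads; the whole $(k+3)$-arc subpath runs from one upper corner of this row around three sides to the other upper corner, and is homotopic rel endpoints to the $(k+1)$-arc path along the top. Thus the correct bracket move \emph{decreases} length by~$2$, just like spur removal. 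In particular your attempted length-preserving flip does not exist, and your termination scheme based on a bracket count is both unnecessary and unjustified (a length-preserving local move could well create new brackets elsewhere). The fix is to your advantage: since every spur or bracket removal strictly shortens the curve, termination is immediate via length alone, and you get $|c^{*}|\le |c'|$ with no auxiliary potential. The final paragraph invoking Gauss--Bonnet to ``identify the staircase bounded by the bracket'' can be dropped entirely; a bracket bounds a single row of quads, visible directly from the turn sequence, with no diagram analysis needed.
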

It will often be useful to think of a disk diagram bounded by homotopic geodesic paths as a subset of the universal cover $\tilde{\Sigma}=\D$ of the system of quads. Indeed, by the remark after Lemma~\ref{lem:diagram}, the sequence of faces along a staircase is labelled by a sequence of faces arranged the same way in $\tilde{\Sigma}$. Beware, however, that two successive staircases in a diagram may correspond to opposite orientations in $\tilde{\Sigma}$. Some staircases should thus be flipped in order to get an exact image of the diagram in $\tilde{\Sigma}$. A consequence of this representation is that homotopic geodesic paths have a unique disk diagram. This follows directly from the unique lifting property of coverings.

\paragraph{Exceptional curves}
A geodesic curve of length $\ell$ is \define{exceptional} if it has a bracket of length $\ell-1$. An exceptional curve must be 1-sided, since otherwise the curve would have a bracket of length 1. Figure~\ref{fig:exceptional-curve} shows the lift of an exceptional curve; it is contained in an infinite straight strip of quadrilaterals that covers a M\"obius strip infinitely many times.
\begin{figure}[h]
  \centering
  \includesvg[.6\linewidth]{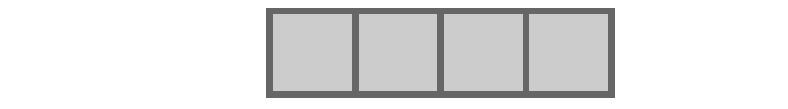}
  \caption{The lift of an exceptional curve of length 4 has brackets of length 3.}
  \label{fig:exceptional-curve}
\end{figure}
\begin{cor}\label{cor:no-monogon}
  A geodesic has no nontrivial index path whose image path is contractible.
\end{cor}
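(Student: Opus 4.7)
The plan is to suppose for contradiction that $c$ admits a nontrivial index path of length $\ell\ge 1$ whose image path $\pi$ is contractible, and then to derive a contradiction by induction on $\ell$, the main tool being Theorem~\ref{th:four-vertex} applied to $\pi$ viewed as a closed curve.

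A preliminary observation is that $\pi$ inherits the geodesic nature of $c$. The arc sequence of $\pi$ is a contiguous substring of the bi-infinite periodic sequence obtained by repeating $c$ (or $c^{-1}$), so the linear turn sequence of $\pi$ is a contiguous block of the cyclic turn sequence of $c$. Since $c$ is a geodesic, this block contains no $0$ and no subsequence matching $12^*1$ or $\bar 1\bar 2^*\bar 1$. Hence $\pi$ has no linear spur and no linear bracket, and in particular all its interior turns are nonzero.

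Because $\pi$ is contractible, its endpoints coincide, so $\pi$ qualifies as a nontrivial contractible closed curve. Theorem~\ref{th:four-vertex} then forces $\pi$ to contain either a cyclic spur or four cyclic brackets. The cyclic turn sequence of $\pi$ differs from its linear one by a single extra ``endpoint turn'' $\tau_e$ between the last and first arcs. Any cyclic bracket of $\pi$ must involve $\tau_e$: otherwise it would lie within the interior turn sequence of $\pi$, contradicting the previous paragraph. A short case analysis on $\tau_e$, using the rigidity of the patterns $12^*1$ and $\bar 1\bar 2^*\bar 1$, shows that no more than two distinct cyclic brackets can contain a given turn position: if $|\tau_e|=1$, at most one bracket starts at $\tau_e$ and at most one ends at it; if $|\tau_e|=2$, at most one bracket has $\tau_e$ as an interior $\pm 2$; otherwise no bracket contains $\tau_e$. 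This rules out the ``four brackets'' alternative.

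Only the cyclic spur case remains. Interior turns being nonzero, the spur must be the endpoint turn, so the first and last arcs of $\pi$ are mutually inverse, and $\pi = a\cdot\pi'\cdot a^{-1}$ where $\pi'=c\Ipath{i+1}{\varepsilon(\ell-2)}$ is the image path of a shorter nontrivial index path. Conjugation preserves homotopy, so $\pi'$ is contractible; for $\ell\ge 3$ the induction hypothesis applied to $\pi'$ yields a contradiction, whereas for $\ell=2$ the identity $\pi=(a,a^{-1})$ is a linear spur and contradicts the second paragraph. The base case $\ell=1$ is immediate: a single-arc closed curve has exactly one cyclic turn $\tau(a^{-1},a)\neq 0$ and no room for any cyclic bracket, so Theorem~\ref{th:four-vertex} forbids it from being contractible. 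The main obstacle is the bracket-counting step: one must argue precisely from the shape of the patterns $12^*1$ and $\bar 1\bar 2^*\bar 1$ that a given turn position cannot be visited by more than two distinct brackets; once that rigidity is established the rest is routine bookkeeping and induction.
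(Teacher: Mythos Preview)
Your argument is correct, but it takes a markedly different route from the paper's.  In the paper this corollary is placed immediately after Corollary~\ref{cor:equal-length} and is meant to be read as a one-line consequence of it: the image path of any index path of a geodesic is itself geodesic (its interior turn sequence is a contiguous block of the turn sequence of $c$, hence free of spurs and brackets); if that image path were nontrivial and contractible, it would be a geodesic of positive length in the trivial homotopy class, contradicting the length-minimality of geodesics asserted in Corollary~\ref{cor:equal-length} (take $c=d=\pi$ there and compare with the trivial path of length~$0$).

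You instead bypass Theorem~\ref{th:geodesic} and Corollary~\ref{cor:equal-length} entirely and return to the combinatorial Gauss--Bonnet statement, Theorem~\ref{th:four-vertex}, applied to $\pi$ viewed as a closed walk.  Your bracket-counting step is sound: since the $\ell-1$ interior turns of $\pi$ contain no pattern $12^*1$ or $\bar 1\bar 2^*\bar 1$, every cyclic bracket must pass through the single endpoint turn $\tau_e$, and the rigidity of these patterns indeed limits the number of such brackets to at most two (at most one starting and one ending at $\tau_e$ when $|\tau_e|=1$, at most one with $\tau_e$ interior when $|\tau_e|=2$, and none otherwise).  The remaining spur/induction cleanup is routine.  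What you gain is a self-contained proof that does not rely on the staircase structure theorem; what you pay is a noticeably longer argument where the paper gets away with a single sentence.
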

\begin{proof}
  Let $c$ be a geodesic. If $c$ is not exceptional, it follows from the definition of geodesics that the image path of any index path of $c$ has no spur or brackets, hence is also a (non-contractible) geodesic. If $c$ is exceptional, the possible brackets of its index paths can be flattened to get a homotopic non-trivial geodesic. 
\end{proof}
The next two remarks follow directly from the characterization of geodesics and canonical forms in terms of spurs, brackets and turns.
\begin{remark}\label{rem:sub-geodesic}
  The image path of any index path of a combinatorial geodesic that is not  exceptional  is geodesic. If the combinatorial geodesic is in canonical form, so is the image path.
\end{remark}

\begin{remark}\label{rem:geodesic-power}
 Likewise, any power $c^k$ of a combinatorial closed geodesic $c$ that is not  exceptional is also a combinatorial geodesic. Moreover, if $c$ is in canonical form, so is $c^k$.
\end{remark}

\section{Computing intersection numbers on oriented surfaces}\label{sec:counting-on-oriented}
Here, we assume that the considered system of quads is orientable and consistently oriented. In this case we can rely on canonical forms to obtain a relatively simple algorithm for the computation of the geometric intersection number. We shall see in Proposition~\ref{prop:counting-crossings} that the set $B/\tau$ of Lemma~\ref{lem:strategy}, composed of equivalence classes of lifts, can be identified with certain \emph{crossing double paths}. Those are pairs of index paths with coincident images, see Section~\ref{subsec:double-paths}. The asymptotic complexity of the resulting algorithm is the same as for the general case, where the system of quads may be non-orientable. The general case leads to a slightly more complicated algorithm and is deferred to Section~\ref{sec:counting}. 

The next technical Lemma will be used in Proposition~\ref{prop:counting-crossings} to analyze the intersection of canonical curves.
Let $c_R$ and $c\inv_L$ be canonical paths such that $c_R\sim c_L$. In other words, $c_L$ is the leftmost geodesic homotopic to $c_R$. By Theorem~\ref{th:geodesic}, there is a disk diagram $\Delta$ composed of quad staircases and paths and whose left and right boundaries $\Delta_L$ and $\Delta_R$ are labelled by $c_L$ and $c_R$ respectively. 
A \define{spoke} is a non-boundary edge of $\Delta$.
\begin{lem}\label{lem:double-path-struct}
Let $v,w$ be two vertices, one on each side of $\Delta$. Then $\Delta$ contains a path  $p$ from $v$ to $w$ labelled by a canonical path. 
Moreover, $p$ can be uniquely decomposed as either $\lambda.\rho$, $\rho.\lambda$,   $\lambda.e.\rho$  or $\rho.e.\lambda$, where
\begin{enumerate}
\item\label{it:lambda}  $\lambda$ is a subpath (possibly reduced to a vertex) of $\Delta_L$ or $\Delta\inv_L$,
\item $\rho$ is a subpath (possibly reduced to a vertex) of $\Delta_R$ or $\Delta\inv_R$, 
\item $e$ is a spoke,
\item\label{it:neg-L} if $\lambda$ is a subpath of $\Delta\inv_L$ of positive length then $p\cap \Delta\inv_L = \lambda$,
\item\label{it:pos-R} if $\rho$ is a subpath of $\Delta_R$ of positive length then $p\cap \Delta_R = \rho$.
\item\label{it:both-neg} if $\rho$ is a subpath of $\Delta\inv_R$ and $\lambda$ is a subpath of $\Delta_L$ then either $\rho$ is reduced to a vertex and $p\cap \Delta_R = \rho$ or $\lambda$ is reduced to a vertex and $p\cap \Delta\inv_L = \lambda$.
\end{enumerate}
\end{lem}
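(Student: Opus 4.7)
The plan is to first establish existence of a canonical path $p$ from $v$ to $w$ in $\Delta$, then exploit the staircase decomposition of $\Delta$ (Theorem~\ref{th:geodesic}) to force $p$ into one of the four stated shapes, and finally to deduce the uniqueness of the decomposition from the canonicity constraints.

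For existence, $\Delta$ is simply connected, so all paths from $v$ to $w$ are homotopic as paths; moreover every interior vertex of $\Delta$ has degree at least $8$ (as observed after Lemma~\ref{lem:diagram}), so Theorem~\ref{th:canonical-path} applies inside $\Delta$ and produces a canonical path $p$ from $v$ to $w$, uniquely determined in its homotopy class.

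Next I would analyse the shape of $p$. Let $v_0,\dots,v_k$ denote the tips of $\Delta$ (the cut vertices together with the endpoints of $c_L$ and $c_R$), and suppose that $v$ lies in the $a$-th component on $\Delta_L$ and $w$ in the $b$-th component on $\Delta_R$. Inside a trivial path component $\Delta_L$ and $\Delta_R$ coincide, so no choice arises; inside a quad staircase, the no-$(-1)$-turn and no-bracket conditions, together with Remark~\ref{rem:sub-geodesic} applied to subpaths of $p$ and the combinatorics of the quads, rule out any trajectory that oscillates between the two boundaries, leaving only those that follow one boundary and, at most once, cross to the other through a single spoke. Propagating this observation component-by-component from $v$ towards $w$, I would conclude that $p$ first follows $\Delta_L$ or $\Delta_L^{-1}$, then switches sides at most once --- either at a shared tip (no spoke) or through a single spoke $e$ --- and then follows $\Delta_R$ or $\Delta_R^{-1}$ to $w$. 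These are exactly the four announced shapes $\lambda.\rho$, $\rho.\lambda$, $\lambda.e.\rho$, and $\rho.e.\lambda$, and the choices of direction for $\lambda$ and $\rho$ are forced by the relative position of $v$ and $w$.

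The uniqueness clauses~\ref{it:neg-L}--\ref{it:both-neg} would then follow from the same canonicity analysis: if, say, $\lambda$ is a positive-length subpath of $\Delta_L^{-1}$, then any subsequent re-entry of $p$ into $\Delta_L^{-1}$ would produce a $-1$-turn or a bracket at the re-entry vertex, contradicting canonicity, so $p\cap\Delta_L^{-1}=\lambda$; symmetric arguments dispatch~\ref{it:pos-R} and~\ref{it:both-neg}. The hard part will be the turn-by-turn verification at the transition points --- shared tips and spoke endpoints --- to confirm that no bracket or forbidden turn is created. This amounts to a finite case analysis across one quad corner at a time, and it is here that the degree-$\geq 8$ hypothesis on interior vertices is crucial, since it guarantees the existence of a legal turn on at least one side of each transition.
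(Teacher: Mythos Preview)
Your existence step has a genuine gap. Theorem~\ref{th:canonical-path} concerns canonical paths in the system of quads $\Sigma$, not in a disk diagram over $\Sigma$. What you obtain from that theorem is a canonical path in $\Sigma$ (equivalently, in its universal cover) homotopic to the label of some path from $v$ to $w$; you give no reason why that canonical path should be the label of a path lying \emph{inside} $\Delta$. Your appeal to interior vertices of degree $\geq 8$ is vacuous here: by Theorem~\ref{th:geodesic} the diagram $\Delta$ for two homotopic geodesics has \emph{no} interior vertices at all --- it is a chain of paths and staircases. What you would actually need is that bracket removal can be carried out while staying inside $\Delta$, or equivalently that the thin strip $\Delta$ is geodesically convex in the universal cover; neither is immediate, and it is exactly what the paper's proof establishes by other means.

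The paper proceeds constructively rather than analytically. Fixing $v=\Delta_L(i)$ and $w=\Delta_R(j)$, it splits on the sign of $j-i$ and on how many spokes ($0$, $1$, or $2$) are incident to $v$, and in each case writes down an explicit candidate path $q$ in $\Delta$ built from a short piece of $\Delta_L$, possibly a spoke, and a piece of $\Delta_R$. It then checks directly that $q$ has no spurs and no $-1$ turns, and at most one initial bracket; shortening that bracket (which again stays in $\Delta$) yields the canonical $p$, already visibly of one of the four shapes $\lambda.\rho$, $\rho.\lambda$, $\lambda.e.\rho$, $\rho.e.\lambda$. Properties~\ref{it:neg-L}--\ref{it:both-neg} are then read off from the explicit form of $p$. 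Your component-by-component sketch could in principle be completed, but to make it rigorous you would have to perform essentially this same local case analysis at the point where $p$ leaves $\Delta_L$, at which stage the two arguments converge.
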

\begin{proof}
  We assume that $v$ is on the left side and $w$ on the right side of $\Delta$, the other case being symmetric. Let $i,j$ be such that $v=\Delta_L(i)$ and $w=\Delta_R(j)$. We first consider the case where $j\geq i$. If $\Delta_L$ and $\Delta_R$ coincide at $v$, then we trivially obtain the desired decomposition as $p=\Delta_L\Ipath{i}{0}.\Delta_R\Ipath{i}{j-i}$. Otherwise, $v$ may be incident to $0,1$ or $2$ spokes.
  \begin{itemize}
  \item If $v$ is not incident to a spoke then $\Delta_L(i-1)$ is either the initial tip of a staircase or is incident to a spoke $e$. We set $q = \Delta_L\Ipath{i}{-1}.\Delta_R\Ipath{i-1}{j-i+1}$ in the first case and $q =\Delta_L\Ipath{i}{-1}.e.\Delta_R\Ipath{i-1}{j-i+1}$ otherwise. The path $q$ has no spurs or $-1$ turns but may start with a bracket. If not, by the characterization of canonical geodesic paths in Section~\ref{subsec:geodesics}, $q$ is already canonical and we can set $p=q$. Otherwise, we short cut the bracket in $q$ to obtain a canonical path $p$ satisfying the above points~\ref{it:lambda} to~\ref{it:pos-R}. See Figure~\ref{fig:structure-geodesic}. 
    \begin{figure}
      \centering
      \includesvg[\textwidth]{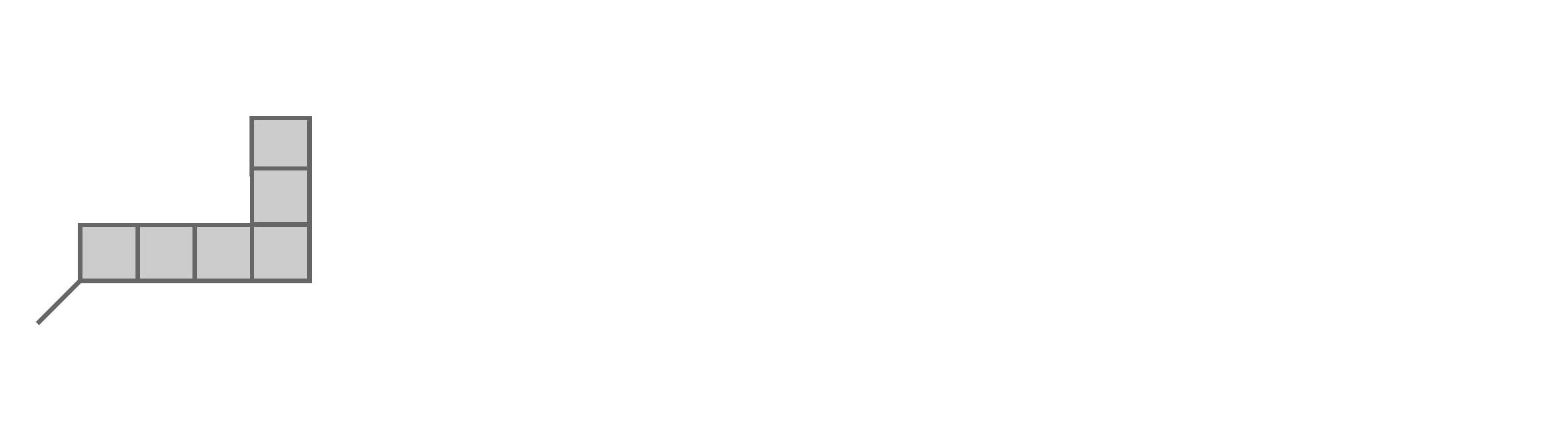}
      \caption{The canonical path $p$ from $v$ to $w$ when $v$ is not incident to a spoke.}
      \label{fig:structure-geodesic}
    \end{figure}
\item If $v$ is incident to exactly one spoke $e$,
then $e$ connects $v$ to either  $\Delta_R(i-1)$ or $\Delta_R(i+1)$. In the former case we set $q=e.\Delta_R\Ipath{i-1}{j-i+1}$. As above, $q$ may be canonical or starts with a bracket and we easily obtain the path $p$ with the desired properties. See Figure~\ref{fig:structure-geodesic-I}. 
    \begin{figure}
      \centering
      \includesvg[\textwidth]{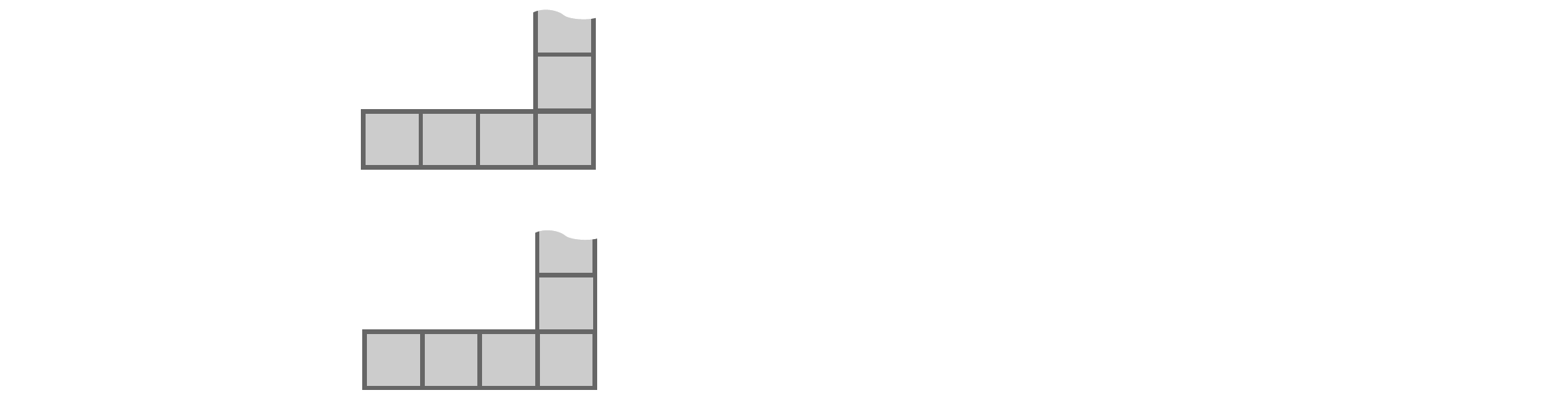}
      \caption{The canonical path $p$ from $v$ to $w$ when $v$ is incident to exactly one spoke.}
      \label{fig:structure-geodesic-I}
    \end{figure}
If $e$ is incident to $\Delta_R(i+1)$ and
$j>i$ the path $p=\Delta_L\Ipath{i}{0}.e.\Delta_R\Ipath{i+1}{j-i-1}$ has the required properties. 
When $e$ is incident to $\Delta_R(i+1)$ and $j=i$, then either 
 $\Delta_L(i-1)$ is incident to a spoke $e'$ and we must have $p=\Delta_L\Ipath{i}{-1}.e'.\Delta_R\Ipath{i}{0}$, or $\Delta_L(i-1)$  is the initial tip of a staircase and we must have $p=\Delta_L\Ipath{i}{-1}.\Delta_R\Ipath{i-1}{1}$. 
\item If $v$ is incident to two spokes, then one of them, $e_-$, connects $v$ to $\Delta_R(i-1)$ and the other $e_+$ connects $v$ to $\Delta_R(i+1)$. We can directly set $p=e_-.\Delta_R\Ipath{i-1}{1}$ if $j=i$ and $p=e_+.\Delta_R\Ipath{i+1}{j-i-1}$ if $j>i$.
  \end{itemize}
For Point~\ref{it:both-neg} in the Lemma, we note that we cannot have $p = \lambda.\rho$ or $p=\lambda.e.\rho$  with both $\lambda$ and $\rho$ being  subpaths of positive length of $\Delta_L$  and $\Delta\inv_R$ respectively. Indeed, $p$ would have a spur or a $\bar{1}$ turn in the first case and a bracket in the other case. Moreover, if $\rho$ is reduced to a vertex  and $\lambda$ is a subpath of $\Delta_L$,  we may assume that  the intersection $p\cap\Delta_R$ is reduced to $\rho$. Otherwise, the canonical path between any other intersection point and $\rho$ would have to follow $\Delta_R$ and we could express $p$ so that $\rho$ is a subpath of positive length of $\Delta_R$. An analogous argument holds to show that we can assume $p\cap\Delta\inv_L=\lambda$ if $\lambda$ is reduced to a vertex.

We next consider the case $i>j$. When $\Delta_L$ and $\Delta_R$ coincide at $w$, we obtain the desired decomposition as $p=\Delta_L\Ipath{i}{j-i}.\Delta_R\Ipath{i}{0}$. Otherwise, $w$ may be incident to $0,1$ or $2$ spokes. Similar arguments as in the case $j\geq i$ allow to conclude the proof. 
\end{proof}

\subsection{Crossing double-paths}\label{subsec:double-paths}
Let $c,d$ be two combinatorial closed curves on a combinatorial surface. 
A \define{double-path} of $(c,d)$ of length $\ell$ is a pair of forward index paths $(\Ipath{\Imath}{\ell}_c,\Ipath{\Jmath}{\ell}_d)$ with the same
image path $c\Ipath{\Imath}{\ell}=d\Ipath{\Jmath}{\ell}$. If $\ell=0$ then  the double path is just a double point. A double path of $c$ 
is defined similarly, taking $c=d$ and assuming $\Imath\neq \Jmath$. 
The next Lemma follows from Remark~\ref{rem:sub-geodesic}.
\begin{lem}\label{lem:double-path}
  Let $\Ipath{\Imath}{\ell}_c$ and $\Ipath{\Jmath}{k}_d$ be forward index paths of two canonical curves $c$ and $d$ such that the image paths $c\Ipath{\Imath}{\ell}$ and $\Ipath{\Jmath}{k}$ are homotopic. Then $k=\ell$ and $(\Ipath{\Imath}{\ell}_c,\Ipath{\Jmath}{\ell}_d)$ is a double path.
\end{lem}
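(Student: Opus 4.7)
The plan is to reduce the statement to the uniqueness of the canonical form for paths. By Remark~\ref{rem:sub-geodesic}, since $c$ and $d$ are in canonical form, the image path of any forward index path in either curve is itself in canonical form. In particular, $c\Ipath{\Imath}{\ell}$ and $d\Ipath{\Jmath}{k}$ are both canonical paths, so the problem is to compare two canonical paths that we already know to be homotopic.

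By our convention, the homotopy of paths is understood with fixed endpoints, so the two paths share a common source vertex $c(\Ibar)=d(\Jbar)$ and a common target vertex $c(\Overline{\Imath+\ell})=d(\Overline{\Jmath+k})$. Theorem~\ref{th:canonical-path} asserts that in each homotopy class (of paths with fixed endpoints) the canonical form is the unique representative with no spurs, no brackets, and no $-1$ turns. Applied to $c\Ipath{\Imath}{\ell}$ and $d\Ipath{\Jmath}{k}$, it forces these two walks to coincide arc by arc. In particular their arc-lengths agree, giving $k=\ell$, and the equality $c\Ipath{\Imath}{\ell}=d\Ipath{\Jmath}{\ell}$ is precisely the defining condition for $(\Ipath{\Imath}{\ell}_c,\Ipath{\Jmath}{\ell}_d)$ to be a double-path of $(c,d)$.

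There is no real obstacle: once Remark~\ref{rem:sub-geodesic} is invoked to place both image paths inside the scope of the uniqueness theorem, the conclusion follows directly. The only point that deserves a brief check is the endpoint compatibility, which is automatic from the fixed-endpoint convention and is in any case implicit in the statement of homotopy for paths.
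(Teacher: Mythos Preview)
Your proof is correct and is essentially the same as the paper's. The paper dispatches the lemma in one line, saying it ``follows from Remark~\ref{rem:sub-geodesic}''; you simply make explicit the second ingredient that the paper leaves implicit, namely the uniqueness of canonical paths in Theorem~\ref{th:canonical-path}, which is indeed what turns ``both image paths are canonical'' into ``both image paths are equal.''
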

A double path $(\Ipath{\Imath}{\ell}_c,\Ipath{\Jmath}{\ell}_d)$ gives rise to a sequence of $\ell+1$ double points $\Dpoint{\Imath+k}{\Jmath+k}$ for $k\in [0,\ell]$. A priori a double point could occur several times in this sequence. The next two lemmas claim that this is not possible when the curves are primitive.
\begin{lem}\label{lem:short-double-path-c}
  A double path of a primitive combinatorial curve $c$ cannot contain a double point more than once in its sequence. In particular, a double path of $c$ must be strictly shorter than $c$.
\end{lem}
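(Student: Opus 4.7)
The plan is to argue by contradiction. Suppose the sequence of double points $(\overline{i+k},\overline{j+k})$, $k\in[0,\ell]$, associated with a double path $(\Ipath{i}{\ell}_c,\Ipath{j}{\ell}_c)$ contains some double point at two distinct indices $k<k'$. Then by definition $\overline{i+k}=\overline{i+k'}$ and $\overline{j+k}=\overline{j+k'}$, so $k'-k$ is a positive multiple of $|c|$; in particular $\ell\ge k'\ge|c|$.

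Next I would exploit the equality of image paths. Writing $\Delta=(j-i)\bmod|c|$ (which is nonzero because $i\neq j$), the identity $c\Ipath{i}{\ell}=c\Ipath{j}{\ell}$ unpacks, arc by arc, to
\[
c[\overline{i+s},\overline{i+s+1}]=c[\overline{i+s+\Delta},\overline{i+s+\Delta+1}]\qquad\text{for all }s\in[0,\ell-1].
\]
Since $\ell\ge|c|$, the indices $i+s$ cover all residues modulo $|c|$, so $c$ is globally invariant under the cyclic shift by $\Delta$. Letting $d=\gcd(\Delta,|c|)$, a standard argument then gives $c$ invariant under shift by $d$, and since $0<\Delta<|c|$ we have $d<|c|$. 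Consequently $c$ factors as $w^q$ with $w=c\Ipath{0}{d}$, $q=|c|/d\ge 2$; because $c(0)=c(d)$, the walk $w$ is itself closed, so $c$ is a proper power in its free homotopy class. This contradicts the hypothesis that $c$ is primitive and finishes the first assertion.

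For the second assertion, assume $\ell\ge|c|$. Then both $k=0$ and $k=|c|$ lie in $[0,\ell]$, and by construction the double points $(\overline{i},\overline{j})$ and $(\overline{i+|c|},\overline{j+|c|})$ are literally the same pair. This is a repetition of a double point in the sequence, which the first part forbids. Hence $\ell<|c|$.

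The only subtle point, and the one I would take care to spell out, is the equivalence between being a proper power as a word and being a proper power as a free homotopy class: the factorization $c=w^q$ as a closed walk directly realizes $[c]=[w]^q$ in the fundamental groupoid, so primitivity of the homotopy class really does rule out the periodic word. Everything else is index bookkeeping modulo $|c|$ combined with the basic fact from Remark~\ref{rem:sub-geodesic} that subpaths of a canonical curve behave well under the construction of double paths.
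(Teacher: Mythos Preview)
Your argument handles only one of the two ways a double point can repeat. For a single curve $c$, the paper identifies the self-double-points $(\bar{i},\bar{j})$ and $(\bar{j},\bar{i})$ (this is the same convention behind the $\tfrac12$ in the definition of self-intersection number, and the paper's own proof of this lemma makes the identification explicit). So ``the sequence contains a double point twice'' can mean either that the ordered pair $(\overline{i+k},\overline{j+k})$ literally recurs, \emph{or} that at some later index $k'$ one sees the swapped pair, i.e.\ $\overline{i+k'}=\overline{j+k}$ and $\overline{j+k'}=\overline{i+k}$. Your line ``Then by definition $\overline{i+k}=\overline{i+k'}$ and $\overline{j+k}=\overline{j+k'}$'' silently excludes the swapped case, and nothing afterward recovers it.

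The swapped case is genuinely different and does not reduce to yours: it forces $2(j-i)\equiv 0\pmod{|c|}$, hence $j-i\equiv |c|/2$, and it can already occur when the double path has length $|c|/2$, well before your bound $\ell\ge |c|$ kicks in. The paper disposes of it by observing that the double-path equality then makes $c$ a square, contradicting primitivity. You need an analogous short argument here; once you add it, the rest of your proof (the periodicity argument via $\gcd(\Delta,|c|)$ for the ordered case, and the deduction of $\ell<|c|$) is fine. The closing reference to Remark~\ref{rem:sub-geodesic} is unnecessary, incidentally: nothing in this lemma uses geodesics or canonical form.
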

\begin{proof}
Suppose that a double path $\dP$ of $c$ contains two occurrences of a double point $(\Ibar,\Jbar)$. Because the couples $(\Ibar,\Jbar)$ and $(\Jbar,\Ibar)$ represent the same double point there are two cases to consider. 
\begin{itemize}
\item If $\dP$ contains the couple $(\Ibar,\Jbar)$ twice then it must contain a subsequence of length $|c|$ starting with $(\Ibar,\Jbar)$. We thus have $c\Ipath{\Imath}{|c|}=c\Ipath{\Jmath}{|c|}$. This implies that $c$ is equal to some nontrivial circular permutation of itself. It is a simple exercise to check that $c$ must then be a proper power of some other curve, contradicting that $c$ is primitive. 
\item Otherwise $\dP$ contains $(\Ibar,\Jbar)$ and $(\Jbar,\Ibar)$. Let $\ell$ be the distance between these two occurrences in $\dP$. We thus have $c\Ipath{\Imath}{\ell} = c\Ipath{\Jmath}{|c|-\ell}$ from which we deduce that  $c$ is a square (and $\ell=|c|/2$), contradicting that $c$ is primitive. 
\end{itemize}
\end{proof}

\begin{lem}\label{lem:short-double-path-c-d}
  Let $c$ and $d$ be two non-homotopic primitive combinatorial curves. A double path of $(c,d)$ cannot contain a double point more than once in its sequence. Moreover, the length of a double path of $(c,d)$ must be less than $|c|+|d|-1$.
\end{lem}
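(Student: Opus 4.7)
The proof parallels that of Lemma~\ref{lem:short-double-path-c}, but the case analysis is cleaner because a double point of $(c,d)$ is an ordered pair in $\Z/|c|\Z\times\Z/|d|\Z$ with no $(\Ibar,\Jbar)\leftrightarrow(\Jbar,\Ibar)$ identification. For both claims the plan is to extract from a long double path an equality of cyclically shifted powers of $c$ and of $d$, and then to invoke standard combinatorics-on-words facts to deduce that $c$ and $d$ are cyclically equal, contradicting the non-homotopy assumption.

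For the first claim, I would suppose that some double point appears at two positions $k_1<k_2$ of the double path. The definition then forces $k_2-k_1$ to be a common multiple of $|c|$ and of $|d|$, so we may write $k_2-k_1=P|c|=Q|d|$ with $P,Q\geq 1$. The sub-image-path between the two occurrences equals both the $P$-th power of the cyclic rotation of $c$ starting at index $\Imath+k_1$ and the $Q$-th power of the cyclic rotation of $d$ starting at index $\Jmath+k_1$. Cyclic rotation preserves primitivity, so both these rotations are primitive words; the classical fact that primitive words with a common power are equal (with equal exponent)---established by comparing minimal periods of the common power---then shows that $c$ and $d$ describe the same closed walk up to basepoint, hence are freely homotopic, a contradiction.

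For the second claim, I would assume by contradiction that $\ell\geq|c|+|d|-1$ and apply the Fine--Wilf periodicity theorem to the image path $w=c\Ipath{\Imath}{\ell}=d\Ipath{\Jmath}{\ell}$. As a length-$\ell$ sequence of arcs, $w$ admits both $|c|$ and $|d|$ as periods, and since $\ell\geq|c|+|d|-1\geq|c|+|d|-\gcd(|c|,|d|)$, Fine--Wilf implies that $w$ also admits $g:=\gcd(|c|,|d|)$ as a period. The length-$|c|$ prefix of $w$ coincides with the cyclic rotation of $c$ starting at index $\Imath$; being a length-$|c|$ word with period $g$ dividing $|c|$, it is the $(|c|/g)$-th power of its own length-$g$ prefix, and primitivity forces $g=|c|$. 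By symmetry $g=|d|$, so $|c|=|d|=g$ and the relevant cyclic rotations of $c$ and $d$ coincide, once again contradicting non-homotopy.

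The only real obstacle is stating cleanly the two combinatorics-on-words tools used above---Fine--Wilf and the fact that primitive words sharing a common power coincide---and recording the trivial observation that cyclic rotation preserves primitivity of a closed walk. Neither is deep, but both must be explicitly invoked to avoid gaps in the argument.
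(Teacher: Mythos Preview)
Your proposal is correct and follows essentially the same route as the paper. The paper likewise reduces the first claim to an equality $c'^{\,p}=d'^{\,q}$ between powers of cyclic shifts and invokes the classical common-power lemma (Lothaire, Prop.~1.3.1), and it obtains the length bound from Fine--Wilf (Lothaire, Prop.~1.3.5); your version merely unpacks the Fine--Wilf step a bit more explicitly by extracting the $\gcd$ period and using primitivity, which is the standard way to read off the same conclusion.
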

\begin{proof}
  Suppose that a double path of $(c,d)$ contains two occurrences of a double point. After shortening the double path if necessary, we may assume that these two occurrences are the first and the last double points of the double path. Its length must accordingly be a nonzero integer multiple $p$ of $|c|$ as well as a nonzero integer multiple $q$ of $|d|$. It follows that for some circular permutations $c'$ of $c$ and $d'$ of $d$ we have $c'^p = d'^q$. By a classical result of combinatorics on words~\cite[Prop. 1.3.1]{l-cw-97} this implies that $c'$ and $d'$ are powers of a same curve, in contradiction with the hypotheses in the lemma. In fact, by a refinement due to Fine and Wilf~\cite[Prop. 1.3.5]{l-cw-97} it suffices that $c'^p$ and $d'^q$ have a common prefix of length $|c|+|d|-1$ to conclude that $c'$ and $d'$ are powers of a same curve. This proves the second part of the lemma.
\end{proof}

A double path whose index paths cannot be extended is said \define{maximal}. 
As an immediate consequence of Lemmas~\ref{lem:short-double-path-c} and ~\ref{lem:short-double-path-c-d} we have:
\begin{cor}\label{cor:partition-double-paths}
  The maximal double paths of a primitive curve or of two primitive curves in canonical form induce a partition of the double points of the curves. 
\end{cor}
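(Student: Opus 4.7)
My plan is to show that every double point belongs to exactly one maximal double path; this directly yields the claimed partition. I would split the argument into existence and uniqueness.

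\emph{Existence} is immediate. Any double point $\Dpoint{\imath}{\jmath}$ is a length-$0$ double path, namely $(\Ipath{\imath}{0}_c,\Ipath{\jmath}{0}_d)$ in the two-curve case and $(\Ipath{\imath}{0}_c,\Ipath{\jmath}{0}_c)$ with $\imath\neq\jmath$ in the single-curve case. I would extend greedily forward and backward, one step at a time, as long as the next pair of arc occurrences of $c$ and $d$ agree. By Lemmas~\ref{lem:short-double-path-c} and~\ref{lem:short-double-path-c-d} the total length is bounded by $|c|-1$ (respectively $|c|+|d|-2$), so the procedure halts at a maximal double path through $\Dpoint{\imath}{\jmath}$.

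\emph{Uniqueness} is where the primitivity assumption pays off. Suppose $P_1$ and $P_2$ are two maximal double paths both containing $\Dpoint{\imath}{\jmath}$, and write $P_r=(\Ipath{\imath_r}{\ell_r}_c,\Ipath{\jmath_r}{\ell_r}_d)$. By the same two lemmas, $\Dpoint{\imath}{\jmath}$ appears at a \emph{unique} offset $k_r$ inside $P_r$, so $\imath_r=\imath-k_r$ and $\jmath_r=\jmath-k_r$ in $\Z/|c|\Z$ and $\Z/|d|\Z$. The image-path equality internal to $P_r$ restricts to $c\Ipath{\imath}{\ell_r-k_r}=d\Ipath{\jmath}{\ell_r-k_r}$ in the forward direction, and analogously backward. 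If, say, $\ell_1-k_1<\ell_2-k_2$, the longer equality from $P_2$ would witness that $c[\imath+\ell_1-k_1,\imath+\ell_1-k_1+1]=d[\jmath+\ell_1-k_1,\jmath+\ell_1-k_1+1]$, allowing $P_1$ to be extended one step forward and contradicting its maximality. Hence the forward extensions have equal length; a symmetric comparison in the backward direction gives $k_1=k_2$. Therefore $\imath_1=\imath_2$, $\jmath_1=\jmath_2$ and $\ell_1=\ell_2$, so $P_1=P_2$.

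The main obstacle, such as it is, lies in the uniqueness half and is conceptual rather than computational: I must be sure that a maximal double path is canonically recoverable from any one of its double points by maximal forward and backward extension. That canonicity is exactly what Lemmas~\ref{lem:short-double-path-c} and~\ref{lem:short-double-path-c-d} guarantee by ruling out that a double point recurs inside a single double path, which in turn relies on primitivity. Once this point is accepted, the rest is routine bookkeeping with indices modulo $|c|$ and $|d|$.
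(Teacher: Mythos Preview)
Your proposal is correct and follows exactly the route the paper has in mind: the paper states the corollary as ``an immediate consequence of Lemmas~\ref{lem:short-double-path-c} and~\ref{lem:short-double-path-c-d}'' without further argument, and your existence/uniqueness split merely spells out that immediacy. The only point you might make explicit in the single-curve case is that a double point $(\imath,\jmath)$ could occur in a double path as either couple $(\imath,\jmath)$ or $(\jmath,\imath)$, but swapping the two index paths of a maximal double path handles this and your ``routine bookkeeping'' covers it.
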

Let $(\Ibar,\Jbar)$ and $\Dpoint{\Imath+\ell}{\Jmath+\ell}$ be the first and the last double points of a maximal double path of $(c,d)$, possibly with $c=d$. When $\ell\geq 1$ the arcs $c[\Ibar,\Ibar-1]$, $d[\Jbar,\Jbar-1]$, $c[\Ibar,\Ibar+1]$ must be pairwise distinct because canonical curves have no spurs, and similarly for the three arcs $c[\Overline{\Imath+\ell},\Overline{\Imath+\ell+1}]$, $d[\Overline{\Jmath+\ell},\Overline{\Jmath+\ell+1}]$, $c[\Overline{\Imath+\ell},\Overline{\Imath+\ell-1}]$. We declare the maximal double path to be a \define{crossing double path} if the circular ordering of the first three arcs at $c(\Ibar)$ and the circular ordering of the last three arcs at $c(\Overline{\Imath+\ell})$ are either both clockwise or both counterclockwise with respect to the rotation system of the system of quads. When $\ell=0$, that is when the maximal double path is reduced to the double point $(i,j)$, we require that the arcs  $c[i,i-1],d[j,j-1], c[i,i+1], d[j,j+1]$ are \emph{pairwise distinct} and appear in this circular order, or its opposite, around the vertex $c(i)=d(j)$.

\subsection{Proof of Theorem~\ref{th:main-result} in the primitive and orientable case}\label{subsec:proof-orientable}
Let $c,d$ be primitive combinatorial curves such that $d$ is canonical and let $c_R$ and $c_L\inv$ be the canonical curves homotopic to $c$ and $c\inv$ respectively. We denote by $\Delta$ the annular diagram corresponding to $c_R$ and $c_L$. When the two boundaries $\Delta_R$ and $\Delta_L$ of $\Delta$ have a common vertex we implicitly assume that $c_R$ and $c_L$ are indexed so that this vertex corresponds to the same index along $\Delta_R$ and $\Delta_L$.
We consider the following set of double paths:
\begin{itemize}
\item $\mathcal{D}_+$ is the set of crossing double paths of positive length of $c_R$ and $d$,
\item $\mathcal{D}_0$ is the set of crossing double paths $(i,j)$ of zero length of $c_R$ and $d$ such that either 
  \begin{itemize}
\item the two boundaries of $\Delta$ coincide at $\Delta_L(i) = \Delta_R(i)$ and $d[j-1,j] = c_L[i-1,i]$ or $d[j,j+1] = c_L[i,i+1]$, or
\item one of $d[j,j-1]$ or $d[j,j+1]$ is the label of a spoke $(\Delta_R(i),\Delta_L(i'))$ of $\Delta$ and $d[j-2,j-1] = c_L[i'-1,i']$ in the first case or $d[j+1,j+2] = c_L[i',i'+1]$ in the other case.
  \end{itemize}
\item $\mathcal{D}_-$ is the set of crossing double paths  $(\Ipath{\Imath}{\ell}_{c_L\inv},\Ipath{\Jmath}{\ell}_d)$ ($\ell \geq 0$) of $c_L\inv$ and $d$ such that \emph{none} of the following situations occurs:
  \begin{itemize}
\item the two boundaries of $\Delta$ coincide at $\Delta_L\inv(i) = \Delta_R(i')$ and $d[j-1,j] = c_R[i'-1,i']$,
\item  the two boundaries of $\Delta$ coincide at $\Delta_L\inv(i+\ell) = \Delta_R(i')$  and $d\Focc{j+\ell} = c_R\Focc{i'}$,
\item $d[j-1,j]$ is the label of a spoke $(\Delta_L\inv(i), \Delta_R(i'))$ of $\Delta$ and $d[j-2,j-1] = c_R[i'-1,i']$,
\item $d[j+\ell,j+\ell+1]$ is the label of a spoke $(\Delta_L\inv(i+\ell), \Delta_R(i'))$ of $\Delta$ and $d[j+\ell+1,j+\ell+2] = c_R[i',i'+1]$.
  \end{itemize}
\end{itemize}
In case $c\sim d$ or $c\sim d\inv$, which we can detect by comparing their canonical forms, we enforce $c_R=d$, inverting $d$ if necessary. This way, recalling that the index paths of a double path of $c$ must be distinct by definition, the maximal double paths have finite length by Lemma~\ref{lem:short-double-path-c}.  Figure~\ref{fig:Dplus-D0-Dminus} depicts some configurations.
\begin{figure}
  \centering
  \includegraphics[width=\textwidth]{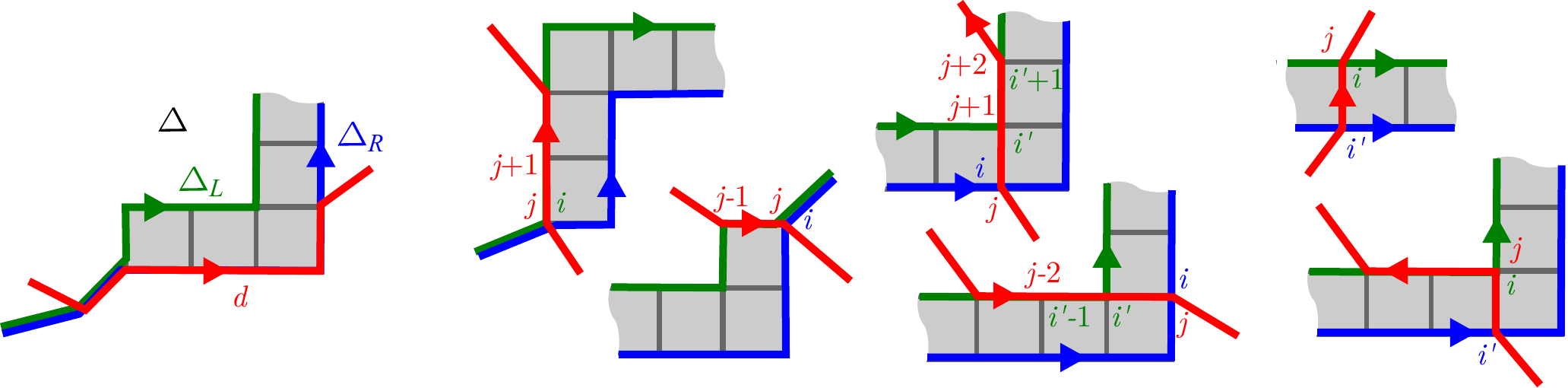}
  \caption{Left, a typical crossing double path in $\mathcal{D}_+$. Middle,  four configurations in $\mathcal{D}_0$. Right, two configurations in $\mathcal{D}_-$.}
  \label{fig:Dplus-D0-Dminus}
\end{figure}
Referring to Section~\ref{sec:strategy}, we view the underlying surface of the system of quads, call it $\Sigma$, as a quotient $\D/\Gamma$ of the Poincar\'e disk. The system of quads lifts to a quadrangulation of $\D$ and the lifts of a combinatorial curve in $\Sigma$ are combinatorial bi-infinite paths in this quadrangulation. By Remark~\ref{rem:geodesic-power}, if the combinatorial curve is geodesic (resp. canonical) so are its lifts. In this case, each lift is simple by Corollary~\ref{cor:no-monogon}. We fix a lift $\tilde{c_R}$ of $c_R$ and consider the set $B/\tau$ of Lemma~\ref{lem:strategy} corresponding to the classes of lifts of $d$ whose limit points alternate with the limit points of $\tilde{c_R}$ along $\partial\D$. 
\begin{prop}\label{prop:counting-crossings}
  $B/\tau$ is in 1-1 correspondence with the disjoint union $\mathcal{D}_+\cup \mathcal{D}_0\cup  \mathcal{D}_-$.
\end{prop}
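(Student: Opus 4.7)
The plan is to lift the entire configuration to the universal cover $\D$ and set up the correspondence by analyzing how canonical geodesic lifts interact with the lift of the annular diagram $\Delta$. First, I would lift $\Delta$ to an infinite strip $\tilde{\Delta}$ in $\D$, bounded by the fixed lift $\tilde{c}_R$ on one side and a parallel lift $\tilde{c}_L^{-1}$ on the other. By Remark~\ref{rem:geodesic-power}, these two boundary curves, as well as every lift $\tilde{d}'$ of $d$, are combinatorial geodesics, hence simple by Corollary~\ref{cor:no-monogon}. The motion $\tau$ acts as a translation by one period along both boundaries of $\tilde{\Delta}$. Consequently a lift $\tilde{d}'$ lies in $B$ exactly when $\tilde{d}'$ traverses $\tilde{\Delta}$ from the $\tilde{c}_L^{-1}$ boundary to the $\tilde{c}_R$ boundary, since for simple bi-infinite paths in $\D$ the alternation of limit points with those of $\tilde{c}_R$ is equivalent to such a transverse crossing of the strip.

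Next I would define a map $\Phi \colon B \to \mathcal{D}_+ \cup \mathcal{D}_0 \cup \mathcal{D}_-$. For each $\tilde{d}' \in B$, the intersections $\tilde{d}' \cap \tilde{c}_R$ and $\tilde{d}' \cap \tilde{c}_L^{-1}$ decompose into finitely many maximal double paths, whose lengths are bounded by Lemmas~\ref{lem:short-double-path-c} and~\ref{lem:short-double-path-c-d}. I would distinguish three cases. (i) If $\tilde{d}'$ shares at least one arc with $\tilde{c}_R$, project the corresponding maximal double path to an element of $\mathcal{D}_+$. (ii) Otherwise, if $\tilde{d}'$ shares at least one arc with $\tilde{c}_L^{-1}$, project to $\mathcal{D}_-$. (iii) Otherwise $\tilde{d}'$ meets $\tilde{\Delta}$ only through isolated vertices; applying Lemma~\ref{lem:double-path-struct} to the subpath of $\tilde{d}'$ inside $\tilde{\Delta}$ shows that such a vertex is either common to both boundaries or connected to the opposite boundary by a spoke, and the isolated crossing then projects to an element of $\mathcal{D}_0$. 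The exclusion clauses in the definition of $\mathcal{D}_-$ and the two explicit configurations singled out in the definition of $\mathcal{D}_0$ are tailored exactly so that these three cases correspond to the three pairwise disjoint sets.

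I would then check that $\Phi$ is $\tau$-equivariant (the translation $\tau$ simply shifts the index of the corresponding double path along $c_R$ and $c_L^{-1}$), so that it descends to a map $B/\tau \to \mathcal{D}_+ \cup \mathcal{D}_0 \cup \mathcal{D}_-$. Injectivity follows because the position of a maximal double path on $\tilde{c}_R$ or on $\tilde{c}_L^{-1}$ together with the corresponding subpath of $\tilde{d}'$ determines $\tilde{d}'$ uniquely; two classes producing the same image would share a long enough subpath to be identified by the primitivity argument of Lemma~\ref{lem:short-double-path-c-d}. For surjectivity, given any element of $\mathcal{D}_+$, $\mathcal{D}_-$, or $\mathcal{D}_0$, I would lift the defining configuration into $\tilde{\Delta}$ and extend the shared arcs uniquely to a lift of $d$; the crossing condition at the endpoints of the double path (or at the single crossing vertex in the $\mathcal{D}_0$ case) ensures that the resulting lift's limit points alternate with those of $\tilde{c}_R$ and therefore it belongs to $B$.

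The main obstacle will be verifying that the three-case partition of $B$ corresponds bijectively to $\mathcal{D}_+ \sqcup \mathcal{D}_0 \sqcup \mathcal{D}_-$ without any double counting. The delicate configurations arise at the two endpoints of the transverse passage of $\tilde{d}'$ through $\tilde{\Delta}$, where the two boundaries may coincide, be joined by a spoke, or be separated by a staircase. The exclusion conditions in $\mathcal{D}_-$ and the explicit cases in $\mathcal{D}_0$ are engineered precisely to make the bookkeeping match, and the required enumeration is exactly the case analysis from Lemma~\ref{lem:double-path-struct}, applied to both ends of the transverse subpath of $\tilde{d}'$ inside the strip.
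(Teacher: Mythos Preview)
Your overall architecture matches the paper's proof: lift $\Delta$ to an infinite strip $\tilde{\Delta}$ bounded by $\tilde{c}_R$ and $\tilde{c}_L$, associate to each $\tilde{d}'\in B$ a crossing double path via its passage through $\tilde{\Delta}$, check $\tau$-invariance, and verify bijectivity by lifting back. The paper proceeds in exactly this way.

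However, your three-case partition of $B$ is not the right one, and this is a genuine gap rather than a cosmetic difference. You send $\tilde{d}'$ to $\mathcal{D}_+$ if it shares an arc with $\tilde{c}_R$, to $\mathcal{D}_-$ if it shares an arc with $\tilde{c}_L^{-1}$, and to $\mathcal{D}_0$ if it meets $\tilde{\Delta}$ ``only through isolated vertices''. But the actual trichotomy is governed by the \emph{direction} in which the canonical path $\tilde{d}'$ runs along each boundary, not merely by whether it shares an edge. Concretely:
\begin{itemize}
\item A lift mapped to $\mathcal{D}_0$ does \emph{not} meet $\tilde{\Delta}$ in isolated vertices. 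By the very definition of $\mathcal{D}_0$, one has $d[j,j+1]=c_L[i,i+1]$ (or the analogous clause), so $\tilde{d}'$ runs along $\tilde{c}_L$ in the \emph{forward} direction while touching $\tilde{c}_R$ at a single vertex. Your case~(iii) never sees such a lift, and your case~(ii), if read as a directed match with $\tilde{c}_L^{-1}$, does not see it either.
\item A lift can run along $\tilde{c}_R$ in the \emph{backward} direction (the decomposition of Lemma~\ref{lem:double-path-struct} allows $\rho$ to be a subpath of $\Delta_R^{-1}$). Such a lift must go to $\mathcal{D}_-$, but your case~(i), read as an undirected arc-sharing criterion, would send it to $\mathcal{D}_+$, where there is no matching element.
\end{itemize}
Thus your cases (i)--(iii) neither cover $B$ nor land in the correct pieces of $\mathcal{D}_+\cup\mathcal{D}_0\cup\mathcal{D}_-$.

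The paper resolves this by applying Lemma~\ref{lem:double-path-struct} directly to the subpath $\tilde{d}[i,j]$ of $\tilde{d}'$ contained in $\tilde{\Delta}$, obtaining a \emph{unique} decomposition of the form $\lambda\cdot\rho$, $\rho\cdot\lambda$, $\lambda\cdot e\cdot\rho$ or $\rho\cdot e\cdot\lambda$ with $\rho=\tilde{c}_R[a\stackrel{r}{\to}]$ and $\lambda=\tilde{c}_L[b\stackrel{\ell}{\to}]$ for \emph{signed} integers $r,\ell$. The trichotomy is then purely by signs: $r>0$ gives $\mathcal{D}_+$; $r=0$ and $\ell>0$ gives $\mathcal{D}_0$; $r\le 0$ and $\ell\le 0$ gives $\mathcal{D}_-$. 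Points~\ref{it:neg-L}--\ref{it:both-neg} of Lemma~\ref{lem:double-path-struct} are exactly what makes this exhaustive and disjoint, and the uniqueness of the decomposition is what yields injectivity. Your sketch invokes Lemma~\ref{lem:double-path-struct} only in case~(iii), but in fact it is the organizing tool for the entire map; once you use it this way, the exclusion clauses in $\mathcal{D}_-$ and the side conditions in $\mathcal{D}_0$ line up automatically with the sign pattern of $(r,\ell)$.
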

\begin{proof}
  Let $\tilde{c_L}$ be the lift of $c_L$ with the same limit points as $\tilde{c_R}$. These two lifts  project onto the boundaries of the annular diagram $\Delta$ and thus form an infinite strip $\tilde{\Delta}$ of width at most 1 in $\D$ composed of paths and quad staircases (possibly a single infinite staircase). We shall define a correspondence between $B/\tau$ and $\mathcal{D}_+\cup \mathcal{D}_0\cup  \mathcal{D}_-$.
To this end we consider 
a lift $\tilde{d}$ of $d$ whose limit points alternate with those of $\tilde{c_R}$. In other words, $\tilde{d}\in B$. The lift $\tilde{d}$ must cross $\tilde{\Delta}$. Let $i$ and $j$ be respectively the smallest and largest index $k$ such that $\tilde{d}(k)$ is in $\tilde{\Delta}$. By Remark~\ref{rem:sub-geodesic}, the corresponding subpath $\tilde{d}[i,j]$ of $\tilde{d}$ is canonical. Since $\D$ is simply connected, $\tilde{d}[i,j]$ is homotopic to any path joining the same extremities and we can apply Lemma~\ref{lem:double-path-struct} to show that $\tilde{d}[i,j]$ is actually contained in $\tilde{\Delta}$ and that it can be decomposed as either $\lambda.\rho$, $\rho.\lambda$,   $\lambda.e.\rho$  or $\rho.e.\lambda$ where $e$ is a spoke of $\tilde{\Delta}$, $\rho=\tilde{c_R}\Ipath{a}{r}$ and $\lambda=\tilde{c_L}\Ipath{b}{\ell}$ for some $a,b,r,\ell\in \Z$. 
  \begin{itemize}
  \item If $r>0$ then by Point~\ref{it:pos-R} of Lemma~\ref{lem:double-path-struct} we have $\tilde{d}\cap \tilde{c_R}=\rho$ so that this intersection defines a maximal double path of  $\tilde{d}$ and $\tilde{c_R}$. It must be crossing since  $\tilde{d}\in B$. Its projection on $\Sigma$ is a crossing double path of length $r>0$ of $d$ and $c_R$, hence in $\mathcal{D}_+$, to which we map $\tilde{d}$.
\item If $r=0$ and $\ell>0$, then Point~\ref{it:both-neg} of Lemma~\ref{lem:double-path-struct} implies $\tilde{d}\cap \tilde{c_R}=\rho$. As above, $\rho$ must define a crossing double path of  length zero of $\tilde{d}$ and $\tilde{c_R}$. We map $\tilde{d}$ to the projection of this crossing double point on $\Sigma$ and remark that this projection is in $\mathcal{D}_0$.
\item Otherwise, we must have $r\leq 0$ and $\ell\leq 0$ by Point~\ref{it:both-neg} of Lemma~\ref{lem:double-path-struct}. If $\ell <0$ then Point~\ref{it:neg-L} of Lemma~\ref{lem:double-path-struct} implies $\tilde{d}\cap \tilde{c_L}\inv = \lambda$. This intersection defines a crossing double path of $\tilde{d}$ and $\tilde{c_L}\inv$ and we map $\tilde{d}$ to its projection on $\Sigma$. If $\ell=0$ and $r<0$ then Point~\ref{it:both-neg} of Lemma~\ref{lem:double-path-struct} implies $\tilde{d}\cap \tilde{c_L}\inv=\lambda$, which also holds true if $r=\ell=0$. In both cases $\lambda$ corresponds to a crossing double path of  length zero of $\tilde{d}$ and $\tilde{c_L}\inv$ and we map $\tilde{d}$ to its projection on $\Sigma$. We finally remark that this last projection or the above one belong to $\mathcal{D}_-$.
  \end{itemize}

Because $\tilde{\Delta}$ is left globally invariant by $\tau$ (the hyperbolic motion that sends $\tilde{c_R}(0)$ to $\tilde{c_R}(|c_R|)$), we have $\tau(\tilde{d})\cap\tilde{\Delta}= \tau(\tilde{d})\cap\tau(\tilde{\Delta})=\tau(\tilde{d}\cap\tilde{\Delta})$.
It follows that $\tilde{d}$ and $\tau(\tilde{d})$ are mapped to the same crossing double path by the above rules. We thus have a well defined map $B/\tau\to \mathcal{D}_+\cup \mathcal{D}_0 \cup \mathcal{D}_-$. The uniqueness of the decomposition in Lemma~\ref{lem:double-path-struct} implies that this map is 1-1.
In order to check that the map is onto we  consider a maximal crossing double path $\dP=(\Ipath{\Imath \mod |c_R|}{\ell}_{c_R},\Ipath{\Jmath\mod |d|}{\ell}_d)$ of $c_R$ and $d$ in $\mathcal{D}_+$. By the unique lifting property of coverings there is a unique lift $\tilde{d}$ of $d$ such that $\tilde{d}(j)=\tilde{c_R}(i)$ and $\dP$ lifts to a crossing double path of $\tilde{d}$ and $\tilde{c_R}$. By Lemma~\ref{lem:double-path} this double path is the only intersection of $\tilde{d}$ and $\tilde{c_R}$ so these lifts must have alternating limit points. In other words $\tilde{d}$ is in $B$ and is mapped to $\dP$. A similar argument applies to the crossing double paths of $\mathcal{D}_0$ and $\mathcal{D}_-$.
\end{proof}
This leads to a simple algorithm for computing combinatorial crossing numbers.
\begin{cor}\label{cor:crude-result}
  Let $c,d$ be primitive curves of length at most $\ell$ on an orientable combinatorial surface with complexity $n$. The crossing numbers $i(c,d)$ and $i(c)$ can be computed in $O(n+\ell^2)$ time.
\end{cor}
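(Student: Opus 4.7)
The plan is to combine Proposition~\ref{prop:counting-crossings} with Lemmas~\ref{lem:geometric-vs-combinatorial} and~\ref{lem:strategy} to reduce the computation to enumerating and classifying the maximal crossing double paths of appropriate pairs of canonical curves. I detail the steps for $i(c,d)$; the case of $i(c)$ is analogous, with $d$ replaced by $c_R$ and the diagonal $\Imath=\Jmath$ excluded from the set of double points.

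First, I would apply Lemma~\ref{lem:systems-of-quads} to convert the input surface to a system of quads in $O(n)$ time and to rewrite $c$ and $d$ as closed walks of length $O(\ell)$. Then, using Theorem~\ref{th:canonical}, I would replace $c$ by its canonical form $c_R$, compute the canonical curve $c_L$ whose inverse is homotopic to $c_R$, and bring $d$ into canonical form, all in $O(\ell)$ time. Next, using Theorem~\ref{th:geodesic}, I would build the annular diagram $\Delta$ between $c_R$ and $c_L$: by that theorem $\Delta$ is an alternating sequence of paths and quad staircases (or a single closed staircase), has $O(\ell)$ total size and can be constructed in $O(\ell)$ time. During the construction I would store, for each index of $c_R$ and of $c_L\inv$, a pointer to its position in $\Delta$ together with flags indicating whether that vertex lies on a shared boundary segment and whether it is incident to a spoke, so that each of the bulleted conditions appearing in the definitions of $\mathcal{D}_0$ and $\mathcal{D}_-$ can later be tested in constant time.

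Second, I would enumerate the maximal double paths of $(c_R,d)$ and of $(c_L\inv,d)$. For each of the $O(\ell^2)$ index pairs I check in $O(1)$ time whether it is a double point; if so, and provided it has not already been marked, I extend it forward and backward as far as possible to obtain the unique maximal double path containing it. By Corollary~\ref{cor:partition-double-paths} the maximal double paths partition the double points, so marking visited double points keeps the total extension cost equal to the total number of double points, i.e.\ $O(\ell^2)$; Lemmas~\ref{lem:short-double-path-c} and~\ref{lem:short-double-path-c-d} guarantee that this extension process is well-defined and terminates within the curves. For each maximal double path I then decide, in $O(1)$ time using only the rotation system at its two endpoints, whether it is crossing, and use the precomputed diagram pointers to test which of $\mathcal{D}_+$, $\mathcal{D}_0$, or $\mathcal{D}_-$ it belongs to. Summing the three cardinalities yields $|B/\tau|$ by Proposition~\ref{prop:counting-crossings}, and hence $i(c,d)$ by Lemmas~\ref{lem:strategy} and~\ref{lem:geometric-vs-combinatorial}.

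The dominant step is the $O(\ell^2)$ enumeration of double points; every other step runs in linear time, yielding the announced $O(n+\ell^2)$ bound. The main obstacle is not asymptotic complexity but the careful bookkeeping required to implement the partition into $\mathcal{D}_+$, $\mathcal{D}_0$, and $\mathcal{D}_-$: the zero-length cases interact with the fine structure of $\Delta$ via the bulleted conditions in the definitions of those sets, and one must check both that the per-vertex flags suffice to decide each such condition in constant time and that no crossing is counted more than once across the two enumerations (one against $c_R$ and one against $c_L\inv$).
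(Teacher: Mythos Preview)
Your proposal is correct and follows essentially the same approach as the paper: reduce to a system of quads, compute canonical forms, enumerate maximal double paths by scanning the $O(\ell^2)$ index grid, and classify them into $\mathcal{D}_+$, $\mathcal{D}_0$, $\mathcal{D}_-$ using Proposition~\ref{prop:counting-crossings}. Your explicit construction of $\Delta$ with per-vertex flags is a reasonable implementation of what the paper leaves implicit.

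The one point worth noting is your treatment of $i(c)$. The paper avoids the bookkeeping you allude to by invoking the identity $i(c,c)=2i(c)$ (a special case of Proposition~\ref{prop:non-primitive-formulas}) and simply running the two-curve algorithm with $d=c$. Your ``analogous'' computation with $d=c_R$ and the diagonal removed will also work, but you must be explicit about the factor of two: the sets $\mathcal{D}_+,\mathcal{D}_0,\mathcal{D}_-$ are defined via ordered index pairs, so each self-crossing is counted once as $(i,j)$ and once as $(j,i)$, and Proposition~\ref{prop:counting-crossings} yields $|B/\tau|=i(c,c)=2i(c)$ rather than $i(c)$ directly. Either divide by two at the end or identify $(i,j)$ with $(j,i)$ throughout; the paper's route sidesteps this entirely.
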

\begin{proof}
By Lemma~\ref{lem:systems-of-quads} we may assume that the surface is a system of quads. By Theorem~\ref{th:canonical} we may compute the canonical forms of $c, c\inv$ and $d$ in $O(\ell)$ time. According to Proposition~\ref{prop:counting-crossings}, we have
\[i(c,d) = |\mathcal{D}_+|+|\mathcal{D}_0|+|\mathcal{D}_-|
\]
The set $\mathcal{D}_+$ can be constructed in $O(\ell^2)$ time. Indeed, 
since the maximal double paths of $c$ and $d$ form disjoint sets of double points by Corollary~\ref{cor:partition-double-paths}, we just need to traverse the grid $\Z/|c|\Z\times \Z/|d|\Z$ and group the double points into maximal double paths. Those correspond to diagonal segments in the grid that can be computed in time proportional to the size of the grid. We can also determine which double paths are crossing in the same amount of time. Likewise, we can construct the sets $\mathcal{D}_0$ and $\mathcal{D}_-$ in $O(\ell^2)$ time. 
\end{proof}
The cases where $c$ and $d$ are non-primitive or where the Euler characteristic of the input surface is non-negative are dealt with in Section~\ref{subsec:end-of-proof}.

\section{Computing intersection numbers: the general case}\label{sec:counting}
Here, we drop the orientability assumption to consider systems of quads that may be non-orientable. We first relate the intersection number of two primitive combinatorial curves with their number of \emph{crossing pairs of index paths} as defined below. See Lemma~\ref{lem:counting}.
 Let $c,d$ be primitive combinatorial closed geodesics on a system of quads $\Sigma$. We assume that neither of them is exceptional as  defined above Corollary~\ref{cor:no-monogon}. If $c$ is homotopic to $d$ or to its inverse, which we can detect by considering the canonical forms of their squares (see the discussion before Proposition~\ref{prop:primitive-root}), we enforce $c=d$. As in Section~\ref{subsec:proof-orientable}, we view the underlying surface of $\Sigma$ as a quotient $\D/\Gamma$ of the Poincar\'e disk. The system of quads lifts to a quadrangulation of $\D$ and the lifts of $c,d$ in $\Sigma$ are combinatorial bi-infinite paths in this quadrangulation. Note that enforcing $c=d$ whenever $c$ was initially homotopic to $d$ or its inverse ensures that the lifts of $c$ and $d$ have pairwise distinct limit points, unless those lifts are equal (in which case we obviously have $c=d$).
We fix a lift $\tilde{c}$ of $c$ and consider any lift $\tilde{d}$ of $d$ whose limit points alternate with the limit points of $\tilde{c}$ along $\partial\D$. 
It follows from 
Remark~\ref{rem:geodesic-power} and Corollary~\ref{cor:equal-length} that $\tilde{c}$ and $\tilde{d}$ are geodesics that each minimizes the length between any two of its vertices. Let $i$ be  the smallest  index along $\tilde{c}$ such that $\tilde{c}(i)\in \tilde{d}$ and let $\ell$ be maximal such that $\tilde{c}(i+\ell)\in \tilde{d}$. By the preceding discussion $\tilde{c}(i)$ and $\tilde{c}(i+\ell)$ also delimitate a maximal subpath $\tilde{d}[j,j+\varepsilon\ell]$ of $\tilde{d}$ whose endpoints belong to $\tilde{c}$, with $\varepsilon\in\{-1,1\}$. We are thus in the situation depicted on Figure~\ref{fig:strategy}, right, where $\tilde{c}[-\infty,i-1]$ and $\tilde{c}[i+\ell+1,+\infty]$ lie on either side of $\tilde{d}$ in $\D$. In this case we say that the pair of index paths $(\Ipath{i}{\ell}_c, \Ipath{j}{\varepsilon\ell}_d)$ is \define{crossing}. This happens precisely when $(\Ipath{i}{\ell}_c, \Ipath{j}{\varepsilon\ell}_d)$ is a maximal pair of index paths with homotopic image paths and when the circular ordering of the arcs $\tilde{c}[i,i+1], \tilde{c}[i,i-1]$ and $\tilde{d}[j,j-\varepsilon]$ is the same as the circular ordering of $\tilde{c}[i+\ell,i+\ell-1], \tilde{c}[i+\ell,i+\ell+1]$ and $\tilde{d}[j,j+\varepsilon(\ell+1)]$. This last condition can be checked by  comparing the direct circular orderings of the projection of those arcs on $\Sigma$, taking into account the parity of the number of half-twisted arcs in $c\Ipath{i}{\ell}$ (or equivalently in $d\Ipath{j}{\varepsilon\ell}$). We emphasize that the parameters $i,j,\ell,\varepsilon$ are attached to $c$ and $d$, not to their lifts. Those parameters determine a crossing pair of index paths
for $c$ and $d$ if there exist two lifts $\tilde{c}$ of $c$ and $\tilde{d}$ of $d$ such as described above. According to Lemma~\ref{lem:strategy} in the strategy section, we have
\begin{lem}\label{lem:counting}
  Let $c,d$ be primitive combinatorial closed geodesics, none of which is exceptional. We assume that $c$ is not homotopic to $d$ or its inverse, unless $c$ is precisely equal to $d$. Then, $i(c,d)$ is equal to the number of crossing pairs of index paths. 
\end{lem}
The case where $c$ or $d$ is exceptional can be dealt with separately. See the proof of  Lemma~\ref{cor:primitive-main-result} and Section~\ref{sec:non-primitive} for further details on the exceptional case.

\subsection{Thick double-paths}\label{subsec:thick-double-paths}
Following the description of Theorem~\ref{th:geodesic}, the two sides $c\Ipath{\Imath}{\ell}$ and $d\Ipath{\Jmath}{\varepsilon\ell}$ of a crossing pair of index paths $(\Ipath{i}{\ell}_c, \Ipath{j}{\varepsilon\ell}_d)$ label a disk diagram composed of a sequence of paths and staircases. In particular, the vertices of the two sides can be put in 1-1 correspondence and corresponding vertices satisfy a local condition: they stay at distance zero or two and bound a same quad in the disk diagram. Hence, when looking for crossing pairs of index paths of $(c,d)$, we can start from any double point $(i,j)$ and walk in parallel along the two sides from $i$ and $j$, checking the above local condition at each step. However, we do not know a priori if the two sides will meet again to form a pair of homotopic paths and the search may be unsuccessful. In order to analyze the complexity of the search we consider pairs of paths that could potentially be part of a crossing pair of index paths, but are not necessarily so. Formally, a pair $(\Ipath{\Imath}{\ell}_c,\Ipath{\Jmath}{\varepsilon\ell}_d)$ of distinct index paths such that $c\Ipath{\Imath}{\ell}$ and $d\Ipath{\Jmath}{\varepsilon\ell}$ are corresponding subpaths of homotopic geodesic paths is called a \define{thick double path} of length $\ell$. In other words, a thick double path is such that its image paths can be extended to form homotopic geodesic paths. A thick double path gives rise to a sequence of $\ell+1$ index pairs $\Dpoint{\Imath+k}{\Jmath+\epsilon k}$ for $k\in [0,\ell]$. Let $\Delta$ be a disk diagram whose boundary is labelled by homotopic geodesic paths extending the thick double path as illustrated on Figure~\ref{fig:thick}. 
\begin{figure}
  \centering
  \includesvg[.6\textwidth]{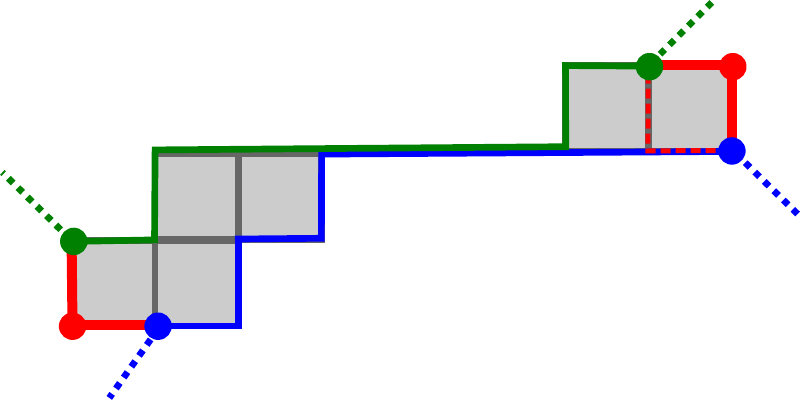}
  \caption{A thick double path extends to a pair of homotopic geodesics and bound a partial diagram. When $(i,j)$ and $(i+\ell,j+\varepsilon\ell)$ are in the same configuration, the corresponding  pairs of vertices in $\Delta$ are both joined by a path of length zero or two labelled by the same path $p$.}
  \label{fig:thick}
\end{figure}
The restriction of $\Delta$ to the part delimited by the thick double path is called a \define{partial diagram}. In a partial diagram each index pair $\Dpoint{\Imath}{\Jmath}$ may be in at most one of five \define{configurations}: either the corresponding vertices coincide in the partial diagram, or they appear diagonally opposite in a quad; in turn this quad may be labelled by one of the (at most) four quads incident to either $c[i,i+1]$ or $d[j,j+\varepsilon]$. A partial diagram  is \define{maximal} if it cannot be extended to form a partial diagram of a longer thick double path. 
\begin{lem}\label{lem:configurations}
  Let $c$ and $d$ be primitive geodesic curves, none of which is exceptional. Suppose in addition that either $c=d$, or $c$ is not homotopic to $d$ or its inverse. Then each configuration of an index pair may occur at most once in a partial diagram.
\end{lem}
\begin{proof}
  Suppose for a contradiction that the index pair $(i,j)$ occurs twice in the same configuration in a partial diagram of a thick double path of $(c,d)$. The two  occurrences of $(i,j)$  are separated by a path whose length $\ell$ is an integer multiple of $|c|$, say $\ell=k|c|$, and an integer multiple of $|d|$, say $\epsilon\ell=t|d|$.
In the partial diagram, $c(i)$ and $d(j)$ label vertices that are either identical or opposite in a quad. They are thus connected by a path $p$ of length zero or two. See Figure~\ref{fig:thick}. We infer that 
\begin{eqnarray}
  \label{eq:thick-double-path}
  c\Ipath{\Imath}{k|c|}\,\bhom\, p\cdot d\Ipath{\Jmath}{t|d|}\cdot p\inv
\end{eqnarray}
It ensues that $c^k\sim d^t$, considering free homotopy. Because $c$ and $d$ are primitive curves on a surface with negative Euler characteristic, this implies that $c$ is homotopic to $d$ or its inverse: every homotopy class has indeed a unique primitive root up to orientation on such a surface~\cite[Lem. 9.2.6]{r-ajccs-62},~\cite[p. 213]{b-gscrs-92}. By the hypotheses in the lemma, we thus have $c=d$. We first assume that the two index paths of the thick double path are both forward (i.e., $\varepsilon=1$). Exchanging the roles of $i$ and $j$ if necessary, we may further assume that $0< j-i \leq |c|/2$.
Equation~\eqref{eq:thick-double-path} now writes
\[c\Ipath{\Imath}{k|c|}\,\bhom\, p\cdot c\Ipath{\Jmath}{k|c|}\cdot p\inv \,\bhom\,  p\cdot c[i,j]\inv\cdot c\Ipath{\Imath}{k|c|}\cdot c[i,j]\cdot p\inv
\]
Equivalently, $c'^k\cdot q\,\bhom\,  q\cdot c'^k$ where $c':=c\Ipath{\Imath}{|c|}$ is a cyclic permutation of $c$ and $q := p\cdot c[i,j]\inv$ is a (closed) path of length at most $j-i+2$. Since they are  commuting, $c'^k$ and $q$ must admit a common primitive root~\cite[Lem. 9.2.6]{r-ajccs-62},~\cite[p. 213]{b-gscrs-92}. Whence $q\,\bhom\,c'^t$ for some $t\in \Z$, since $c'$ is itself primitive. By Remark~\ref{rem:geodesic-power}, $c'^t$ is geodesic, hence has minimal length in its homotopy class. It follows that $j-i+2 \geq |t|.|c|$. This is only possible if $t=0$ or if $|t|=1$ and $j-i=|c|-2$ (recall that the graph of the system of quads is bipartite so that indices in a pair have the same parity). 
\begin{itemize}
\item If $t=0$, then $q\sim 1$, so that $p\bhom c[i,j]$. Call $\Delta$ the partial diagram for $c\Ipath{\Imath}{k|c|}$ and $c\Ipath{\Jmath}{k|c|}$. The uniqueness of disk diagrams discussed after Corollary~\ref{cor:equal-length} implies that the initial point of $c\Ipath{\Jmath}{k|c|}$ and the point of index $j-i$ along $c\Ipath{\Imath}{k|c|}$ (which is also $c[j]$) are mapped to the same point in $\Delta$. However, coincident points in a (partial) diagram must have the same index along each side. This would imply $i=j$, which is impossible since we are only considering pairs of distinct index paths.
\item If $|t|=1$ and $j-i=|c|-2$, then $|c|=4$ (recall that $0<j-i\leq |c|/2$) and thus $j=i+2$. In the case $t=1$, we have $c'\bhom p\cdot c[i,j]\inv \bhom p\cdot c\Ipath{\Jmath}{2}\cdot {c'}\inv$, whence $(c')^2\sim p\cdot c\Ipath{\Jmath}{2}$. This is however impossible as the left member of the homotopy has minimal length 8 in its homotopy class while the right member has length 4. In the case $t=-1$, we have $c'\bhom c[i,j]\cdot p\inv$, whence $p\bhom c\Ipath{\Jmath}{2}\inv$. Similarly to the case $t=0$, we infer that the initial point of $c\Ipath{\Imath}{k|c|}$ and the point of index $2$ along $c\Ipath{\Jmath}{k|c|}$ are mapped to the same point in the partial diagram, which is also impossible.
\end{itemize}

We now assume that the thick double path is composed of a forward and a backward index paths. Similarly to the previous case, we infer that $c'\,\bhom\, q \cdot (c')\inv \cdot q\inv$, where $c':=c\Ipath{\Imath}{|c|}$  and $q:= p\cdot c[i,j]\inv$. Equivalently, $c$ and its inverse are freely homotopic. This is however impossible unless $c$ is contractible.
\end{proof}
\begin{cor}\label{cor:index-pairs}
  Each configuration of an index pair may occur at most once in the set of maximal partial diagrams.
\end{cor}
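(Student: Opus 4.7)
The plan is to reduce the statement to Lemma~\ref{lem:configurations} by showing that a configuration of an index pair $(i,j)$ determines, uniquely, the maximal partial diagram in which it can sit. Once this uniqueness is in hand, two occurrences in the set of maximal partial diagrams would necessarily lie in the same maximal partial diagram, contradicting Lemma~\ref{lem:configurations}.

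First, I would unpack what a configuration of $(i,j)$ specifies locally: either the two endpoints $c(i)$ and $c(j)$ coincide in the partial diagram, or they are diagonally opposite in a specific quad adjacent to one of $c\Focc{i}$ or $c[j,j+\varepsilon]$. In either case the $\ell=0$ slice of the partial diagram at $(i,j)$ is completely prescribed, including the two arc occurrences leaving this slice on each side. Next, I would argue that once the slice at $(i,j)$ is fixed, the extension to the neighbouring slice at $(i+1,j+\varepsilon)$ (and symmetrically backwards) is forced: by Theorem~\ref{th:geodesic} the thick double path lives in an annular/disk diagram built out of alternating paths and quad staircases, and at each index pair there is at most one way to continue the diagram consistent with the two sides labelling homotopic geodesic paths. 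In particular, after placing the quad (or the identification of vertices) at the current slice, the neighbouring quad or vertex identification is uniquely determined by which arcs of $c$ issue from the current pair, combined with the combinatorial structure of the system of quads.

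Iterating this deterministic forward and backward extension, we obtain a unique maximal partial diagram $P(i,j,\mathrm{conf})$ associated with the given configuration of $(i,j)$. If a second maximal partial diagram $P'$ also contained the same configuration of $(i,j)$, then $P'$ would have to extend from that configuration in exactly the same way and be equal to $P(i,j,\mathrm{conf})$. Hence $P=P'$, and by Lemma~\ref{lem:configurations} this common maximal partial diagram contains the configuration at most once; in particular it contains it exactly once across the whole collection of maximal partial diagrams.

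The main obstacle, and the only step that really needs care, is the deterministic-extension claim in the second paragraph. The subtlety is at the tips of staircases and at path-to-staircase transitions, where a priori one might imagine several ways of attaching the next quad. I would handle this by invoking the absence of spurs, brackets, and $\bar{1}$-turns on either side of the thick double path (consequences of $c$ being a geodesic, via Theorem~\ref{th:canonical-path} and Remark~\ref{rem:sub-geodesic}) together with the fact that every interior vertex of the system of quads has degree at least $8$, so that the choice of the next face to include is forced by the local data at the current slice.
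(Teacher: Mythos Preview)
Your approach is correct and is essentially the same as the paper's, which reduces to Lemma~\ref{lem:configurations} and then disposes of the cross-diagram case in one line: ``This is essentially the unique lift property of coverings.'' Your deterministic forward/backward extension is precisely the combinatorial content of that lifting property---fixing a configuration of $(i,j)$ pins down a specific vertex in the universal cover relative to a fixed lift of $c$, hence a unique second lift, hence a unique maximal thick double path. The careful case analysis you sketch for tips and path-to-staircase transitions is therefore not really needed: uniqueness of lifts makes the extension automatic without inspecting local combinatorics. So you are doing a bit more work than necessary, but nothing wrong.
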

\begin{proof}
  By the preceding Lemma, we only need to check that a configuration of an index pair cannot occur twice in distinct partial diagrams of maximal thick double paths. This is essentially the unique lifting property of coverings.
\end{proof}

\subsection{Proof of Theorem~\ref{th:main-result}: the primitive case}
\begin{lem}\label{cor:primitive-main-result}
  Let $c,d$ be primitive curves of length at most $\ell$ on a system of quads. The crossing number $i(c,d)$ can be computed in $O(\ell^2)$ time.
\end{lem}
\begin{proof}
  We may assume that $c$ and $d$ are geodesic by Theorem~\ref{th:compute-geodesic}. We first consider the case where none of $c$ or $d$ is exceptional. According to Lemma~\ref{lem:counting}, we can compute $i(c,d)$ by enumerating the crossing pairs of index paths of $(c,d)$. Following the previous Section~\ref{subsec:thick-double-paths} we only need to list the maximal partial diagrams and to count those that correspond to crossing pairs of index paths. As described above Lemma~\ref{lem:counting}, this last step can be performed in time proportional to the total length of the maximal partial diagrams, which is $O(\ell^2)$ by Corollary~\ref{cor:index-pairs}. Since we are only interested in partial diagrams whose sides have coincident vertices, namely the endpoints of the crossing pairs of index paths, we can list the set of maximal partial diagrams as follows. Pick any index pair $(i,j)$ such that $c(i)=d(j)$ and form a trivial partial diagram with these vertices in coincident configuration. Extend the partial diagram maximally in both directions. By the unique lifting property of coverings, there is only one way to do so. Mark all the configurations of index pairs found in this diagram and  pick a new index pair which is not marked in the coincident configuration. Repeat until all the coincident configurations of index pairs  are marked. The total time spent by the procedure  is proportional to the total length of the maximal partial diagrams, hence is $O(\ell^2)$.

We next consider the case where $d$ is not exceptional but $c$ is an exceptional 1-sided curve.  Its square $c^2$ is a 2-sided curve with a unique homotopic geodesic whose turn sequence is $2^*$ or $\bar{2}^*$. In particular, $c^2$ cannot form a staircase with any curve since both sides of a staircase must have a vertex whose corresponding turn is $\pm 1$. (Note that it forms an infinite strip of width one with itself.) It follows that the crossing pairs of index paths of $(c^2,d)$ are \define{flat}, i.e., the index paths have identical image paths. This makes the search for crossing pairs of index paths even easier than in the general case so that $i(c^2,d)$ can be computed in $O(\ell^2)$ time. Applying the formulas in Proposition~\ref {prop:non-primitive-formulas} below (with $p=2$ and $q=1$) we finally derive $i(c,d)=i(c^2,d)/2$. Similar arguments apply when both $c$ and $d$ are exceptional to compute $i(c,d)=i(c^2,d^2)/4$, including the case where $c$ and $d$ are homotopic.
\end{proof}

\subsection{Non-primitive curves}\label{sec:non-primitive}
In order to finish the proof of Theorem~\ref{th:main-result} we need to handle the case of non-primitive curves. Thanks to canonical forms, computing the primitive root of a curve becomes extremely simple (see also Lustig~\cite[Rem. 5.2]{l-pggin2-87}) on orientable surfaces. 
\begin{lem}\label{lem:non-primitive}
  Let $c$ be a combinatorial curve of length $\ell>0$ in canonical form on an oriented system of quads. A primitive curve $d$ such that $c$ is homotopic to $d^k$ for some integer $k$ can be computed in $O(\ell)$ time.
\end{lem}
\begin{proof}
  By Theorem~\ref{th:canonical}, we may assume that $c$ is in canonical form.  Let $d$ be a primitive curve such that $c\sim d^k$ for some integer $k$. We look for $d$ in canonical form. By Remark~\ref{rem:geodesic-power}, the curve $d^k$ is also in canonical form. The uniqueness of the canonical form implies that $c=d^k$, possibly after some circular shift of $d$. It follows that $d$ is the smallest prefix of $c$ such that $c$ is a power of this prefix. It can be found in $O(\ell)$ time using a variation of the Knuth-Morris-Pratt algorithm to find the smallest period of a word~\cite{kmp-fpms-77}.  
\end{proof}
We now consider the case of a curve $c$ on a non-orientable system of quads. If $c$ is 1-sided we can reduce the computation of a primitive root to the 2-sided case by noting that $c$ and $c^2$ have the same primitive roots. 
If $c$ is 2-sided it can be given one of two canonical forms by fixing one of the two consistent orientations along the curve before removing spurs, brackets and $-1$'s in its turn sequence. Alternatively, we can build and orient the double cover of the system of quads and note that the 2-sided curves are precisely those that lift to closed curves in this cover. The projection of the canonical form of a lift provides one of two canonical forms corresponding to the two possible lifts. By Lemma~\ref{lem:non-primitive} we can compute in linear time a primitive root of the lift in canonical form. Its projection $d$ is thus a 2-sided geodesic. If $d$ is itself primitive, then $d$ is a primitive root of $c$. Otherwise, $d\sim e^2$ for some 1-sided primitive geodesic $e$, which is thus a primitive root of $c$.
\begin{itemize}
\item 
If $e$ is exceptional then $d$ has a constant turn sequence $2^*$ or $\bar{2}^*$ and bounds an immersed annulus wrapping twice around a M\"obius strip. Consider  any non-boundary arc $a$ of this annulus. Let $d(i)$ be the source point of $a$, and let $\ell=|d|/2$. Then $d\sim (d\Ipath{i+\ell}{\ell}\cdot a)^2$ so that $e$ must be homotopic to $d\Ipath{i}{\ell}\cdot a$. See Figure~\ref{fig:exceptional-case}.
\begin{figure}[h]
  \centering
  \includesvg[.4\linewidth]{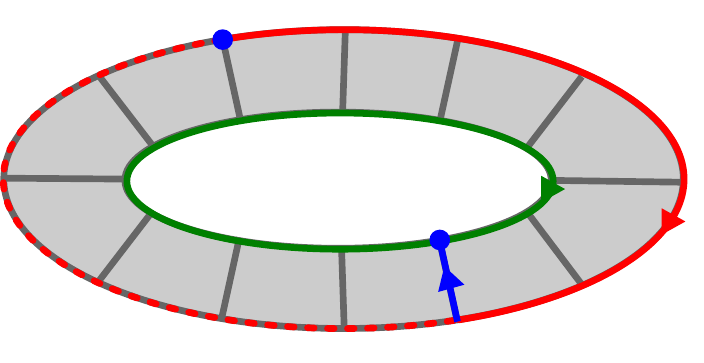}
  \caption{An annular diagram for $d$ with itself.}
  \label{fig:exceptional-case}
\end{figure}
\item If $e$ is not exceptional, then $e^2$ is geodesic so that $|e|=\ell$ with $\ell=|d|/2$. Consider an annular diagram $\Delta$ for $d$ and $e^2$. As for a thick partial diagram, the boundary loops of $\Delta$ can be put in 1-1 correspondence so that corresponding vertices appear in one of five possible configurations (in fact, only three are possible since $d$ is canonical). Let $p_0$ and $p_\ell$ be paths of length at most two connecting $d(0)$ and $d(\ell)$ to respectively $e^2(0)$ and $e^2(\ell)$. The possible configurations provide at most five candidates for $p_0$ and similarly for $p_\ell$. For each pair of candidates, say $(q_0,q_\ell)$, we consider the curve $e':=q\inv_0\cdot d\Ipath{0}{\ell}\cdot q_\ell$. We can confirm that the candidate paths are the right ones by checking whether $e'^2$ is indeed homotopic to $d$. This takes $O(\ell)$ time by Theorem~\ref{th:canonical}.
\end{itemize}

To summarize the discussion,
\begin{prop}\label{prop:primitive-root}
  Given a curve $c$ of length $\ell$ on a system of quads,  orientable or not, a primitive root of $c$ can be computed in $O(\ell)$ time.
\end{prop}

\subsection{End of proof of Theorem~\ref{th:main-result}}\label{subsec:end-of-proof}
We now have all the ingredients to complete the proof of Theorem~\ref{th:main-result}. We essentially need to extend Lemma~ \ref{cor:primitive-main-result} to non-primitive curves and to handle the case of curves on surfaces of non-negative Euler characteristic. 
The geometric intersection number of non-primitive curves is related to the geometric intersection number of their primitive roots by simple formulas. The next result is rather intuitive, see Figures~\ref{fig:non-primitive} and~\ref{fig:non-primitive-non-orient}, and is part of the folklore although we could only find references in some relatively recent papers. 
\begin{prop}[\cite{gs-mcmcr-97,gkz-amnip-05}]\label{prop:non-primitive-formulas}
  Let $c$ and $d$ be primitive curves such that $c$ is not homotopic to $d$ or its inverse and let $p,q$ be positive integers. Then,
\[
i(c^p,d^q) = pq \times i(c,d)
\]
and
\[i(c^p,c^q) = \left\{
              \begin{array}{ll}
                   2pq \times i(c) +\min\{p,q\} & \text{if } c\text{ is 1-sided and $p$ and $q$ are odd}, \\
                   2pq \times i(c) & \text{otherwise.}
              \end{array}
       \right. 
\]
Moreover, the geometric self-intersection number of a curve is related to the geometric self-intersection number of its root thanks to the formulas
\[i(c^p) =  \left\{
              \begin{array}{ll}
                p^2 \times i(c) + p-1 & \text{if $c$ is 2-sided,}\\
                p^2 \times i(c) + \lfloor \frac{p-1}{2}\rfloor & \text{otherwise.}
              \end{array}\right.
\]
\end{prop}
\begin{figure}
  \centering
  \includesvg[\linewidth]{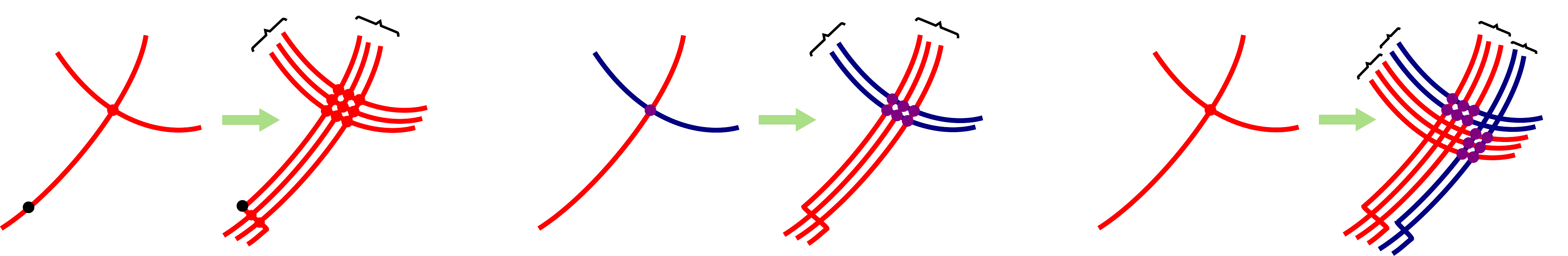}
  \caption{Left, each self-intersection of a 2-sided curve $c$ gives rise to $p^2$ self-intersections of its $p$th power obtained by wrapping $p$ times around $c$ in a small tubular neighborhood. $p-1$ additional self-intersections are needed to connect the end of the wrapping curve with the basepoint. Middle, each intersection of $c$ and $d$ gives rise to $pq$ intersections of (homotopic perturbations of) $c^p$ and $d^q$. Right, when $c$ and $d$ are homotopic and $c^p$ or $c^q$ is 2-sided, one should count $2pq$ intersections per self-intersection of $c$.}
  \label{fig:non-primitive}
\end{figure}
\begin{figure}
  \centering
  \includesvg[\linewidth]{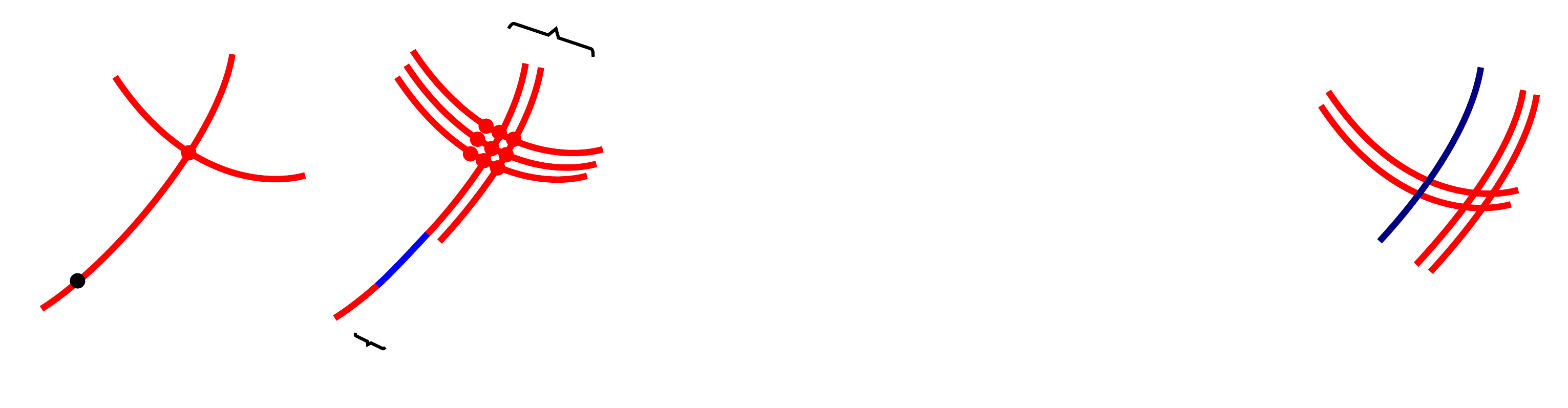}
  \caption{Left, the $p$th power of a 1-sided curve $c$ is obtained from $p$ copies of $c$ running parallel to $c$ and arriving at the starting basepoint of $c$ in reverse order. By ordering the copies conveniently we get $\lfloor \frac{p-1}{2}\rfloor$ self-intersections to reconnect the copies instead of $p-1$ for a 2-sided curve. Right, a minimal configuration for $c^p$ and $c^q$ with $p=3$ and $q=5$.}
  \label{fig:non-primitive-non-orient}
\end{figure}
\begin{proof}[Proof of Theorem~\ref{th:main-result}]
Let $c,d$ and $\Sigma$ be the two combinatorial curves and the combinatorial surface as in the Theorem. By Lemma~\ref{lem:systems-of-quads} we can assume after $O(n)$ time preprocessing that $\Sigma$ is a system of quads.
We first consider the case where $\Sigma$ has negative Euler characteristic. 
Thanks to Theorem~\ref{th:compute-geodesic} and Proposition~\ref{prop:primitive-root} we can determine geodesic primitive curves $c'$ and $d'$ and integers $p,q$ such that $c\sim c'^p$ and $d\sim d'^q$ in $O(\ell)$ time. We then compute $i(c',d')$ and $i(c', c')$ in $O(\ell^2)$ time according to Corollary~\ref{cor:primitive-main-result}. We finally use the formulas in the previous proposition to deduce $i(c,d)$ and $i(c)$ from $i(c',d')$ and $i(c',c')$. 

We next consider the case where $\Sigma$ has non-negative Euler characteristic. 
\begin{itemize}
\item If $\Sigma$ is a sphere or a disk, then every curve is contractible and $i(c,d)=i(c)=0$. 
\item If $\Sigma$ is a cylinder, then every two curves can be made non crossing so that $i(c,d)=0$ while $i(c)=p-1$. 
\item If $\Sigma$ is a projective plane, there is only one non-trivial homotopy class, say $c'$, for which $i(c',c') = 1$ and $i(c')=0$.
\item If $\Sigma$ is a torus, there is no need to build a system of quads; we may just compute the reduced surface which is composed of two loop edges. Those loop edges, say $\alpha$ and $\beta$, represent generators of the fundamental group of the torus and we can write $c\sim \alpha^x\cdot\beta^y$ and $d\sim \alpha^{x'}\cdot\beta^{y'}$. It results from classical formulas that: $i(c)=\gcd(x,y)-1$ and $i(c,d)=|\det\big((x,y),(x',y')\big)|$.
\item Finally, if $\Sigma$ is a Klein bottle, similarly to the torus case the reduced surface is composed of two loops $\alpha,\beta$ satisfying either the relation $\alpha^2\beta^2\sim 1$ or $\alpha\beta\alpha\beta\inv\sim 1$. We can transform the embedded graph in constant time so that the second relation holds. This relation can be written as a pseudo commutation relation $\alpha\beta=\beta\alpha\inv$ that preserves the cumulative sum of the exponents of $\beta$ and the parity of the cumulative sum of the exponents of $\alpha$. From this simple remark we easily derive that  every conjugacy class in the fundamental group has a unique normal form $\alpha^m\beta^{2n}$, $\beta^{2n+1}$ or  $\alpha\beta^{2n+1}$, where $m,n$ can take any integer value. Table~\ref{tab:KleinBottle} provides all the geometric intersection numbers for the primitive classes, which must have one of the forms $\beta$, $\alpha\beta$, or $\alpha^n\beta^{2m}$ with $m$ and $n$ coprime.
\begin{table}[h]
  \centering
\[
\begin{array}{c|c|c|c|}
\cline{2-4}
  & \alpha^k\beta^{2\ell} & \beta & \alpha\beta\\
\hline
\multicolumn{1}{ |c| }{\alpha^m\beta^{2n}} &  \parbox[c][2em][c]{0.2\textwidth}{ ${\scriptsize
\begin{Vmatrix}
  k & m  \\\ell & n
\end{Vmatrix}
}
+
{\scriptsize
\begin{Vmatrix}
  k & -m\\  \ell & n
\end{Vmatrix}
}
$
}
& |m| & |m|\\
\hline
\multicolumn{1}{ |c| }{\beta} & |k| & 1 & 0\\
\hline
\multicolumn{1}{ |c| }{\alpha\beta} & |k| & 0 & 1\\
\hline
\end{array}
\qquad
\begin{array}{|c|c|c|c|}
\hline
c & \alpha^k\beta^{2\ell} & \beta & \alpha\beta\\
\hline
i(c) & |k\ell| & 0 & 0\\
\hline
\end{array}
\]
  \caption[intersection numbers on the Klein bottle]{Left table, the geometric  intersection numbers for the pairs of distinct primitive homotopy classes. Here, $k$ and $\ell$ are coprime, as well as $m$ and $n$, and $(k,\ell)\neq \pm (m,n)$. We use the notation
${\scriptsize
\begin{Vmatrix} x & y\\  z & t \end{Vmatrix}
}:=|xt-yz|$. Right table, the geometric self-intersection numbers of primitive classes. Again, 
$k$ and $\ell$ are assumed to be coprime.
}
  \label{tab:KleinBottle}
\end{table}
We can combine this table with the formulas of Propositions~\ref{prop:non-primitive-formulas} to compute the geometric intersection number of any curves.
\end{itemize}
\end{proof}

\section{Combinatorial perturbations}\label{sec:perturbation}
A combinatorial curve on a combinatorial surface may be seen as a continuous curve in general position, i.e. with a finite number of transverse pairwise crossings, snapped to the graph $G$ of the combinatorial surface. When doing so several parts of the continuous curve may be mapped to the same edge of $G$. In order not to lose information, one needs to record their ordering transverse to that edge. 
Following the notion of a combinatorial set of loops as in~\cite{cl-oslos-05}, we define a \define{combinatorial \immersion{}} of a set $\mathcal{C}$ of combinatorial curves 
as the data for each arc $a$ in $G$  of a left-to-right order $\preceq_a$ over all the occurrences of $a$ or $a\inv$ in the curves of $\mathcal{C}$. The only requirement is that opposite arcs should be associated with inverse orders\footnote{This definition assumes that the combinatorial surface is orientable and consistently oriented. In the general case,  opposite arcs should be associated with inverse orders if the corresponding edge is untwisted and with the same order if the edge is half-twisted.}.
Let $A_v$ be the set of occurrences of $a$ or $a\inv$ in the curves of $\mathcal{C}$ where $a$ runs over all arcs of $G$ with origin $v$. A combinatorial \immersion{} induces for each vertex $v$ of $G$ a circular order $\preceq_{v}$ over $A_v$; if $a_1,\dots,a_k$ is the direct cyclic order of the arcs of $G$ with origin $v$, then $\preceq_{v}$ is the cyclic concatenation of the orders $\preceq_{a_1},\dots,\preceq_{a_k}$. See Figure~\ref{fig:combinatorial-immersion}.
\begin{figure}[h]
  \centering
  \includesvg[.3\linewidth]{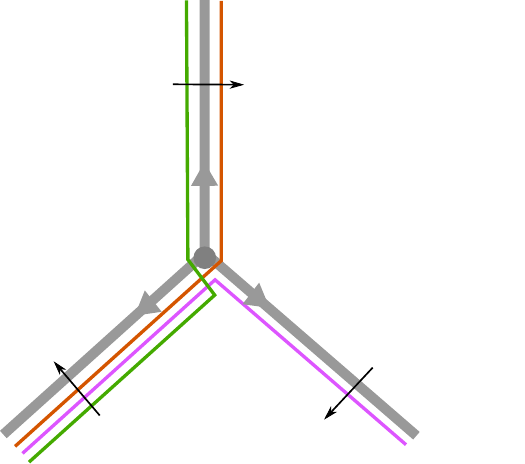}
  \caption{The circular order $\preceq_{v}$ is the concatenation of the orders $\preceq_{a_1},\preceq_{a_2},\preceq_{a_3}$.}
  \label{fig:combinatorial-immersion}
\end{figure}
\Animmersion{} of $\mathcal{C}$ restricts to \animmersion{} of each of its curves. Conversely, given \animmersion{} of each curve in $\mathcal{C}$ we can built \animmersion{} (in general many) of their union by merging for each arc $a$ the left-to-right orders $\preceq_a$ of each curve \immersion{}.

\paragraph{Combinatorial crossings.}
Recall that a double point of a pair $(c,d)$ of combinatorial closed curves is a pair of indices $\Dpoint{\Imath}{\Jmath}\in \Z/|c|\Z\times \Z/|d|\Z$ such that $c(\Ibar)=d(\Jbar)$, assuming $\Ibar\neq \Jbar$ when $c=d$.
Let $\Pi$ be \animmersion{} of $c$ and $d$. A double point $\Dpoint{\Imath}{\Jmath}$ of $(c,d)$ is a \define{crossing}\footnote{The crossings of \animmersion{} should not be confused with the crossing double points of a combinatorial curve appearing as crossing double paths of zero length in Section~\ref{subsec:double-paths}. Which notion of crossings is used should always be clear from the context.} in $\Pi$ if the pairs of arc occurrences $([\Ibar-1,\Ibar]_c,[\Ibar,\Ibar+1]_c)$ and $([\Jbar-1,\Jbar]_d,[\Jbar,\Jbar+1]_d)$ are linked in the  $\preceq_{c(\Ibar)}$-order, i.e. if they appear in the cyclic order 
\[\cdots [\Ibar,\Ibar-1]_c\cdots [\Jbar,\Jbar-1]_d\cdots [\Ibar,\Ibar+1]_c\cdots [\Jbar,\Jbar+1]_d\cdots, 
\]
with respect to $\preceq_{c(\Ibar)}$ or the opposite order. An analogous definition holds for a self-crossing of a single curve, taking $c=d$ in the above definition. Note that the notion of (self-)crossing is independent of the traversal directions of $c$ and $d$. The number of crossings of $c$ and $d$ and of self-crossings of $c$ in $\Pi$ is denoted respectively by $i_{\Pi}(c,d)$ and $i_{\Pi}(c)$.

We define the \define{combinatorial self-crossing number} of $c$, denoted by $i(c)$, as the minimum of $i_{\Pi}(c')$  over all the combinatorial \immersion{}s $\Pi$ of any combinatorial curve $c'$ freely homotopic to $c$. The \define{combinatorial crossing number} of two combinatorial curves $c$ and $d$ is defined the same way taking into account crossings between $c$ and $d$ only.
\begin{lem}\label{lem:geometric-vs-combinatorial}
  The combinatorial (\-self\--\-)\-crossing number coincides with the geometric (\-self\--\-)\-intersection number, i.e. for every combinatorial closed curves $c$ and $d$ on a combinatorial surface with graph $G$:
\[i(c)=i(\rho(c))\qquad \text{ and } \qquad i(c,d)=i(\rho(c),\rho(d)).
\]
where $\rho$ is any cellular embedding of $G$.
\end{lem}
\begin{proof}
 Every combinatorial \immersion{} $\Pi$ of $c$ can be realized by a continuous curve $\gamma\sim \rho(c)$ with the same number of self-intersections as the number of self-crossings of $\Pi$. To construct $\gamma$ we consider small disjoint disks centered at the images of the vertices of $G$ in $S$. We connect those vertex disks by disjoint strips corresponding to the edges of $G$. For every arc $a$ of $G$ we  draw inside the corresponding edge strip parallel curve pieces labelled by the occurrences of $a$ or $a\inv$ in $c$ in the left-to-right order $\preceq_a$. The endpoints of those curve pieces  appear on the boundary of the vertex disks in the circular vertex orders induced by  $\Pi$. It remains to connect those endpoints by straight line segments inside each disk (via a parametrization over a Euclidean disk) according to the labels of their incident curve pieces. The resulting curve $\gamma$ can be homotoped to $\rho(c)$ and its self-intersections may only appear inside the vertex disks. Clearly, an intersection of two segments of $\gamma$ in a disk corresponds to linked pairs of arc occurrences, i.e. to a combinatorial crossing, and vice-versa. It follows that $i(c) \geq i(\rho(c))$. To prove the reverse inequality we show that for any continuous curve $\gamma$ in general position there exists a combinatorial \immersion{} of a curve $c'$ such that $\rho(c')\sim \gamma$ and the number of self-crossings of $c'$ is at most the number of self-intersections of $\gamma$. By an isotopy we can enforce $\gamma$ to stay in the neighborhood of $G$ composed of the edge strips and vertex disks, and to have all its self-intersections inside the vertex disks. After removing the vertex disks from $S$, we are left with a set of disjoint and simple pieces of $\gamma$ lying inside the edge strips. If a piece of curve in a strip has its endpoints on the same end of the strip we can further isotope the piece inside the incident vertex disk. This way all the curve pieces join the two ends of their strip and can be ordered from left-to-right inside each strip (assuming a preferred direction of the strip). Replacing each curve piece by an arc occurrence we thus define a combinatorial \immersion{} of a curve $c'$ with the required properties.
The same constructions apply to \animmersion{} of two curves showing that  $i(c,d)=i(\rho(c),\rho(d))$. 
\end{proof}

\section{Computing a minimal immersion}\label{sec:computing-immersions}
For the rest of the paper we assume that \textbf{all considered system of quads are orientable} (and consistently oriented) and we only deal with the self-intersection number of a single curve. We thus drop the subscript $c$ to denote an index path $\Ipath{\Imath}{\ell}$ or an arc occurrence $\Focc{\Imath}$. \emph{We also write intersections for self-intersections}.
By Theorem~\ref{th:main-result} we can compute the geometric intersection number of a curve efficiently. Here, we describe a way to compute a minimal immersion, that is an actual immersion with the minimal number of intersections. We refer to the framework of the previous Section~\ref{sec:perturbation} to describe such an immersion as a combinatorial perturbation. Thanks to Lemma~\ref{lem:geometric-vs-combinatorial}, we know that this framework faithfully encodes the topological configurations.
\subsection{Bigons and monogons}
A \define{bigon} of \animmersion{} $\Pi$ of  $c$ is a  pair of index paths $(\Ipath{\Imath}{\ell},\Ipath{\Jmath}{k})$ whose \define{sides} $c\Ipath{\Imath}{\ell}$ and $c\Ipath{\Jmath}{k}$ have strictly positive lengths, are homotopic, and whose \define{tips}
$(i,j)$ and $(i+\ell,j+k)$ are combinatorial crossings for $\Pi$. See Figure~\ref{fig:bigonAndMonogon}.
\begin{figure}[h]
  \centering
  \includesvg[.8\linewidth]{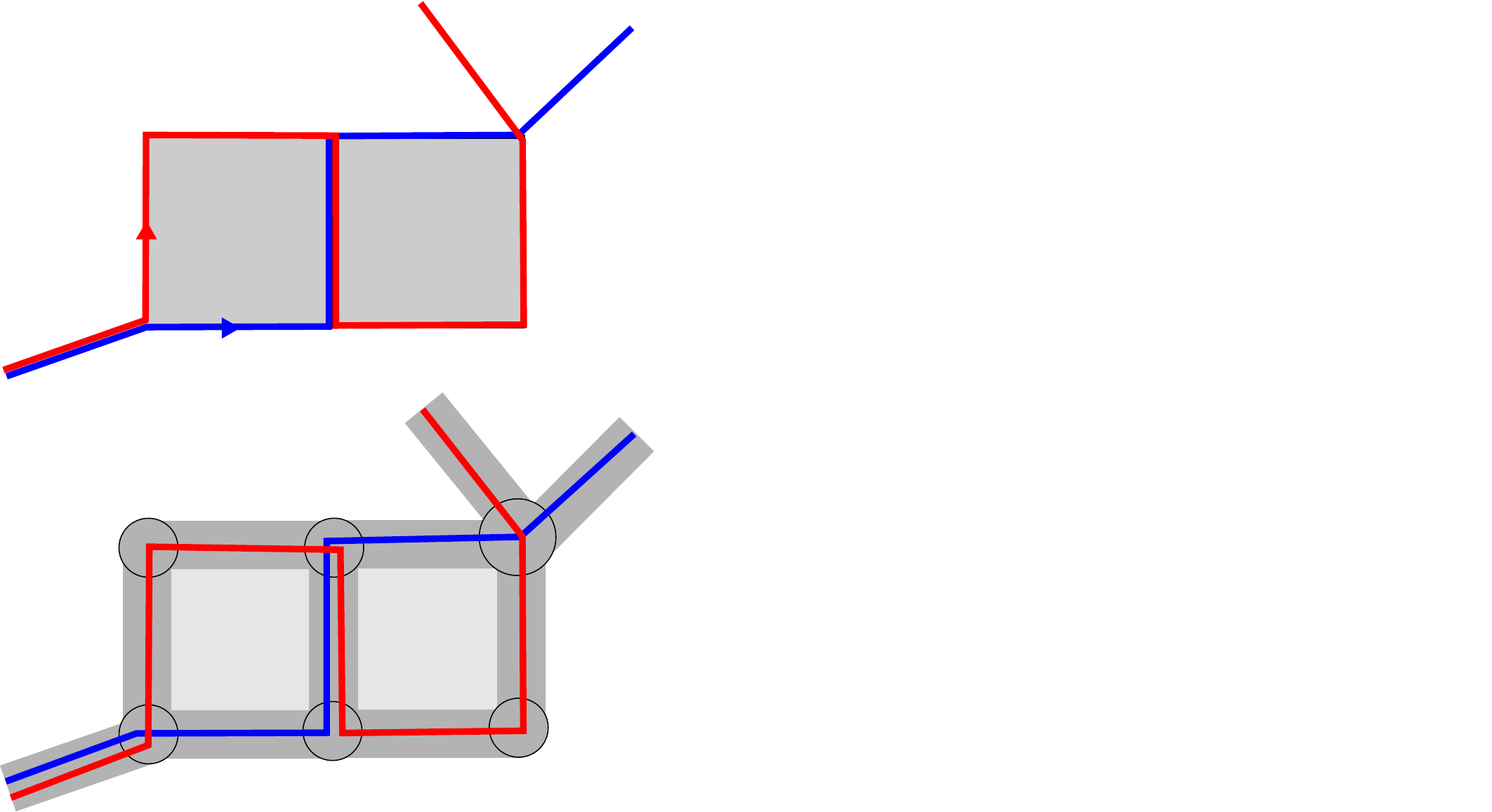}
  \caption{Left, a pair of paths in system of quads and a combinatorial perturbation with three bigons $(\Ipath{\Imath}{2},\Ipath{\Jmath}{2})$, $(\Ipath{\Imath+2}{1},\Ipath{\Jmath+2}{3})$ and $(\Ipath{\Imath}{3},\Ipath{\Jmath}{5})$. Right, a path and a combinatorial perturbation with a monogon of length 6.}
  \label{fig:bigonAndMonogon}
\end{figure}
A \define{monogon} of $\Pi$ is an index path $\Ipath{\Imath}{\ell}$ of strictly positive length such that $(\Ibar,\Overline{\Imath+\ell})$ is a combinatorial crossing and the image path $c\Ipath{\Imath}{\ell}$ is contractible.
\begin{prop}\label{prop:bigon}
  A combinatorial \immersion{} of a primitive curve has excess self-crossing if and only if it contains a bigon or a monogon.
\end{prop}
\begin{proof}
We just need to prove the existence of a bigon or monogon for a continuous realization $\gamma: \R/\Z\to S$ of the combinatorial \immersion{}. This bigon or monogon corresponds to a combinatorial bigon or monogon as claimed in the Proposition. To prove the topological counterpart, we first note as in Section~\ref{sec:strategy} that the minimal number of intersections is counted by the lifts of $\gamma$ whose limit points alternate along $\partial\mathbb{D}$. 
It follows that $\gamma$ is minimally crossing when its lifts are pairwise disjoint unless the corresponding limit points alternate in $\partial\mathbb{D}$ and they should intersect exactly once in that case. When $\gamma$ is not minimally crossing  there must exist two lifts intersecting more than necessary, thus forming a bigon in $\D$. The projection of such a bigon on $S$ gives the desired result. Note that a statement similar to the one in the Proposition holds for \animmersion{} of a pair of primitive curves $c$  and $d$ with excess crossing.  In Appendix~\ref{app:bigon}, we provide a purely combinatorial proof of the direct implication of the Proposition in the case of two curves. 
\end{proof}

Suppose that $\Pi$ is a combinatorial \immersion{} of a primitive geodesic $c$. It cannot have a monogon by Corollary~\ref{cor:no-monogon}. Hence, according to Proposition~\ref{prop:bigon} the \immersion{} $\Pi$ is minimally crossing unless it contains a bigon. One could eliminate such a bigon by swapping its sides. However, it may happen that the index paths of the bigon share some common part, which would make the swapping ambiguous.  In agreement with the terminology of Hass and Scott~\cite{hs-ics-85}, a bigon $(\Ipath{\Imath}{\ell},\Ipath{\Jmath}{k})$ is said \define{singular} if 
\begin{enumerate}
\item its two index paths have disjoint interiors, i.e. they do not share any arc occurrence;
\item when $\Jbar=\Overline{\Imath+\ell}$ the following arc occurrences
\[\Bocc{\Imath}, [j,j-1],\Focc{\Imath},[\Jmath+k,\Jmath+k+1],[j,j+1],[\Jmath+k,\Jmath+k-1]
\]
do not appear in this order or its opposite in the circular ordering induced by $\Pi$ at $c(j)$;
\item when $\Ibar=\Overline{\Jmath+k}$ the following arc occurrences
\[\Bocc{\Imath}, \Bocc{\Jmath},\Bocc{\Imath+\ell},\Focc{\Imath},\Focc{\Imath+\ell},\Focc{\Jmath}
\]
do not appear in this order or its opposite in the circular ordering  induced by $\Pi$ at $c(i)$.
\end{enumerate}
Figure~\ref{fig:singular-def} shows that when a bigon does not satisfy condition 2 or 3, swapping its sides does not reduce the number of crossings.
\begin{figure}
  \centering
  \includesvg[\linewidth]{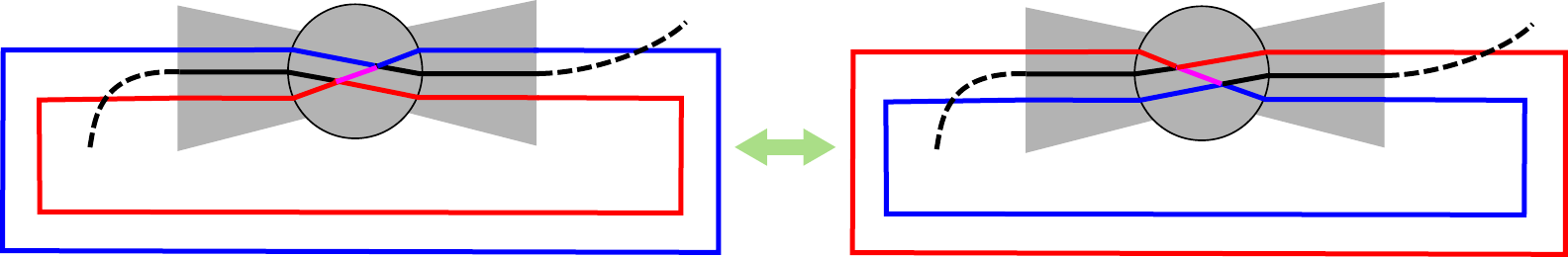}
  \caption{Right, the realization of the bigon $(\Ipath{\Imath}{\ell},\Ipath{\Jmath}{k})$ where $\Jbar=\Overline{\Imath+\ell}$ and $\Bocc{\Imath}$ is in-between $\Bocc{\Imath+\ell}$ and $\Bocc{\Jmath+k}$, and $\Focc{\Jmath+k}$ is  in-between $\Focc{\Imath}$ and $\Focc{\Jmath}$.  The small purple part is at the same time the beginning of the red side of the bigon and the end of the blue side. Swapping this bigon does not reduce the number of crossings.}
  \label{fig:singular-def}
\end{figure}
Note that when $c$ is geodesic and $k=-\ell$ there cannot be any identification between $\{\Ibar, \Overline{\Imath+\ell}\}$ and $\{\Jbar, \Overline{\Jmath-\ell}\}$. For instance, $\Ibar=\Jbar$ implies $c\Ipath{\Imath}{\ell}\sim c\Ipath{\Jmath}{-\ell}=c\Ipath{\Imath}{-\ell}$ so that $c\Ipath{\Imath-\ell}{2\ell}\sim 1$, while $\Jbar=\Overline{\Imath+\ell}$ implies $c\Ipath{\Imath}{\ell} \sim c\Ipath{\Jmath}{-\ell} = c\inv\Ipath{\Imath}{\ell}$ so that $c^2\Ipath{\Imath}{\ell}\sim 1$. In both cases $c$ would have a monogon which would contradict Corollary~\ref{cor:no-monogon}.

When $c$ is primitive and geodesic, we cannot have $\Jmath=\Imath+\ell$ and $\Imath=\Jmath+\ell$ at the same time. Otherwise, we must have $k=\ell$ by the preceding paragraph, and $c$ would be a circular shift of $c\Ipath{\Imath}{\ell}\cdot c\Ipath{\Jmath}{\ell}$ and thus homotopic to a square. We also remark that when one of the last two conditions in the definition is not satisfied, the bigon maps to a non-singular bigon in the continuous realization of $\Pi$ as in Lemma~\ref{lem:geometric-vs-combinatorial}. See Figure~\ref{fig:singular-def}.

When the bigon is singular we can swap its two sides by exchanging the two arc occurrences $[\Overline{\Imath+p},\Overline{\Imath+p+1}]$ and $[\Overline{\Jmath+\varepsilon p},\Overline{\Imath+\varepsilon(p+1)}]$ in the left-to-right order associated to their supporting arc, for $0\leq p< \ell$ and $k=\varepsilon\ell$. 
\begin{lem}\label{lem:swap}
   Swapping the two sides of a singular bigon of \animmersion{} of a geodesic primitive curve decreases its number of crossings by at least two. 
\end{lem}
This is relatively obvious if one considers a continuous realization of the \immersion{}, performs the swapping and comes back to a combinatorial \immersion{} as in the proof of Lemma~\ref{lem:geometric-vs-combinatorial}. We nonetheless provide a purely combinatorial proof in Appendix~\ref{app:swap}. 

Hence, by swapping singular bigons we may decrease the number of crossings until there is no more singular bigons. It follows from the next theorem that the resulting \immersion{} has no excess crossing.
\begin{thm}[Hass and Scott~{\cite[Th. 4.2]{hs-ics-85}}]\label{th:singular-bigon}
   \Animmersion{} of a primitive geodesic curve has excess crossing if and only if it contains a singular bigon.
\end{thm}
\begin{proof}
We realize $\Pi$ by a continuous curve $\gamma$ with the same construction as in Lemma~\ref{lem:geometric-vs-combinatorial}. By Theorem 4.2 in~\cite{hs-ics-85} the curve $\gamma$ has a singular bigon. In turn this bigon corresponds to a singular bigon in $\Pi$. 
\end{proof}
In the next section we describe how to detect bigons in practice.

\subsection{Proof of Theorem~\ref{th:compute-immersion}}\label{subsec:finding-bigons}
If \animmersion{} $\Pi$ of a primitive geodesic $c$ has a bigon then, by the same arguments as for a crossing pair of index paths in Section~\ref{subsec:thick-double-paths}, the two sides label a disk diagram and their vertices can be put in 1-1 correspondence such that corresponding vertices stay at distance zero or two and bound a same quad in the disk diagram.
Hence, we can look for bigons among the maximal partial diagrams of $c$. This leads to a quadratic algorithm for finding a bigon of $\Pi$, or deciding that $\Pi$ has no bigon.

\begin{proof}[Proof of Theorem~\ref{th:compute-immersion}]
  By Theorem~\ref{th:canonical} we can assume that $c$ is geodesic. We first consider the case where $c$ is primitive. Let $\Pi$ be a randomly chosen \immersion{} of $c$. We store the induced vertex orderings of the arc occurrences in arrays so that we can test if a double point is a crossing in constant time using pointer arithmetic. By Theorem~\ref{th:singular-bigon}, if $\Pi$ has excess crossing it has a singular bigon. This singular bigon defines a thick double path that must appear in some maximal partial diagram of $\Pi$ as two boundary paths with common extremities.
  Hence, in order to find a singular bigon we just need to scan in each maximal partial diagram the index pairs that are combinatorial crossings and test whether the two sides between any two consecutive such index pairs define a singular bigon.  According to the proof of Corollary~\ref{cor:index-pairs} in Section~\ref{sec:counting}, the cost for scanning all the maximal partial diagrams is $O(\ell^2)$. By the very definition of a singular bigon, testing whether a pair of index paths define a singular bigon can be performed in constant time. It follows that we can compute a singular bigon, if any, in $O(\ell^2)$ time. Once a singular bigon is found we swap its sides in $O(\ell)$ time. This preserves the geodesic character and, by Lemma~\ref{lem:swap}, the number of crossings is reduced by at least two.
Since $\Pi$ may have $O(\ell^2)$ excess crossings, we need to repeat the above procedure $O(\ell^2)$ times before the curve has no more singular bigon, hence is minimally crossing. This concludes the proof of the theorem in the case of primitive curves. When $c=d^p$, with $d$ primitive, we  first find a minimally crossing \immersion{} for $d$ by the above procedure. We further traverse the \immersion{} $p$ times duplicating $d$ as many times. As we start each traversal we connect the last arc occurrence of the previously traversed copy with the first occurrence of the next copy. We continue this copy by duplicating each traversed arc occurrence to its right. It is easily seen that the number of crossings of the final \immersion{} of $c$, after connecting the last traversed arc with the first one, satisfies the formula in Proposition~\ref{prop:non-primitive-formulas}. This concludes the proof for non-primitive curves.
\end{proof}
\section{Detecting simple curves}\label{sec:simple-curves}
As noticed in Section~\ref{sec:strategy}, geodesics on a hyperbolic surface possess a minimal number of self-intersections in their homotopy class but this is no longer true for combinatorial geodesics. In particular, it is easy to construct examples of combinatorial geodesics homotopic to simple curves that nonetheless exhibit crossings. This remains true even if we assume the geodesic to be in canonical form. In Section~\ref{subsec:unzip}, we present a simple and efficient algorithm to recognize curves homotopic to simple curves. Given a combinatorial curve, this algorithm constructs a combinatorial perturbation of some homotopic geodesic. The aim is to obtain, if possible, a perturbation without crossings. After computing the canonical form of the given curve, the construction starts with the perturbation of an empty path and adds the arcs of the canonical geodesic incrementally, each time extending the path perturbation. In order to ensure at each step that this perturbation has no crossings we may need to transform the canonical geodesic into another homotopic geodesic. Those transformations amounts to unzipping the potential bigons of the canonical geodesic. Lemma~\ref{lem:zipper-complexity} states that this unzip algorithm can be implemented to run in $O(\ell\log\ell)$ time, where $\ell$ is the length of the input curve. Although the returned geodesic need not be in canonical form, the algorithm strongly requires that the given curve be in canonical form.
The rest of Section~\ref{subsec:unzip} is devoted to the proof of Proposition~\ref{prop:zipper} asserting that the perturbation returned by the unzip algorithm has no combinatorial crossings if and only if the input curve is homotopic to a simple curve. The main argument is to show that if the unzip algorithm fails to produce a perturbation without crossings then one of its crossings gives rise to a pair of lifts with alternating limit points. This certifies that the crossing number of the input curve is at least one.

\subsection{Some preparatory lemmas for Theorem~\lowercase{\ref{th:simple-curve}}}\label{subsec:preparatory-lemmas}
Before studying the recognition of curves homotopic to simple curves, we state some refinements of Lemma~\ref{lem:configurations} and its Corollary that will be useful for the proof of Theorem~\ref{th:simple-curve}. We first highlight a simple property of the system of quads.
\begin{lem}\label{lem:facial-walk}
  The facial walk of a quad cannot contain an arc twice, either with the same or the opposite orientations.
\end{lem}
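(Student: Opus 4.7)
The plan is to split on the orientation pattern of the two occurrences of the arc and rule each case out using a structural property of the system of quads already recorded in the excerpt.

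For the same-orientation case I would invoke orientability directly. On an orientable cellular embedding, the face-traversal procedure of the rotation system visits each directed arc exactly once, so every arc appears in the facial walk of a unique face and occurs in it only once. Hence no arc can appear twice with the same orientation in the walk of any face, let alone in a quad.

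For the opposite-orientation case, suppose for contradiction that both $a$ and $a^{-1}$ appear in the length-four facial walk $b_1 b_2 b_3 b_4$ of a quad $Q$. Because the radial graph of the system of quads is bipartite with parts $\{v\}$ and $\{w\}$, every arc of $\Sigma$ runs between $v$ and $w$, and so the vertices traversed along the cyclic walk alternate between $v$ and $w$, and so do the arc orientations. Since $a$ and $a^{-1}$ point in opposite directions they must occupy positions of different parity in $b_1 b_2 b_3 b_4$; after a cyclic shift we may assume $a = b_1$ and $a^{-1}\in\{b_2,b_4\}$. In either subcase $a$ and $a^{-1}$ are cyclically consecutive in the walk, so that walk contains a spur.

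The final step is to turn such a spur into a contradiction with the degree lower bound. If $b_p = a$ and $b_{p+1} = a^{-1}$, then the face corner of $Q$ at the intermediate vertex $y$ is the slot between the outgoing arc $a^{-1}$ produced by entering $y$ along $a$ and the outgoing arc $a^{-1}$ that begins the next step of the walk; the two arcs that bound this corner in the cyclic rotation at $y$ coincide, which is only possible when $a^{-1}$ is the single incident arc at $y$, i.e.\ $\deg y = 1$. But the two vertices of the radial graph are interior and, by the construction recalled in the excerpt under the assumption of negative Euler characteristic, have degree $4g \geq 8$, a contradiction. The only mildly delicate point is the clean translation between a spur in a facial walk and the degree-one statement via the rotation system; once that identification is made, everything else is immediate from the properties of the system of quads.
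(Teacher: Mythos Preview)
Your proof is correct and uses the same three ingredients as the paper---orientability, the bipartiteness of the radial graph, and the degree lower bound---though you organise the opposite-orientation case a bit more economically. The paper splits that case into ``consecutive'' (giving a degree-one vertex) and ``non-consecutive'' (giving a cylinder bounded by loop edges, impossible in a bipartite graph); you instead use bipartiteness up front to force the two occurrences into positions of different parity in the length-four walk, so they are automatically cyclically adjacent, and then run the degree-one argument once. For the same-orientation case you invoke the determinism of the face-traversal procedure, which is a cleaner way to reach the conclusion than the paper's M\"obius-strip argument; both rely on orientability, but yours avoids the geometric identification step.
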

\begin{proof}
  If the facial walk of a quad contains two occurrences of an arc with the same orientation, then their identification creates a M\"obius strip in the system of quads, in contradiction with its orientability. If the facial walk contains two consecutive occurrences of an arc with opposite orientations, then their common endpoint must have degree one in the system of quads. This contradicts the minimal degree (8 on a closed orientable surface) in our system of quads. Finally, if the facial walk contains non consecutive occurrences of an arc with opposite orientations, then their identification creates a cylinder bounded by loop edges. However, the radial graph being bipartite, this cannot occur.
\end{proof}
\begin{lem}\label{lem:no-common-index}
  Let $c$ be a primitive geodesic curve. If a forward index path $\Ipath{i}{k}$ and a backward index path $\Ipath{j}{-k}$ have homotopic image paths $c\Ipath{i}{k}\sim c\Ipath{j}{-k}$, then they cannot share any index.
\end{lem}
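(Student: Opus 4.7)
The plan is to argue by contradiction: from a putative shared index we split both image paths through the corresponding vertex $c(m)$, stitch the two halves together in two complementary ways to produce two closed sub-loops of $c$ that must be freely homotopic; Corollary~\ref{cor:equal-length} then turns this into a length identity that forces $P_2$ to be literally the arc-reverse of $P_1$, and the resulting relation $P_1^2 \sim 1$ will contradict the torsion-freeness of the fundamental group of the underlying surface.

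Concretely I would write $m \equiv i + a \equiv j - b \pmod{|c|}$ with $0 \le a, b \le k$ and factor $P_1 := c\Ipath{i}{k}$ as $\alpha \cdot \beta$ with $\alpha = c\Ipath{i}{a}$ and $\beta = c\Ipath{m}{k-a}$, and $P_2 := c\Ipath{j}{-k}$ as $\gamma \cdot \delta$ with $\gamma = c\Ipath{j}{-b}$ and $\delta = c\Ipath{m}{-(k-b)}$. The homotopy $\alpha\beta \sim \gamma\delta$ rewrites in $\pi_1(\Sigma, c(m))$ as $\alpha^{-1}\gamma \sim \beta\delta^{-1}$, and a direct arc-by-arc check gives the key identities
\[
\gamma\alpha^{-1} \;=\; c\Ipath{i}{a+b}^{-1} \qquad\text{and}\qquad \delta^{-1}\beta \;=\; c\Ipath{j-k}{2k-a-b}.
\]
Both right-hand sides are closed sub-loops of $c$ because $c(i) = c(j)$ and $c(j-k) = c(i+k)$; hence $\alpha^{-1}\gamma$ and $\beta\delta^{-1}$ are conjugates of these two sub-loops, which are therefore freely homotopic (up to inversion).

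Both sub-loops are geodesic by Remark~\ref{rem:sub-geodesic} and, when of positive length, non-contractible by Corollary~\ref{cor:no-monogon}, so Corollary~\ref{cor:equal-length} forces $a + b = 2k - (a+b)$, i.e.\ $a + b = k$. Substituting back, $j \equiv i + k \pmod{|c|}$, which makes $P_2$ the arc-by-arc inverse of $P_1$ and turns $P_1$ itself into a closed loop at $c(i) = c(i+k)$. The homotopy $P_1 \sim P_2 = P_1^{-1}$ then reads $P_1^2 \sim 1$ in $\pi_1(\Sigma, c(i))$; but $P_1$ is non-trivial by Corollary~\ref{cor:no-monogon}, so it would be a torsion element of order two, contradicting the torsion-freeness of the fundamental group of an orientable surface of negative Euler characteristic.

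The main obstacle is disposing of the degenerate boundary values $a + b \in \{0, 2k\}$ where one of the two sub-loops has length zero and the length-equality step above does not apply. When $a + b = 0$ we have $a = b = 0$ and $i \equiv j \pmod{|c|}$; lifting both paths to the universal cover along the simple bi-infinite lift of $c^\infty$ (simplicity following from primitivity together with Corollary~\ref{cor:no-monogon}) forces $\tilde c(i+k) = \tilde c(i-k)$ and hence $k = 0$. When $a + b = 2k$ we instead have $a = b = k$ and $P_1 \cdot P_2^{-1} = c\Ipath{i}{2k}$ must be contractible, which directly contradicts Corollary~\ref{cor:no-monogon}. With these degenerate values dealt with, the main length-equality argument completes the proof.
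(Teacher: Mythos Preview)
Your overall plan is attractive, but the key length-equality step does not go through. You claim that Corollary~\ref{cor:equal-length} forces $a+b = 2k-(a+b)$ because the two sub-loops $c\Ipath{i}{a+b}$ and $c\Ipath{j-k}{2k-a-b}$ are geodesic by Remark~\ref{rem:sub-geodesic}. That remark only guarantees they are geodesic \emph{as paths}. To invoke Corollary~\ref{cor:equal-length} you must treat them as freely homotopic \emph{closed} curves (their basepoints $c(i)$ and $c(i+k)$ differ, so there is no rel-endpoint homotopy available), and for that you need them to be geodesic closed curves. But closing up $c\Ipath{i}{a+b}$ creates a new turn between the last arc $c[j-1,j]$ and the first arc $c[i,i+1]$, and nothing prevents this turn from being $0$ (a spur) or from starting a bracket; likewise for the other loop. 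Without geodesicity of the closed curves, all you know is that the common minimal length is at most $\min(a+b,\,2k-a-b)$, which does not yield $a+b=k$. The same obstruction blocks the alternative of applying the corollary directly to the based loops $\alpha^{-1}\gamma$ and $\beta\delta^{-1}$ at $c(m)$: each is a concatenation of two geodesic sub-arcs meeting at $c(i)=c(j)$ with an uncontrolled turn.

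For comparison, the paper does not attempt a length argument at all. It takes the shared index $i+r=j-t$ and passes to the \emph{midpoint} $m=i+\tfrac{r+t}{2}=j-\tfrac{r+t}{2}$, so that $(m,m)$ is an interior index pair of the disk diagram for $c\Ipath{i}{k}\sim c\Ipath{j}{-k}$. By Theorem~\ref{th:geodesic} this pair is realised either at a single diagram vertex or at diagonally opposite corners of a quad. The first option makes the arc $c[i,j]$ label a closed path in the diagram, hence contractible, contradicting Corollary~\ref{cor:no-monogon}. In the second option the single arc $c[m,m+1]$ labels edges on \emph{both} sides of the staircase with opposite orientations, and a short local inspection produces a M\"obius strip in the system of quads, contradicting orientability. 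That argument sidesteps entirely the question of whether any auxiliary sub-loop is a geodesic closed curve.
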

\begin{proof}
  We remark that we cannot have $i=j$ for otherwise $c\Ipath{j-k}{2k} = (c\Ipath{j}{-k})\inv.c\Ipath{i}{k} = 1$ in contradiction with Corollary~\ref{cor:no-monogon}. Likewise, we cannot have $i+k=j-k$. By way of contradiction, suppose that the index paths $\Ipath{i}{k}$ and $\Ipath{j}{-k}$ have a common index $i+r=j-t$ for some integers $0\leq r,t\leq k$. By the previous remark we have $0<\frac{r+t}{2}<k$. It follows that $(m,m)$ is an interior index pair of $(\Ipath{i}{k}, \Ipath{j}{-k})$, where
$m=i+\frac{r+t}{2} = j-\frac{r+t}{2}$ (recall that $i$ and $j$ have the same parity). Consider a disk diagram for $c\Ipath{i}{k}\sim c\Ipath{j}{-k}$. The index pair $(m,m)$ cannot correspond to the same vertex in the diagram since the path $c[i,j]=c[i,m].c[m,j]$ would then label a closed path in the diagram, hence be contractible. This would again contradict  Corollary~\ref{cor:no-monogon}. It ensues that $(m,m)$ corresponds to opposite vertices in a quad of the diagram. This quad is part of a staircase where the arc $c[m,m+1]$ labels two arcs on either sides of the partial diagram. Figure~\ref{fig:m-m-opposite-1} depicts the four possible configurations. 
\begin{figure}
  \centering
  \includesvg[\linewidth]{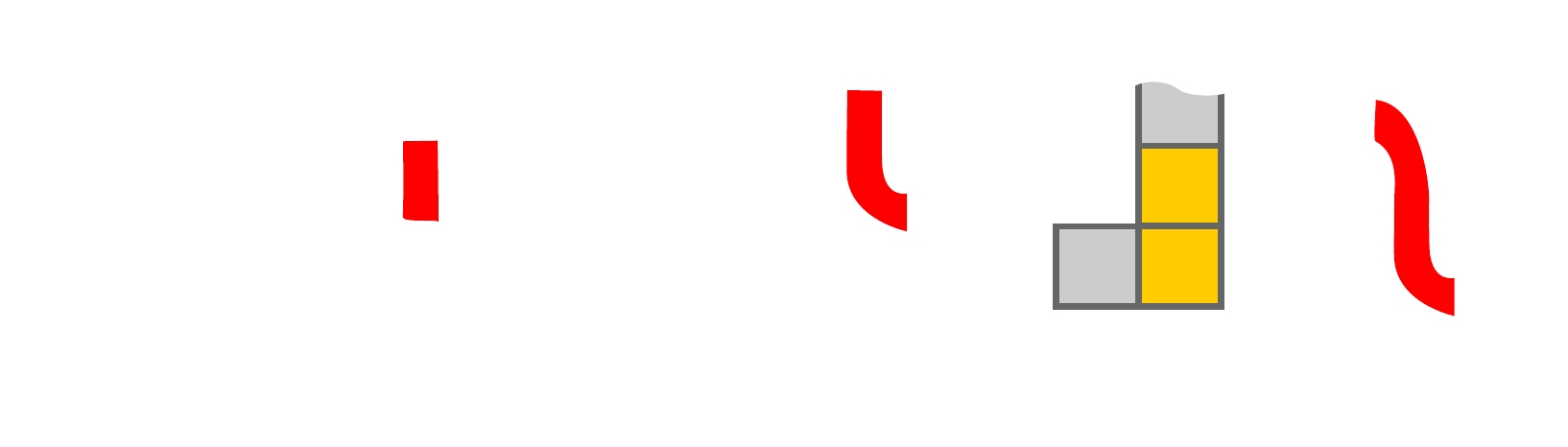}
  \caption{When the index pair $(m,m)$ corresponds to distinct vertices (the thick $m$ dots) in the diagram, the arc $c[m,m+1]$ labels two arcs that may be in one of four possible configurations. In each configuration the yellow quads must be bounded twice by the same arc, contradicting Lemma~\ref{lem:facial-walk}. }
  \label{fig:m-m-opposite-1}
\end{figure}
In each case we infer the existence of a quad bounded twice by the same arc, in contradiction with the previous lemma. 
\end{proof}
\begin{lem}\label{lem:ell-plus-un}
  Two distinct index paths of a nontrivial primitive geodesic curve $c$ having homotopic image paths have length distinct from $|c|$ and at most $|c|+1$.
\end{lem}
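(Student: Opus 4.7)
My plan is to split the argument into two parts: first proving $k\neq|c|$, and then proving $k\leq|c|+1$. Throughout, let the two index paths be $\Ipath{i}{k}_c$ and $\Ipath{j}{\varepsilon k}_c$ with $\varepsilon\in\{+1,-1\}$, and write $\gamma_1,\gamma_2$ for their image paths, which share endpoints by hypothesis.

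For $k\neq|c|$, I distinguish on $\varepsilon$. When $\varepsilon=+1$, the two image paths are based loops $c_i$ and $c_j$ at the common vertex $v=c(i)=c(j)$. Decomposing $c_i=AB$ and $c_j=BA$ with $A=c\Ipath{i}{j-i}$ and $B=c\Ipath{j}{|c|-(j-i)}$ (both loops at $v$), the based homotopy $c_i\sim c_j$ in $\pi_1(\Sigma,v)$ is equivalent to $A$ and $c_i$ commuting. Since $\Sigma$ has negative Euler characteristic, $\pi_1(\Sigma,v)$ is torsion-free with cyclic centralizers, so the primitivity of $c$ forces $A$ to be a power of $c_i$. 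But $A$ is a geodesic sub-path of $c$ of length strictly less than $|c|$, whereas $c_i^n$ has minimum length $|n|\,|c|$ in its free homotopy class (by Remark~\ref{rem:geodesic-power}); the only compatible exponent is zero, making $A$ contractible and contradicting Corollary~\ref{cor:no-monogon}. When $\varepsilon=-1$, the analogous decomposition gives $c_i\cdot c_j\sim 1$ in $\pi_1(v)$, so $c^2$ is null-homotopic freely; torsion-freeness of the surface group then forces $c$ itself to be trivial, a contradiction.

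For $k\leq|c|+1$, I argue by contradiction, assuming $k\geq|c|+2$. Let $\Delta$ be the disk diagram for $\gamma_1\sim\gamma_2$ provided by Theorem~\ref{th:geodesic}, structured as a sequence of paths and staircases, and write $V_1(t),V_2(t)$ for the vertices of $\Delta$ at position $t$ on $\gamma_1,\gamma_2$ respectively. A preliminary observation is that each walk $t\mapsto V_1(t)$ and $t\mapsto V_2(t)$ is injective: a repeat would enclose a closed loop in the disk $\Delta$ whose label is a non-trivial sub-path of the geodesic $c$, contradicting Corollary~\ref{cor:no-monogon}. Now for every $t\in[0,k-|c|]$, the positions $t$ and $t+|c|$ share the same index pair. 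Combining this with the $k=|c|$ case of part~(a), if $V_1(t)=V_2(t)$ and $V_1(t+|c|)=V_2(t+|c|)$ simultaneously, then the sub-paths $c_{i+t}$ and $c_{j+t}$ would be homotopic rel endpoints, contradicting part~(a). Applied at $t=0$ and $t=k-|c|$, where the endpoint tips are shared, this forces positions $|c|$ and $k-|c|$ into the interior of staircases, so (by Lemma~\ref{lem:configurations}) into one of the four ``diagonal in a quad'' configurations.

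The final and hardest step is to convert these structural constraints into a genuine combinatorial contradiction. My plan is to examine the quad $q$ of $\Delta$ whose diagonal corners are $V_1(|c|)$ and $V_2(|c|)$: both project to $v=c(i)$ of $\Sigma$, and the four edges of $q$ are pinned down by the arcs along $\gamma_1,\gamma_2$ incident at this position, namely $c[i,i\pm 1]$ and $c[j,j\pm 1]$. Tracing these arcs through the bipartite system of quads and combining with Lemma~\ref{lem:facial-walk}, I expect a Möbius-strip-style contradiction in the spirit of Lemma~\ref{lem:no-common-index}, yielding an arc that appears twice in the facial walk of some quad. The main obstacle is the case analysis over the four possible zigzag orientations of the staircase at position $|c|$ combined with the location of the second constrained staircase at position $k-|c|$.
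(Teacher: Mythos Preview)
Your argument for $k\neq|c|$ in the forward--forward case is correct and is essentially the paper's argument, just unpacked: the paper says in one line that ``$c$ would be homotopic to a conjugate of itself by the shorter curve $c[i,j]$, in contradiction with the primitivity of $c$,'' and your commutator/centralizer reasoning is exactly what makes that line work.

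For the backward case ($\varepsilon=-1$) there are two problems. First, the step ``$c_i\cdot c_j\sim 1$, so $c^2$ is null-homotopic freely'' is not right: $c_i$ and $c_j$ are conjugate in $\pi_1(v)$, not equal, so $c_ic_j$ need not be freely homotopic to $c^2$. What you actually get from $c_i=c_j^{-1}$ is that $c$ is freely homotopic to $c^{-1}$, which is what really contradicts orientability. Second, and more to the point, you are working much too hard here: Lemma~\ref{lem:no-common-index} kills the entire $\varepsilon=-1$ case at once, for both parts of the statement. If $k\geq|c|$ the forward index path already hits every index, so it shares one with the backward path, contradicting that lemma. The paper's proof opens with exactly this observation and then restricts to $\varepsilon=+1$ for the remainder.

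For the bound $k\leq|c|+1$ in the forward--forward case, your setup is sound. Your route to ``position $|c|$ lies in the interior of a staircase'' via part~(a) is valid, though the paper gets there more directly from Lemma~\ref{lem:configurations}: positions $0$ and $|c|$ carry the same index pair, position $0$ is in the ``coincide'' configuration, so position $|c|$ must be in one of the four quad configurations. The real gap is that you stop here. You say you ``expect a M\"obius-strip-style contradiction'' and flag the case analysis as the main obstacle, but this case analysis \emph{is} the proof. The paper carries it out explicitly: five possible configurations at the start of the diagram against eight possible configurations near position $|c|$ (Figure~\ref{fig:lengthbigon}), split further by whether the initial and final staircases are consistently oriented. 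The contradictions that arise are not all M\"obius strips; most cases produce a vertex of degree at most~$5$ or a quad whose facial walk repeats an arc (Lemma~\ref{lem:facial-walk}). Without this analysis --- or some substitute for it --- the proof is incomplete.
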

\begin{proof}
When the index paths have opposite orientations, the previous lemma directly implies the result. 
Put $\ell=|c|$ and let $\Ipath{i}{k},\Ipath{j}{k}$ be two index paths such that 
$c\Ipath{i}{k}\sim c\Ipath{j}{k}$. Clearly, $k\neq \ell$ since otherwise $c$ would be homotopic to a conjugate of itself by the shorter curve $c[i,j]$, in contradiction with the primitivity of $c$.
For the sake of a contradiction, suppose that $k>\ell+1$. Consider a diagram $\Delta$ with sides $\Delta_R$ and $\Delta_L$ for the homotopic paths $c\Ipath{i}{k}\sim c\Ipath{j}{k}$. By Lemma~\ref{lem:configurations} the vertices $\Delta_R(i+\ell)$ and $\Delta_L(j+\ell)$ must be diagonally opposite in a quad of $\Delta$. Using that $\Delta_R\Ipath{i}{2}$ and $\Delta_R\Ipath{i+\ell}{2}$ are labelled by the same edges and similarly for $\Delta_L\Ipath{j}{2}$ and $\Delta_L\Ipath{j+\ell}{2}$ we easily deduce that the system of quads has a quad with two occurrences of a same arc in its facial walk, or has a vertex of degree at most $5$, or is non-orientable. This would contradict Lemma~\ref{lem:facial-walk} or the hypotheses on the system of quads. In details, Figure~\ref{fig:lengthbigon} depicts the five possible configurations (a,b,c,d,e) for $\Delta_R\Ipath{i}{2}$  and $\Delta_L\Ipath{j}{2}$ in the diagram $\Delta$ and the eight possible configurations (A1-2, B1-3, C1-3) for $\Delta_R\Ipath{i+\ell}{2}$  and $\Delta_L\Ipath{j+\ell}{2}$.
  \begin{figure}
    \centering
    \includesvg[\textwidth]{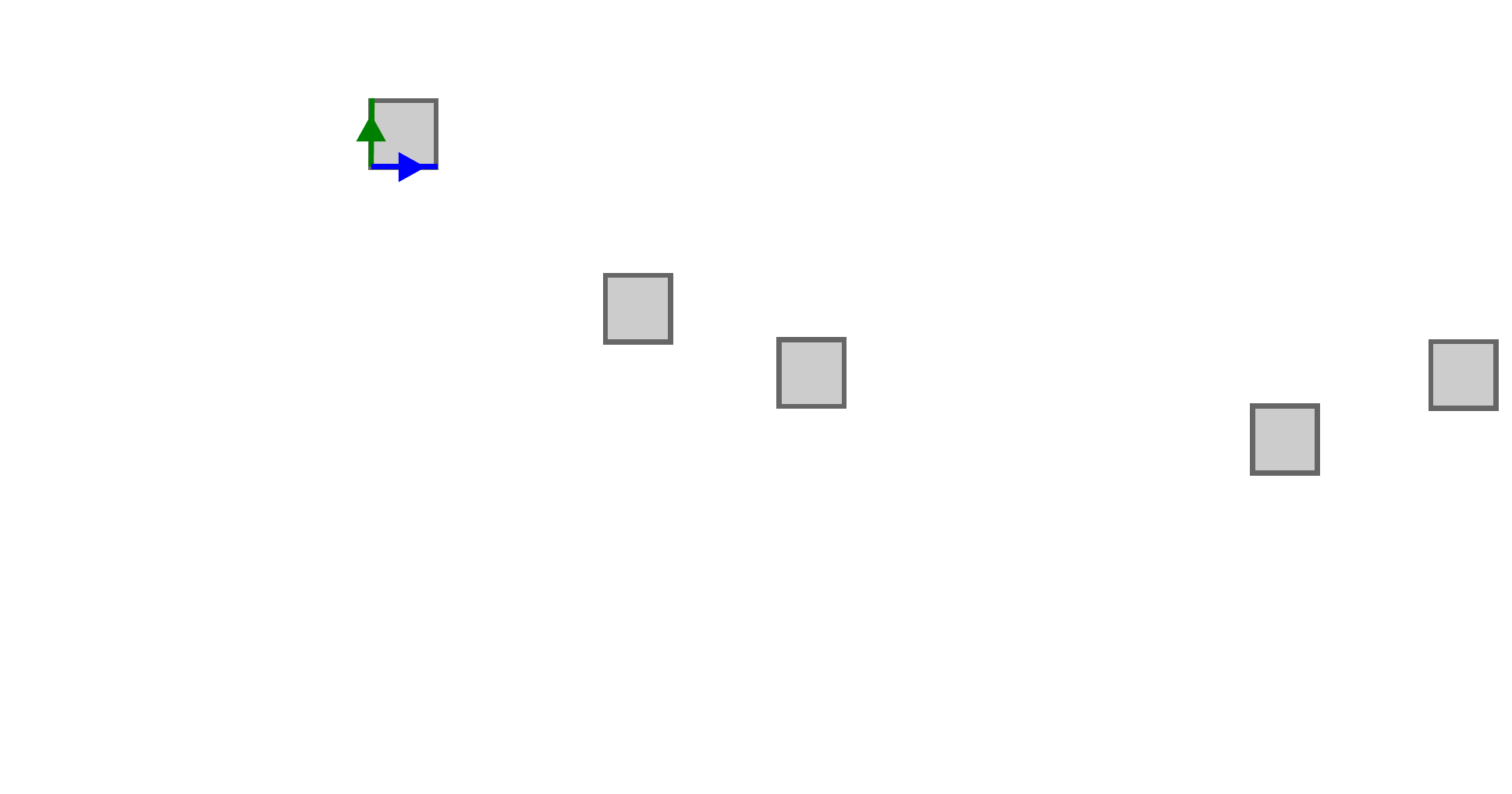}
    \caption{Upper row, the five possible configurations at the beginning of the disk diagram for $c\Ipath{i}{k}\sim c\Ipath{j}{k}$. Middle row, the eight possible configurations in the neighborhood of the index pair $(i+\ell,j+\ell)$ in this diagram. Lower row, case A2 splits into two subcases as we fix the relative orientations of the two quads.}
    \label{fig:lengthbigon}
  \end{figure}
 We remark that cases C1-3 are symmetric to cases B1-3, so that we only need to consider the five configurations A1-2, B1-3. We denote by $(\text{x,Y}) \in \{\text{a,b,c,d,e}\}\times\{\text{A1-2, B1-3}\}$ the conjunction of configurations $x$ and $Y$ in $\Delta$. The quads in configurations a,b,c,d are part of a staircase $\sigma_0$ in $\Delta$. Similarly, the vertices $\Delta_R(i+\ell)$ and $\Delta_L(j+\ell)$
are contained in a staircase $\sigma_\ell$ in $\Delta$ (possibly the same). Being drawn in the plane the staircases of $\Delta$ inherit an orientation from the plane. This orientation is said to be positive or negative according to whether or not it is consistent with some default orientation of the quad system. 
\begin{itemize}
\item 
If the orientations of the two staircases $\sigma_0$  and $\sigma_\ell$ have the same sign, then the face to the right of $\Delta_L[j,j+1]$ has the same facial walk as the face to the right of $\Delta_L(j+\ell, j+\ell+1)$.  The induced identifications imply that for $(\text{x,Y})\in \{\text{a,b,c,d}\}\times\{\text{A1-2, B1-3}\}\cup \{\text{e}\}\times\{\text{A1-2,B1}\}$ one quad is bounded twice by the same arc (with or without the same orientation) in contradiction with Lemma~\ref{lem:facial-walk}. If $(\text{x,Y}) \in \{\text{e}\}\times\{\text{B2-3}\}$, we easily deduce that $c(i+1)=c(j+1)$ has degree four in the system of quads, again a contradiction. 
\item
When the orientations of $\sigma_0$  and $\sigma_\ell$ have opposite signs the faces to the right of $\Delta_L[j,j+1]$ and to the right of $\Delta_L(j+\ell, j+\ell+1)$ (resp. to the left of $\Delta_R[i,i+1]$ and  to the left $\Delta_R[i+\ell,i++\ell+1]$) correspond to the two faces incident to $c[j,j+1]$ (resp. $c[i,i+1]$). We further split case $A2$ into two variants A$2'$ and A$2''$ according to whether the two quads (see Figure~\ref{fig:lengthbigon}) have orientations with the same sign or not, respectively. 
By the induced identifications we derive the following forbidden situations: two occurrences of a same arc in a quad for $(\text{x,Y}) \in \{\text{(a,A1),(e,B1)}\} \cup\{\text{d,e}\}\times\{\text{A1,A}2'\}\cup \{\text{a-e}\}\times \{\text{A}2''\}$, a degree two vertex (namely $c(j+1)$) for $(\text{x,Y}) \in \{\text{a,c}\}\times\{\text{B1,B3}\}$, a degree three vertex for $(\text{x,Y}) \in \{\text{b}\}\times\{\text{B1-3}\}\cup \{\text{a-c}\}\times\{\text{B2}\}$, a degree four vertex for $(\text{x,Y}) \in \{\text{b,c}\}\times\{\text{A1}\}\cup \{\text{e}\}\times\{\text{B2-3}\}\cup \{\text{(a,A$2'$),(d,B1)}\}$ and a degree five vertex for $(\text{x,Y}) \in \{\text{b,c}\}\times\{\text{A}2'\}\cup \{\text{d}\}\times\{\text{B2-3}\}$.
\end{itemize}
\end{proof}
When the orientations of the staircases $\sigma_0$  and $\sigma_\ell$ in the above proof have the same sign it is sufficient to use that $\Delta_R\Ipath{i}{1}$ and $\Delta_R\Ipath{i+\ell}{1}$ are labelled by the same edge, and similarly for $\Delta_L\Ipath{j}{1}$ and $\Delta_L\Ipath{j+\ell}{1}$, in order to reach a contradiction for cases $A1$ and $A2$. In other words, the two  homotopic paths
$c\Ipath{i}{k}\sim c\Ipath{j}{k}$ cannot have length $\ell+1$. This leads to the following refinement.
\begin{lem}\label{lem:short-bigon}
  A bigon $(\Ipath{i}{k}, \Ipath{j}{k})$ of \animmersion{} of a primitive geodesic curve $c$ without intermediate crossings (i.e., no index pair $(i+r, j+r)$, $1<r<k$, is a combinatorial crossing) has length $k<|c|$.
\end{lem}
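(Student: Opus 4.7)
The plan is short: apply Lemma~\ref{lem:ell-plus-un} to narrow down the possible lengths, then use the no-intermediate-crossings hypothesis to exclude the one remaining bad value $k=|c|+1$ by a simple observation on $\Z/|c|\Z$.

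First I would note that the two sides of the bigon are homotopic geodesic paths by definition, and that the index paths $\Ipath{i}{k}$ and $\Ipath{j}{k}$ are distinct: since $(i,j)$ is a combinatorial crossing we have $i\neq j$ in $\Z/|c|\Z$. Lemma~\ref{lem:ell-plus-un} therefore applies and yields $k\leq |c|+1$ together with $k\neq |c|$. Hence either $k<|c|$, which is what we want, or $k=|c|+1$.

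To rule out $k=|c|+1$, I would instantiate the no-intermediate-crossings hypothesis at $r=|c|$. We may assume $|c|\geq 2$, since a non-trivial primitive geodesic of length $1$ admits no double point with distinct indices and thus no bigon at all. Then $1 < r < k = |c|+1$, so by hypothesis the index pair $(i+r, j+r)$ is not a combinatorial crossing. But in $\Z/|c|\Z\times\Z/|c|\Z$ we have the equality $(i+|c|, j+|c|) = (i, j)$, and this pair is precisely the starting tip of the bigon, hence a combinatorial crossing by the very definition of a bigon. This contradiction forces $k\neq |c|+1$, and therefore $k<|c|$.

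The main obstacle is not in this final step but was already overcome in Lemma~\ref{lem:ell-plus-un}, whose proof handled the detailed case analysis of staircase orientations and diagram configurations near the endpoints of a homotopy between two length-$(|c|+1)$ subpaths. Once that structural result is in hand, the short-bigon bound reduces to the pigeonhole observation above, which only uses that indices live in $\Z/|c|\Z$ and that the tips of a bigon are crossings.
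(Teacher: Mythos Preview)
Your proof is correct. The paper does not spell out a full proof of this lemma; it precedes the statement with the remark that, in the consistent-orientation subcase of the analysis inside Lemma~\ref{lem:ell-plus-un}, the contradiction for configurations $A1$ and $A2$ already follows from the first arc past index $|c|$ (so that length $|c|+1$ is excluded in that subcase), and then declares that ``this leads to the following refinement.'' Your argument sidesteps any further diagram case analysis: once Lemma~\ref{lem:ell-plus-un} has pinned the length to $k\le|c|+1$ with $k\ne|c|$, you dispose of the residual value $k=|c|+1$ by the purely index-theoretic observation that the intermediate pair at $r=|c|$ coincides in $\Z/|c|\Z\times\Z/|c|\Z$ with the starting tip $(i,j)$, which is a crossing by the definition of a bigon --- directly contradicting the no-intermediate-crossings hypothesis. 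This is a cleaner route: it uses the hypothesis of the lemma in an explicit and transparent way, whereas the paper's hint points back into the staircase-orientation casework of the previous proof.
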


\begin{lem}\label{lem:flat-bigons}
  If two forward index paths of a canonical curve have homotopic image paths then they actually have the same image paths, i.e. they form a double path. In particular, bigons composed of two forward paths must be flat.
\end{lem}
\begin{proof}
  By Theorem~\ref{th:geodesic} the image paths bound a disk diagram composed of paths and staircase. Remark that a staircase with both sides directed forward must have a $\bar{1}$ turn on its left side. Since the curve is canonical we infer that the diagram cannot have any staircase, implying that the two sides coincide. 
\end{proof}

\subsection{The {\zipper} algorithm}\label{subsec:unzip}
We now turn to the original problem of Poincar\'e~\cite[\S 4]{p-ccal-04}, deciding whether a given curve $c$ is homotopic to a simple curve. In the affirmative we know by
Lemma~\ref{lem:geometric-vs-combinatorial} 
that some geodesic homotopic to $c$ must have a (combinatorial) \define{embedding}, i.e. \animmersion{} without crossings. This also results from Lemma~\ref{lem:swap} and Theorem~\ref{th:singular-bigon} by swapping bigons as much as possible. Rather than swapping the sides of a singular bigon as in Lemma~\ref{lem:swap} we choose to switch one side along the other side. This will also decrease the number of crossings if the bigon contains no other interior bigons. This time, however, one side of the bigon is left unchanged and this allows us to enforce a given edge of $c$ to stay fixed as we remove crossings: each time the edge is involved in a bigon switch, we switch the side that does not contain the edge.
This suggests an incremental computation of an embedding in which a larger and larger part of the curve is fixed as we switch bigons: we assume that $c$ is canonical and consider the trivial embedding of its first arc occurrence $[0,1]$. We next insert the successive arc occurrences incrementally to maintain, if possible, an embedding of the path formed by the already inserted arcs. Note that given an embedding of the subpath $c\Ipath{0}{i}$, either this embedding cannot be extended to an embedding of  $c\Ipath{0}{i+1}$, or there is exactly one way to do so.
When inserting the occurrence $[i,i+1]$ we need to compare its left-to-right order with each  already inserted arc occurrence $\beta$ of its supporting arc.
\begin{itemize}
\item 
  If $\beta\neq [0,1]$ we can use the comparison of the occurrence $[i-1,i]$ with the occurrence $\gamma$ preceding $\beta$ (or succeeding $\beta$ if it is a backward occurrence). If $[i-1,i]$ and $\gamma$ have the same supporting arc, we just propagate their relative order to $[i,i+1]$ and $\beta$. Otherwise, we use the circular ordering of the supporting arcs of $[i-1,i]$, $\gamma$ and $[i,i+1]$ to deduce the relative order of $[i,i+1]$ and $\beta$ that avoids a crossing at the double point $(i,j)$, where $j$ is the common index of $\beta$ and $\gamma$.
 \item When $\beta= [0,1]$, we cannot use the occurrence preceding $[0,1]$ as it is not yet inserted. We rather compare $[i,i+1]$ and $[0,1]$ as follows. In the Poincar\'e disk, we consider two lifts $\tilde{d}_i$ and $\tilde{d}_0$ of $c$ such that $\tilde{d}_i[i,i+1]=\tilde{d}_0[0,1]$. We decide to insert $[i,i+1]$ to the left (right) of  $[0,1]$ if one of the limit points of $\tilde{d}_i$ lies to the left (right) of $\tilde{d}_0$. Note that when $c$ is homotopic to a simple curve the two limit points of $\tilde{d}_i$ should lie on the same side of $\tilde{d}_0$.
 \end{itemize}

After comparing $[i,i+1]$ with all the occurrences of its supporting arc, we can insert it in the correct place. If no crossings were introduced this way, we proceed with the next occurrence $[i+1,i+2]$. It may happen, however, that no matter how we insert  $[i,i+1]$ in the left-to-right order of its supporting arc, the resulting \immersion{} of $c\Ipath{0}{i+1}$ will have a combinatorial crossing. In order to handle this case, we first check if $[i,i+1]$ is \define{switchable}, i.e. if for some $k\geq 0$ and some turns $t,u$ the subpath $p:=c\Ipath{\Imath}{k+2}$ has turn sequence $t2^k1u$ and the index path  $\Ipath{\Imath}{k+2}$ does not contain the arc occurrence $[0,1]$. See Figure~\ref{fig:switch-path}.
When $[i,i+1]$ is switchable we can switch $p$ to a new subpath $p'$ with turn sequence $(t-1)\bar{1}\bar{2}^{k}(u-1)$ such that  $p$ and $p'$ bound a diagram composed of a single horizontal staircase. 
\begin{figure}
  \centering
  \includesvg[\textwidth]{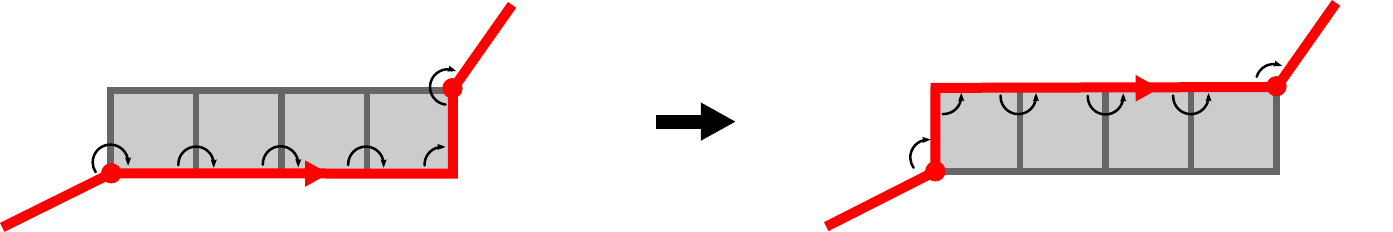}
  \caption{The arc $[i,i+1]$ is switchable.}
  \label{fig:switch-path}
\end{figure}
We indeed perform the switch \emph{if some intersection is actually avoided this way}. More formally, this happens when there is a crossing $(i,j)$ such that the turn of $(c[j,j+\varepsilon], c[i,i+1])$ is one with either $\varepsilon = 1$ or $\varepsilon = -1$ as illustrated on Figure~\ref{fig:switch-path-bis}.
\begin{figure}
  \centering
  \includesvg[\textwidth]{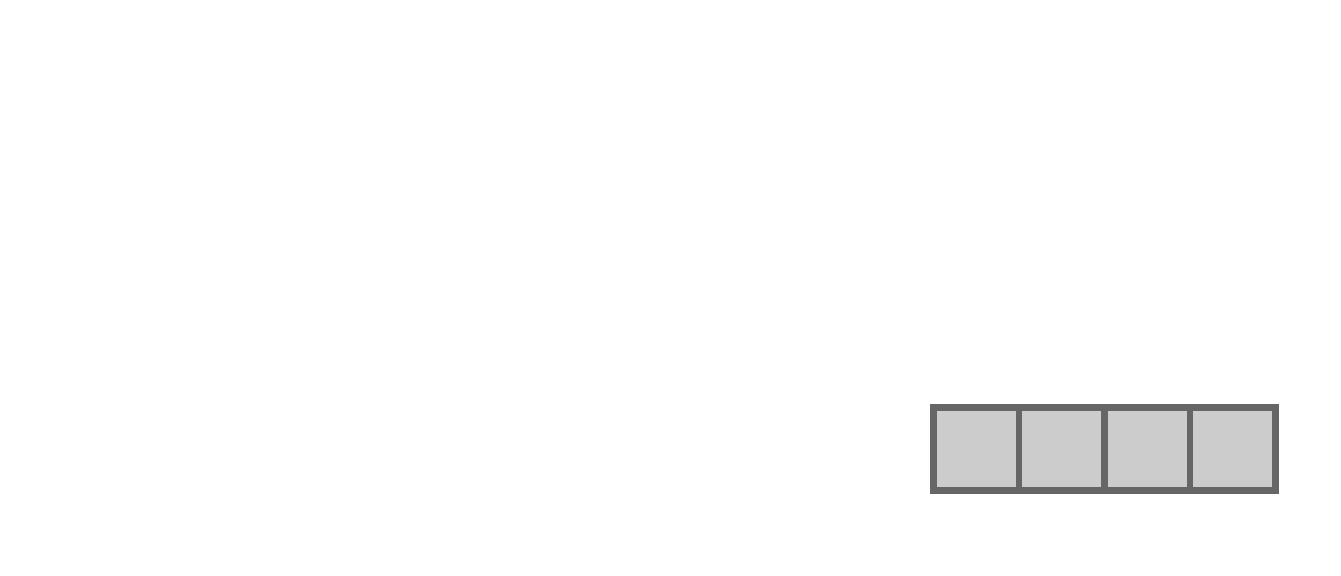}
  \caption{Top, the arc occurrence $[j,j+1]$ triggers a switch. Bottom, $[i,i+1]$ is switchable but not switched in these two situations.}
  \label{fig:switch-path-bis}
\end{figure}
In such a case we say that $[j,j+\varepsilon]$ has \define{triggered} the switch.
Note that replacing $p$ by $p'$ leads to a curve $c'$ that is still geodesic and homotopic to $c$. Moreover, \emph{the non traversed part of $c'$ starting at index $i+1$ is in canonical form} as it contains no $\bar{1}$ turns. We then insert the arc occurrence $[i,i+1]$ and proceed with the algorithm using $c'$ in place of $c$. 
The successive switches in the course of the computation untangle  $c$ incrementally. We call our embedding procedure the \define{{\zipper} algorithm}. For further reference we make some observations that easily follows from the above switch procedure and the fact that $c$ is initially canonical.
\begin{remark}\label{rem:switchable}
  If $[i,i+1]$ is switchable but not switched at the time of being processed,  then it remains not switched until the end of the algorithm, meaning that the vertex of index $i$ is directly followed by a subpath $p$ with a turn sequence of the form $2^k1$ ($k$ may get smaller if some other arc of $p$ is switched). Similarly, if $[i,i+1]$ is not switchable because of an inappropriate turn sequence, this remains true until the end of the algorithm.
\end{remark}
\begin{remark}\label{rem:01-right-side}
   Let $c'$ be a curve homotopic to $c$ obtained after a certain number of switches during the execution of the {\zipper} algorithm. Then $c'$ cannot contain a subpath with turn sequence $\bar{1}\bar{2}^*$ that ends at index $0$, nor can it contain a subpath with turn sequence $\bar{2}^*\bar{1}$ that starts at index $0$.
\end{remark}
\begin{lem}\label{lem:zipper-complexity}
  The {\zipper} algorithm applied to a canonical primitive curve $c$ of length $\ell$ can be implemented to run in $O(\ell\log\ell)$ time.
\end{lem}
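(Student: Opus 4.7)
\medskip\noindent
\textbf{Proof proposal.} The plan is to turn the incremental description of the {\zipper} algorithm into an implementation in which every elementary operation costs $O(\log^2\ell)$. I would maintain two families of balanced binary search trees: a tree $T$ indexing the current curve $c$ by position, augmented so that turn-sequence queries (finding maximal blocks of $2$'s, locating the next turn different from $2$, checking whether a given subpath contains the occurrence $[0,1]$) and local subpath replacements both cost $O(\log\ell)$; and, for every arc $a$ of the system of quads, a tree $L_a$ storing in left-to-right order the occurrences of $a$ and $a\inv$ already inserted.

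Inserting a new occurrence $[i,i+1]$ into $L_a$ amounts to a binary search with $O(\log\ell)$ comparisons, so the first thing I would show is that each comparison runs in $O(\log\ell)$ time. If the occurrence $\beta$ being compared with $[i,i+1]$ is not $[0,1]$, the pointer-arithmetic rule spelled out before the lemma reduces the comparison either to a constant-time inspection of the rotation system of the surface, or to the stored relative order of $[i-1,i]$ with a neighbour $\gamma$ of $\beta$; the latter is a predecessor/rank query in some $L_{a'}$ and thus costs $O(\log\ell)$. Summed over the $\ell$ insertions this already gives $O(\ell\log^2\ell)$.

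The second component of the analysis is the cost of the switches. A switch at index $i$ first requires locating the maximal $2^k1$ block starting at $i-1$ in the current curve and checking that $[0,1]$ does not belong to it; this is a single augmented query on $T$ and runs in $O(\log\ell)$ time. The actual edit, replacing a subpath with turn sequence $t\,2^k\,1\,u$ by one with turn sequence $(t-1)\,\bar1\,\bar2^{\,k}\,(u-1)$, is performed in $T$ by a constant number of split/concatenate operations, each costing $O(\log\ell)$, because on a system of quads a staircase subpath is determined by its two endpoints and its uniform interior turn pattern. Remark~\ref{rem:switchable} ensures that the switch status of each index is monotone during the execution---a vertex declared non-switchable stays so, and a vertex declared switchable-but-not-switched keeps that status---hence at most $\ell$ switches are ever performed and their total cost is $O(\ell\log\ell)$, dominated by the insertion cost.

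The main obstacle I expect is the comparison between $[i,i+1]$ and $[0,1]$, since it is the only genuinely non-local test in the algorithm and it interacts with the switches, which keep modifying the part of $c$ still to be processed. By definition this comparison amounts to deciding on which side of the lift $\tilde{d}_0$ the limit points of $\tilde{d}_i$ fall, which in combinatorial terms is a lexicographic comparison of two rotations of the periodic bi-infinite word encoded by $T$. I would implement this by carrying a dynamic suffix-array-like structure on top of $T$, supporting weighted-ancestor queries and localised edits (each switch only modifies a staircase-shaped region of known extent) in $O(\log^2\ell)$ per update and per query. Plugging this structure into the insertion loop yields the advertised overall bound of $O(\ell\log^2\ell)$; getting the extra $\log$ factor right here, rather than the hoped-for $O(\ell\log\ell)$, is precisely where the technical work is concentrated.
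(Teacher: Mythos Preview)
Your outline matches the paper closely on the easy parts---balanced trees $L_a$ for the left-to-right orders, $O(\log\ell)$ per comparison with an already-inserted $\beta\neq[0,1]$ via the predecessor trick, and an amortised linear bound on the total work done by switches. Where you diverge from the paper, and where your argument has a real gap, is precisely the step you flag as ``the main obstacle'': the comparison of a new occurrence with $[0,1]$.

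You propose to handle this by a ``dynamic suffix-array-like structure'' supporting lexicographic rotation queries and localised edits in $O(\log^2\ell)$. This is not a proof: you do not name a concrete data structure, you do not show that the edits performed by a switch (which replace an entire staircase subpath, possibly of length $\Theta(\ell)$, by another one) are supported in polylogarithmic time by such a structure, and you do not argue that a lexicographic comparison of rotations of the \emph{currently modified} curve $c_i$ actually answers the limit-point question correctly. As it stands this is where the claimed bound is asserted rather than proved.

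The paper avoids all of this with two observations you are missing. First, because $c$ is \emph{canonical}, any two forward index paths with homotopic images in fact coincide (Lemma~\ref{lem:flat-bigons}); hence comparing $[i,i+1]$ with $[0,1]$ in the original $c$ reduces to finding the longest common backward extension of the two occurrences, i.e.\ the longest common prefix of $c^{-1}$ with each of its cyclic shifts. All of these are precomputed in $O(\ell)$ total time by a Knuth--Morris--Pratt variant, once and for all, before any insertion begins. Second, the only case not covered by this precomputation is when $[i,i+1]$ has been switched, so that its supporting arc no longer matches $c[0,1]$; but then the switched arc is in one of four fixed local configurations relative to $c[0,1]$, and the geometric Lemma~\ref{lem:relative-pos} shows that it must lie to the \emph{right} of $[0,1]$. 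So that comparison is constant-time as well. No dynamic string data structure is needed, and this is what makes the $O(\ell\log^2\ell)$ bound go through cleanly.
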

\begin{proof}
  We shall describe an implementation of the unzip algorithm with the announced time complexity. This algorithm constructs a combinatorial perturbation of $c$ incrementally adding its arcs one after the other. The proof that this perturbation is an embedding if and only if $c$ is homotopic to a simple curve is deferred to Proposition~\ref{prop:zipper}. Before we start the incremental construction, we need to perform some preprocessing, namely marking the switchable arcs of $c$ and precomputing a left-to-right comparison between the first arc of $c$ and the other occurrences of that arc in $c$.
\paragraph{Marking the switchable arcs.}
  We first traverse $c$ in reverse direction to mark all the switchable arcs. In the course of the algorithm,
each time a switch applies to a subpath $p$ we unmark all the arcs in $p$ except the last one that may become switchable (depending on the last turn of $p$ and on the status of the arc following $p$). It easily follows that an arc can be switched at most twice and that the amortized cost for the switches is linear. Alternatively, we could use the run-length encoded turn sequence of $c$ as defined in~\cite{ew-tcsr-13} to detect each switchable arc  and update the turn sequence in constant time per switches.
\paragraph{Comparisons with the first arc of $c$.}
In the preprocessing phase we also compute the relative order of $[0,1]$ with all the other occurrences of $c[0,1]$ as follows. If $c[i,i+1]=c[0,1]$, the corresponding arc occurrences form a double path of length one and we compute their relative order by extending maximally this double path in the backward direction. Looking at the tip of this double path, say $\Dpoint{j}{k}$, we can decide which side is to the left of the other. Indeed, the three arcs $c\Bocc{\Jmath}$, $c\Bocc{k}$ and $c\Focc{j}=c\Focc{k}$ must be pairwise distinct  and their circular order about their common origin vertex in the system of quads provides the necessary information as follows from Lemma~\ref{lem:flat-bigons}. 
The computation of the maximal extensions in the backward direction amounts to evaluating the longest common prefix of $c\inv$ with all its circular shifts. Overall, this can be done in $O(\ell)$ time thanks to a simple variation of the Knuth-Morris-Pratt algorithm.
\paragraph{The incremental construction.}
Recall that a combinatorial perturbation of $c$ is the data for each arc of the system of quads of an ordering of the arc occurrences of $c$ equal to that arc or its opposite. In practice, we store each of these orderings into a (balanced) binary search tree to allow the left-to-right comparison of arcs in logarithmic time. 
Although an edge is made of two opposite arcs, we shall store only one ordering per edge, assuming a default orientation of the edge. The orderings associated to opposite arcs should indeed be opposite, so that there is no need to store both of them. 

We start the incremental construction of the combinatorial perturbation by initializing an empty search tree for every edge of the system of quads.
We now traverse $c$ in the forward direction starting with  $[0,1]$ and insert each traversed arc occurrence $\alpha=[i,i+1]$ in its tree, possibly after switching $\alpha$. The switches may modify $c$ so that we denote by $c_k$ the geodesic homotopic to $c$ resulting from the first switches in the algorithm up to and including the insertion of the arc occurrence $[k,k+1]$. The insertion of $\alpha$ is composed of two steps.
We check in a first step if $\alpha$ needs to be switched, that is if $\alpha$ is switchable and some already inserted arc occurrence indeed triggers its switch. This last condition happens when one of  the occurrences of the arc that makes a one turn with $c_{i-1}(\alpha)$ defines a crossing with $\alpha$ at their common endpoint (see Figure~\ref{fig:switch-path-bis}). 
This can be  easily tested in $O(\log\ell)$ time using a dichotomy over these occurrences. We perform the switch if the test is positive and obtain a possibly new curve $c_i$. In a second step we insert $\alpha$ in the tree of its supporting arc $c_i(\alpha)$. If this search tree is empty, we just insert $\alpha$ at the root. Otherwise, we perform the usual tree insertion; as we go down the tree we need to perform $O(\log\ell)$ comparisons, each time comparing $\alpha$ with some already inserted arc occurrence $\beta$ such that $c_i(\beta)=c_i(\alpha)$. There are three cases to consider.
\begin{itemize}
\item If $\beta\neq [0,1]$ we can use the comparison of the occurrence $[i-1,i]$ preceding $\alpha$ with the occurrence $\beta'$ preceding $\beta$ (or succeeding $\beta$ if it is a backward occurrence) in order to decide the relative order of $\alpha$ and $\beta$. If moreover $c_i(\beta')\neq c_i[i-1,i]$ this comparison can be done in constant time: we can index the arcs around each vertex of the system of quads in a prepossessing step and use index arithmetic to make the necessary comparisons between the supporting arcs of  $\beta'$, $[i-1,i]$ and $\alpha$.
\item If $\beta\neq [0,1]$ as above, while $\beta'$ and $[i-1,i]$ have the same supporting arc, we  can use the search tree of $c_i(\beta')$ to perform the comparison. This would cost $O(\log\ell)$ time for each such $\beta'$, implying a $O(\log^2\ell)$ cost for inserting $\alpha$ in its tree. However, we can gather the comparisons of $\alpha$ with all such $\beta'$'s into a single search. Indeed, if the computed perturbation for the first $i-1$ arc occurrences is an embedding, then the $\beta'$'s must be contiguous in the left-to-right order of $c_i([i-1,i])$ and must appear in the same order as the corresponding $\beta$'s in the search tree of $c_i(\alpha)$. It follows that we can determine the relative order of $\alpha$ with all the $\beta$'s thanks to a single search of  $[i-1,i]$ in the search tree of $c_i([i-1,i])$. The overall cost is $O(\log\ell)$. If the computed perturbation for $\Ipath{0}{i}$ happens to be non-simple, then the $\beta'$'s need not be contiguous and
this procedure may return an incoherent answer. However, this would certify that the unzip algorithm has already failed to produce an embedding and we need not worry about incoherent answers.
\item 
When $\beta=[0,1]$ we can use our  precomputed comparisons unless $\alpha$ was previously switched, thus not compared with $[0,1]$ during the preprocessing phase. Since $c$ is canonical, arcs may only be switched to their left in one of four possible configurations as on Figure~\ref{fig:under_a0} 
\begin{figure}
  \centering
  \includesvg[.8\textwidth]{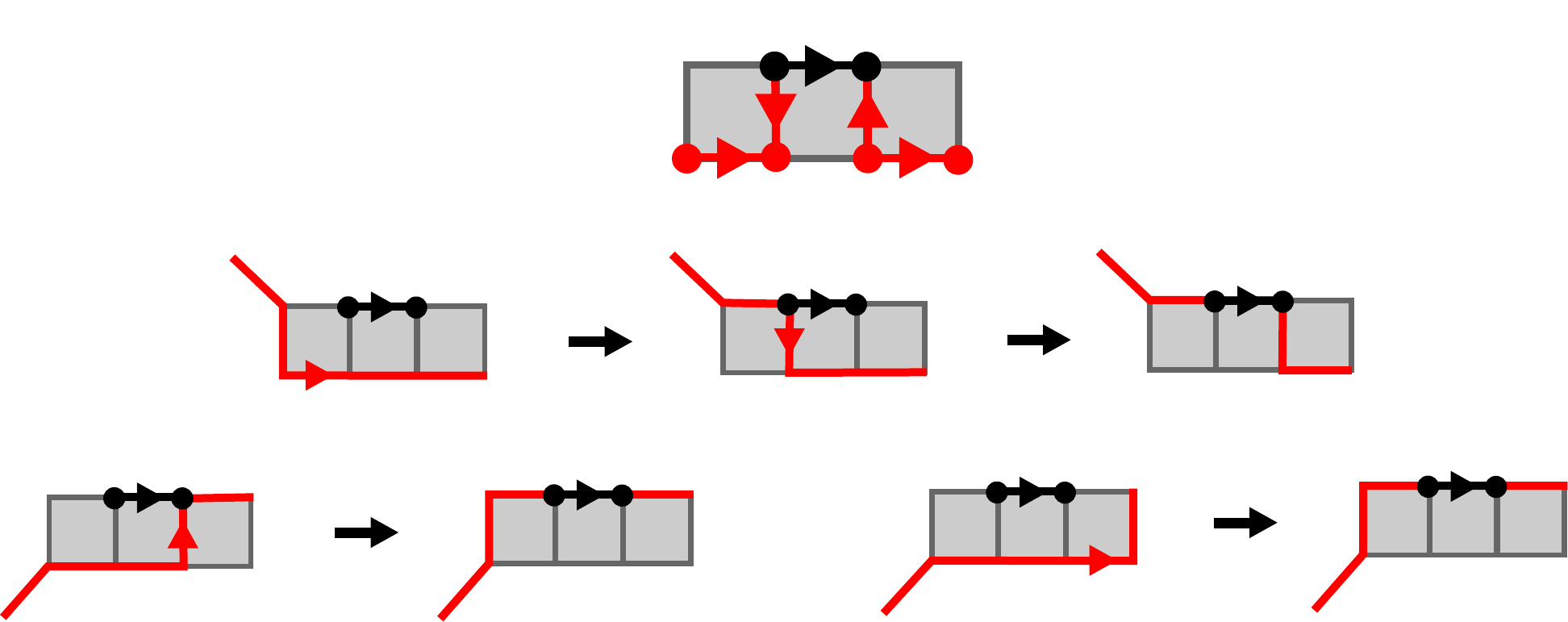}
  \caption{Top: the arc $c(\alpha)$ may be in one of four possible configurations with respect to the first arc $c[0,1]$. Middle and bottom row: from its initial canonical position, $\alpha$ is switched to the same arc as $[0,1]$.}
  \label{fig:under_a0}
\end{figure}
and we infer that $\alpha$ must lie to the right of $[0,1]$ as justified by Lemma~\ref{lem:relative-pos} below.  
\end{itemize}
In conclusion, $\alpha$ can be inserted in its search tree in $O(\log\ell)$ time. Inserting each arc occurrence in turn thus leads to an $O(\ell\log\ell)$ time algorithm.

Note that after $\alpha$ is inserted we do not try to determine if the current \immersion{} has a crossing or not. This will be checked in a second  step after the {\zipper} algorithm is completed. 
\end{proof}
\begin{lem}\label{lem:relative-pos}
  Let $\tilde{c}$ and $\tilde{d}$ be two lifts of a primitive canonical geodesic in the Poincar\'e disk $\D$. The limit points of $\tilde{c}$ cut the boundary circle $\partial \D$ into two pieces. By the \emph{right piece}, we mean the piece of $\partial \D$ that bounds the part of $\D\setminus \tilde{c}$ to the right of $\tilde{c}$. Suppose that $\tilde{c}$ contains an arc $a$ and that $\tilde{d}$ contains an arc $b$ such that $a$ and $b$ are in one of the four relative positions depicted on Figure~\ref{fig:relative-pos}. Then one of the limit points of $\tilde{d}$ is in the right piece of $\partial \D$.
\end{lem}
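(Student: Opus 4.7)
The plan is to argue by contradiction. Suppose that both limit points of $\tilde{d}$ lie in the complement of the right piece of $\partial\D$. First I would inspect each of the four configurations of Figure~\ref{fig:relative-pos} and verify the purely local statement that the arc $b$ emanates from its source vertex into an open face of the quadrangulation of $\D$ lying strictly on the right side of $\tilde{c}$. This is a rotation-system check at the shared vertex of $a$ and $b$, done once per configuration.

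Given the local statement, $\tilde{d}$ contains a point (the interior of $b$) in the open right half $\D\setminus\tilde{c}$, yet both its limit points lie in the closure of the left half. Consequently, the two rays of $\tilde{d}$ starting from $b$ must each cross $\tilde{c}$ at least once. Because $\tilde{c}$ and $\tilde{d}$ are simple bi-infinite combinatorial paths in the embedded quadrangulation of $\D$ (by Remark~\ref{rem:geodesic-power} and Corollary~\ref{cor:no-monogon}), every crossing occurs at a common vertex. Let $v_-$ and $v_+$ be the common vertices immediately preceding and following $b$ along $\tilde{d}$. By Remark~\ref{rem:sub-geodesic} the subpaths $\tilde{c}[v_-, v_+]$ and $\tilde{d}[v_-, v_+]$ are canonical geodesic paths, and they are homotopic in $\D$ since $\D$ is simply connected.

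Projecting to the surface, the two subpaths become index paths of $c$ that share the indices corresponding to $v_-$ and $v_+$ modulo $|c|$ and whose image paths are homotopic. If the two projected index paths share orientation, Lemma~\ref{lem:flat-bigons} forces them to have the same image path; since they also share the endpoint $v_-$ in $\D$, the subpaths $\tilde{c}[v_-, v_+]$ and $\tilde{d}[v_-, v_+]$ coincide arc for arc, so $b$ lies on $\tilde{c}$, contradicting the local right-side placement established above. If the two projected index paths have opposite orientations, Lemma~\ref{lem:no-common-index} asserts that they cannot share any index, contradicting the shared index at $v_-$.

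The main obstacle is the first step: the uniform verification, for each of the four depicted configurations, that $b$ genuinely enters the open right half of $\D\setminus\tilde{c}$ at its source vertex. Once this local fact is established, the global topological contradiction derived from Theorem~\ref{th:geodesic} together with Lemmas~\ref{lem:flat-bigons} and~\ref{lem:no-common-index} applies uniformly across all four cases and concludes the proof.
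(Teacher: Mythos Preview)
Your argument has a genuine gap in the opposite-orientation case. You assert that after projecting to the surface ``the two subpaths become index paths of $c$ that share the indices corresponding to $v_-$ and $v_+$ modulo $|c|$''. This is not justified: the fact that $\tilde{c}$ and $\tilde{d}$ pass through the same vertex $v_-$ of the quadrangulation of $\D$ only means that the corresponding indices $i_1$ (along $\tilde{c}$) and $j_1$ (along $\tilde{d}$) satisfy $c(i_1\bmod|c|)=c(j_1\bmod|c|)$ as vertices of the graph --- a double point --- and \emph{not} that $i_1\equiv j_1\pmod{|c|}$. Lemma~\ref{lem:no-common-index} forbids a shared \emph{index}, not a shared vertex, so you do not obtain the contradiction you claim. (Your same-orientation case via Lemma~\ref{lem:flat-bigons} is fine, because there equality of the image paths plus the common starting vertex $v_-$ in $\D$ forces the lifted subpaths to coincide arc by arc.)

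The paper's proof avoids this orientation dichotomy altogether. After establishing (as you do) that $b$ lies strictly to the right of $\tilde{c}$, it observes that if both limit points of $\tilde{d}$ were on the left then $\tilde{c}$ and $\tilde{d}$ would bound a bigon whose diagram, by Theorem~\ref{th:geodesic}, consists of paths and staircases; since $b$ is not on $\tilde{c}$ it must sit in a staircase part, and that staircase lies to the \emph{right} of the $\tilde{c}$-side. But a canonical path cannot have a staircase on its right (it would force a $\bar{1}$ turn), which gives the contradiction directly. This argument uses only the canonicity of $\tilde{c}$ and the structural theorem, and is both shorter and orientation-free.
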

\begin{figure}
  \centering
  \includesvg[.7\linewidth]{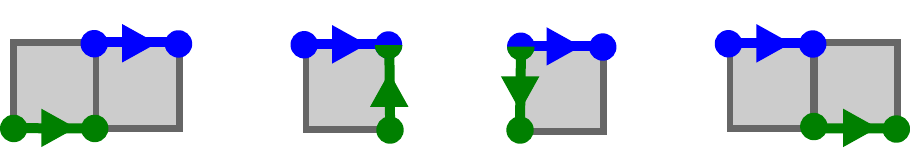}
  \caption{The relative positions of $a$ and $b$.}
  \label{fig:relative-pos}
\end{figure}
\begin{proof}
  We first note that $b$ itself lies to the right of $\tilde{c}$. Indeed, $\tilde{c}$ would have to use a $\bar{1}$ turn to see $b$ on its left or to pass along $b$, in contradiction with its canonicity. If both limit points of $\tilde{d}$ were to the left of $\tilde{c}$, then  $\tilde{c}$ and $\tilde{d}$ would form a bigon with one staircase part containing $b$. The $\tilde{c}$ side of this staircase would thus see $b$ on its right, which is impossible since $\tilde{c}$ is canonical (here, we have not used that $\tilde{d}$ is canonical but only geodesic).
\end{proof}

\begin{prop}\label{prop:zipper}
  If $i(c)=0$ the {\zipper} algorithm returns an embedding of a geodesic homotopic to $c$.
\end{prop}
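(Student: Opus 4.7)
The plan is to prove the contrapositive: if the zipper algorithm produces an immersion $\mathcal{I}$ containing a combinatorial crossing, then $i(c)>0$. Write $c^\star$ for the curve obtained from $c$ after all the switches carried out by the algorithm. Each switch replaces one boundary of a horizontal staircase by its geodesic twin, so it preserves both the geodesic character and the free homotopy class; consequently $c^\star$ is a primitive geodesic with $i(c^\star)=i(c)$, and $\mathcal{I}$ is an immersion of $c^\star$. It is therefore enough to show that if $\mathcal{I}$ has a crossing, then $c^\star$ has excess self-crossings, which by Theorem~\ref{th:singular-bigon} means exhibiting a singular bigon in $\mathcal{I}$ and then deriving a contradiction with the placement rules of the algorithm.

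The next step is to fix an \emph{interior-most} singular bigon $B=(\Ipath{r}{\ell},\Ipath{s}{\varepsilon\ell})$ of $\mathcal{I}$, meaning that no index pair strictly between the two tips is a combinatorial crossing in $\mathcal{I}$. Lemma~\ref{lem:short-bigon} gives $\ell<|c^\star|$, and the interior-most assumption makes the two sides of $B$ amenable to purely local analysis. I would then isolate the \emph{last} arc occurrence $\alpha$ among the arcs making up the two sides of $B$, in the order in which the algorithm processes them along the canonical traversal of $c^\star$. Once $\alpha$ is handled, all the other arcs of $B$ are already in their respective search trees, so the placement of $\alpha$ is the only step where the bigon can still be prevented.

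The core of the argument is a case split on $\alpha$. When $\alpha$ lies strictly inside a side of $B$, it is inserted by the propagation rule; the comparison of $\alpha$ against its counterpart on the other side is then inherited from the preceding arc's comparison, or from the circular ordering of the supporting arcs at the shared vertex. Propagating inductively along the interior-most side traces this comparison back to the already-processed tip, where the crossing imposed by $B$ fixes a relative position that is incompatible with the relative position required to create the crossing at the remaining tip. When $\alpha$ is itself a tip of $B$, inserting it would create a crossing with an occurrence already present at their common endpoint. The main local computation is then to verify that the turn sequence along the side of $B$ starting at $\alpha$ has the form $t2^k1u$ required by the definition of switchability, so that the algorithm performs the switch and $B$ does not appear in $c^\star$, contradicting our hypothesis.

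The principal obstacle is the case where the arc $[0,1]$ lies on one of the sides of $B$. Propagation is blocked at $[0,1]$, so the algorithm must rely either on the precomputed comparisons against $[0,1]$ or, if $\alpha$ has been brought onto the arc of $[0,1]$ by a previous switch, on the ``right of $[0,1]$'' rule. In both subcases I would invoke Lemma~\ref{lem:relative-pos} together with Remark~\ref{rem:01-right-side} to certify that the ordering assigned by the algorithm coincides with the one inherited from the pairwise disjoint lifts of a simple representative of $c^\star$, hence is incompatible with the crossing at the tip of $B$ closest to $[0,1]$. If switchability of a tip-$\alpha$ fails only because the required subpath $\Ipath{\Imath}{k+2}$ contains $[0,1]$, then Remark~\ref{rem:switchable} and the bound $\ell<|c^\star|$ of Lemma~\ref{lem:short-bigon} let me transfer the switching obligation to the other tip of $B$, which lies strictly away from $[0,1]$. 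Combining these ingredients rules out every interior-most singular bigon, and hence every crossing, in $\mathcal{I}$.
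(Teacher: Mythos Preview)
Your route differs substantially from the paper's. The paper does not invoke Theorem~\ref{th:singular-bigon} at all: instead of extracting a singular bigon, it fixes the \emph{smallest} index $i$ at which inserting $[i,i+1]$ first produces a crossing $(i,j)$, lifts the final curve $c'=c_{\ell-1}$ to the Poincar\'e disk through the two indices $i$ and $j$, and argues geometrically (via several preparatory claims about the switch history, Lemma~\ref{lem:no-common-index}, and Lemma~\ref{lem:short-bigon}) that some pair of lifts has alternating limit points on $\partial\mathbb{D}$. This gives $i(c)>0$ directly, with no bigon-removal step.

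Your argument has a genuine gap in each case. In the ``interior $\alpha$'' case, the propagation rule compares $\alpha$ with other occurrences of its \emph{own} supporting arc, but the two sides of $B$ need not share supporting arcs step by step: by Theorem~\ref{th:geodesic} the diagram of $B$ is made of paths and staircases, and along a staircase the two sides lie on different edges of the quad system. Lemma~\ref{lem:flat-bigons}, which would force flatness, applies only to \emph{canonical} curves, while $c^\star$ is in general only geodesic after the switches. Even on a flat stretch, knowing the relative left-to-right order of the two sides does not by itself contradict the second tip being a crossing, since that crossing also depends on the external arcs $[r+\ell,r+\ell+1]$ and $[s+\varepsilon\ell,s+\varepsilon\ell+\varepsilon]$, which your propagation never reaches.

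In the ``tip $\alpha$'' case the reasoning is circular. You conclude that the algorithm switches $\alpha$ so that ``$B$ does not appear in $c^\star$''; but $B$ was selected as a bigon of $c^\star$, the curve obtained \emph{after} all switches, so the arc at that index is already the post-switch arc, and the algorithm switches at most once per insertion. Independently, the assertion that the side of $B$ starting at $\alpha$ has turn sequence $t2^k1u$ is unsupported---the diagram of $B$ may begin with a path piece or with a staircase of either orientation---and you do not check that the tip crossing actually involves an occurrence of the switch-target arc, which is the condition that triggers a switch (cf.\ Figure~\ref{fig:switch-path-1}). The paper's first-crossing analysis avoids both difficulties because it controls exactly which indices precede step $i$ and hence are crossing-free in the partial immersion, something your interior-most bigon does not provide.
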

\begin{proof}
  Let $\Pi$ be the \immersion{} computed by the {\zipper} algorithm. We denote by $c_k$ the geodesic homotopic to $c$ resulting from the first switches in the algorithm up to and including the insertion of the arc occurrence $[k,k+1]$. Note that $k>r$ implies $c_k\Ipath{0}{r+1}=c_r\Ipath{0}{r+1}$.
Suppose that $\Pi$ has a crossing. For the rest of this proof, we denote by $i$ the smallest index such that the insertion of $[i,i+1]$ creates a crossing, i.e. the restriction of $\Pi$ to $\Ipath{0}{i+1}$ has a crossing double point while its restriction to $\Ipath{0}{i}$ is an embedding. By convention we set $i=\ell$ if the crossing appeared after the last arc insertion. We shall show that $c_{\ell-1}$ has two lifts whose limit points are alternating on the boundary of the Poincar\'e disk. It will follow from Section~\ref{sec:strategy}  that $i(c)=i(c_{\ell-1})>0$ thus proving the Proposition. We first establish some preparatory claims.
  \begin{claim}\label{claim:branched-crossing}
    If $(i,j)$ is a crossing of $\Pi$ with $0<j<i<\ell$, then the backward arc $c_i[j,j-1]$ and the forward arc $c_i[j,j+1]$ are distinct from the supporting arc $c_i[i,i+1]$ of $[i,i+1]$. Moreover, if $[i,i+1]$ was switched just before its insertion, then $c_i[j,j-1]$ and $c_i[j,j+1]$ are also distinct from the supporting arc $c_{i-1}[i,i+1]$ of $[i,i+1]$ just before its switch.
  \end{claim}
  \begin{proof}
    The first part of the claim follows directly from our insertion procedure and the fact that the restriction of $\Pi$ to $\Ipath{0}{i}$ is an embedding. 
Moreover, assume that $[i,i+1]$ was switched just before its insertion.
By the insertion procedure, this means that the insertion of $[i,i+1]$ before the switch would have induced a removable crossing 
$(i,k)$ where $0<k<i$ and both $c_{i-1}[k,k+1]$ and $c_{i-1}[k,k-1]$ are distinct from $c_{i-1}[i,i+1]$ by the first part of the claim. By the same first part we have $c_i[j,j-1], c_i[j,j+1]\neq c_i[i,i+1]$. Since the restriction of $\Pi$ to $c_i\Ipath{0}{i}$ has no crossing, the length 2 path $c_{i-1}\Ipath{k-1}{2}=c_{i}\Ipath{k-1}{2}$ separates $c_{i}\Ipath{j-1}{2}$ from $c_{i-1}[i,i+1]$, implying $c_i[j,j-1], c_i[j,j+1]\neq c_{i-1}[i,i+1]$ as desired. Figure~\ref{fig:claim1-2} depicts the situation.
  \end{proof}
  \begin{claim}\label{claim:parallel-right}
    If $c_k$ has a $\bar{1}$ turn at index $k+1$, with $0<k< i-1$, then there is an index $r$ with $0< r<k$ such that
    \begin{itemize}
    \item $c_{k}[r,r-1]=c_{k}[k,k+1]$,
    \item the arc $c_k[r,r-1]$ lies to the right\footnote{More formally, the arc occurrence $[r,r-1]$ lies to the right of $[k,k+1]$ in the restriction of $\Pi$ to $\Ipath{0}{k+1}$.} of $c_k[k,k+1]$,
\item if $r\neq 1$, then $c_k$ has a 1 turn at $r-1$.
    \end{itemize}
  \end{claim}
  \begin{proof}
Refer to Figure~\ref{fig:claim1-2} for the claim and its proof.
\begin{figure}[h]
  \centering
  \includesvg[.8\textwidth]{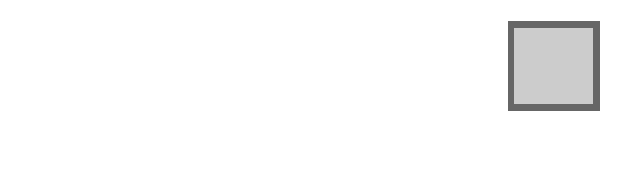}
  \caption{Left, illustration for the proof of claim 1 when $[i,i+1]$ was switched just before its insertion. $c_{i}$ and $c_{i-1}$ appear respectively in plain and doted lines. Right, illustration for claim 2.}
  \label{fig:claim1-2}
\end{figure}
A $\bar{1}$ turn can only occur at the destination of an arc that has been switched. It follows that the arc $c_{k}[k,k+1]$ must have been switched just before its insertion, thus witnessing the existence of a triggering arc $a$ of the form $c_{k}[u,u-1]$ or $c_{k}[u,u+1]$, with $0<u<k$, that lies parallel to and to the right of $c_{k}[k,k+1]$ (with respect to $\Pi$).
In the case $a=c_{k}[u,u-1]$, we may set $r=u$. 
Otherwise, $a=c_{k}[u,u+1]$ and
recalling that the restriction of $\Pi$ to $\Ipath{0}{i}$, hence to $\Ipath{0}{k+2}$, has no crossing\footnote{When $i=\ell$, a crossing of the form $(0,j)$ is not considered as a crossing of the restriction of $\Pi$ to $\Ipath{0}{\ell}$.}, it must be that $c_{k}\Ipath{u}{2}$ also lies parallel to and to the right of $c_{k}\Ipath{k}{2}$.  So $c_i$ has a $\bar{1}$ turn at index $u+1$, and we are back to the hypothesis of the claim, decreasing the value of $k$ to $u$. We can repeat the same arguments inductively, each time decreasing the value of $k$ in the claim. Since $k\geq 1$, the process must stop, implying that we have reached the former case. It remains to note that whenever $r>1$ the double point $(k+1,r-1)$ cannot be a crossing by the choice of $i$, so that $c_{k}$ makes a 1 turn at $r-1$.
 \end{proof}

Recall that $i$ is the smallest index such that the restriction of $\Pi$ to $\Ipath{0}{i+1}$ has a crossing. We denote by $c'=c_{\ell-1}$ the geodesic homotopic to $c$ resulting from all the switches in the course of the algorithm execution.  Let $j$ be the minimal index such that $(i,j)$ is a crossing of $\Pi$.
We consider two lifts $\tilde{d_i}$ and $\tilde{d}_j$ of $c'$ in the Poincar\'e disk such that $\tilde{d}_i(i) = \tilde{d}_j(j)$. We first suppose $i<\ell$. 
\begin{claim}\label{claim:no-crossing-after-i}
  The piece of lift $\tilde{d}_i\Ipath{i}{+\infty}$ has no crossing with $\tilde{d}_j$. (Crossings are defined with respect to the lift of $\Pi$ in the Poincar\'e disk.)
\end{claim}
\begin{proof}
  For the sake of a contradiction, suppose that $\tilde{d}_i\Ipath{i}{+\infty}$ and $\tilde{d}_j$ cross. Let $s$ be the smallest positive integer such that $\tilde{d}_i$ and $\tilde{d}_j$ crosses at $\tilde{d}_i(i+s)$. We thus have a bigon of the form $(\Ipath{i}{s},\Ipath{j}{\varepsilon s})$ for some $\varepsilon\in \{-1,1\}$. Moreover, this bigon has no intermediate crossings by the choice of $s$. Let $\Delta$ be a disk diagram for this bigon, oriented consistently with the system of quads.  $\Delta$ must start with a staircase part by Claim~\ref{claim:branched-crossing}. In particular, the turn $t$ between $c'[j,j+\varepsilon]$ and $c'[i,i+1]$ should be $\pm 1$.
The {\zipper} algorithm may have run across four possible situations at step $i$.
\begin{enumerate}
\item\label{enum:1} Either $[i,i+1]$ was switchable just before its insertion but was not switched, 
\item \label{enum:2} or it was switchable and switched, 
\item \label{enum:3} or it was not switchable because of an inappropriate turn sequence, 
\item \label{enum:4} or it was not switchable because the part to be switched contains $[0,1]$. 
\end{enumerate}
In the first situation, we know by Remark~\ref{rem:switchable} that $c'(i)$ is followed by a turn sequence of the form $2^*1$. Hence, $t$ is exactly $1$; but this contradicts the fact that $[i,i+1]$ was not switched though switchable. Indeed, $[j,j+\varepsilon]$ should have triggered the switch. The third situation together with Remark~\ref{rem:switchable} also lead to a contradiction as the inappropriate turn sequence prevents $\Ipath{i}{s}$ from being part of any staircase. Thanks to Claim~\ref{claim:branched-crossing}, the second situation equally prevents $\Ipath{i}{s}$ from being part of any staircase. It remains to consider the fourth situation. We first suppose that $\varepsilon=-1$, i.e. that the diagram $\Delta$ corresponds to the bigon $(\Ipath{i}{s},\Ipath{j}{-s})$. By Lemma~\ref{lem:no-common-index}, we have $s<\ell$ and the fourth situation implies that the $\Ipath{i}{s}$ side of the bigon contains $[0,1]$. These two properties imply that $0<i+s-\ell<i$. It ensues that the $\Ipath{j}{-s}$ side contains index $i$, for otherwise $(i+s,j-s)$ would be a crossing occurring before step $i$. However, the occurrence of $i$ on both sides of the bigon again contradicts Lemma~\ref{lem:no-common-index}. Hence, it must be that $\varepsilon=1$, i.e. that $\Delta$ is bounded by two forward paths.
Since the $\Ipath{i}{s}$ side of $\Delta$ contains $[0,1]$, the $\Ipath{j}{s}$ side
must lie to the left of $\Delta$ and thus contain $\bar{1}$ turns.
Let $k+1$ be the first index of a $\bar{1}$ turn along this side. Note that $j\leq k<i-1$ because $i$ is followed by a turn sequence of the form $2^*1$.
Claim~\ref{claim:parallel-right} ensures the existence of a smaller index $r<k$ such that $c_{k}[k,k+1]=c_{k}[r,r-1]$ with $c_{k}[r,r-1]$ to the right of $c_{k}[k,k+1]$. See Figure~\ref{fig:claim-3}. 
\begin{figure}[h]
  \centering
  \includesvg[\textwidth]{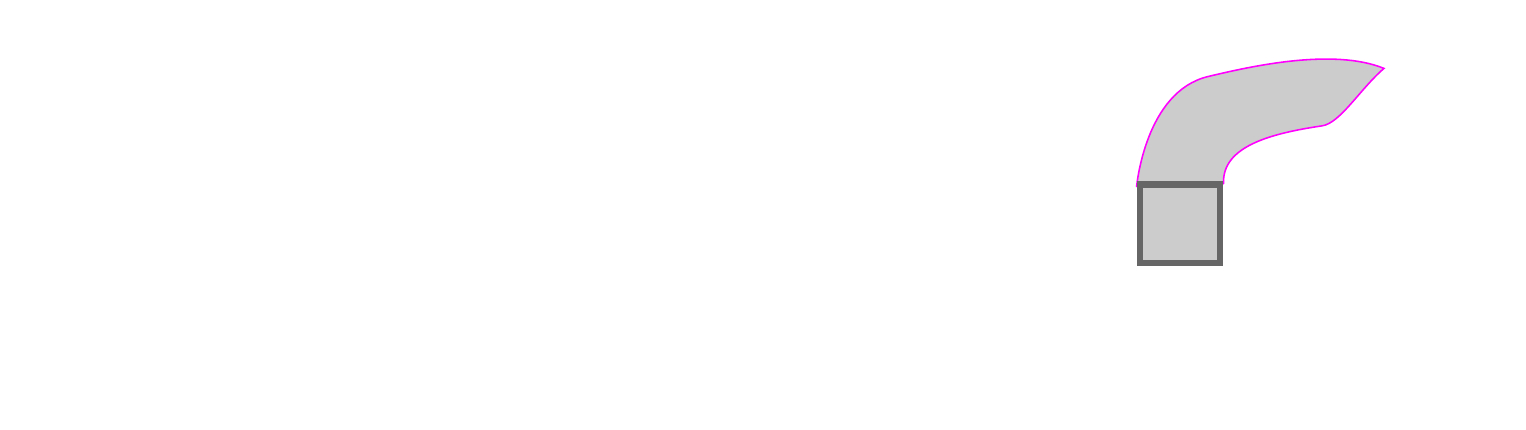}
  \caption{$\Delta$ may start with a vertical step (left) or a horizontal step (right).}
  \label{fig:claim-3}
\end{figure}
If $r>1$, Claim~\ref{claim:parallel-right} further states that $c_k$ has a $1$ turn at index $r-1$. This is also true when $r=1$ and $\Delta$ starts with a vertical step: situation (4) implies a  $1$ turn at index $0$ in this case. It follows in both cases that $c_i\Ipath{r}{}$ is stuck inside the first staircase step of $\Delta$: it cannot turn left as this would create a bracket; it cannot turn right as this would imply a crossing lexicographically smaller than $(i,j)$; and it can neither go straight as it would also imply a crossing lexicographically smaller than $(i,j)$. To see this we remark that the indices in $\Ipath{r}{k+1-j}$ are each smaller than $j$ for otherwise it would be followed by a 1 turn, inducing a bracket. Similar arguments holds in the case where $r=1$ and $\Delta$ starts with a horizontal step. We conclude that situation (4) also leads to a contradiction and that $\tilde{d}_i\Ipath{i}{+\infty}$ and $\tilde{d}_j$ indeed have no crossing.
\end{proof}
We next consider the smallest positive integer $r$ such that $\tilde{d}_i$ and $\tilde{d}_j$ crosses at $\tilde{d}_i(i-r)$. If no such $r$ exists, then by the above Claim~\ref{claim:no-crossing-after-i}, $\tilde{d}_j$ and $\tilde{d}_i$ have a unique intersection point and we may conclude that their limit points alternate. We can thus assume the existence of a bigon  $(\Ipath{i-r}{r},\Ipath{j-\varepsilon r}{\varepsilon r})$ with $r>0$ and $\varepsilon\in\{-1,1\}$, and without intermediate crossings.
We first examine the case $\varepsilon=-1$ where the bigon $(\Ipath{i-r}{r},\Ipath{j+ r}{-r})$ has oppositely oriented index paths. We must have $r\geq i-j$ for otherwise the tip $(i-r,j+r)$ of the bigon would define a crossing occurring before step $i$, contradicting the choice of $i$. Hence, $\Ipath{i-r}{r}$ contains $j$. This is however impossible by Lemma~\ref{lem:no-common-index}.
We now look at the case of two forwards index paths ($\varepsilon=1$). 
We must have $r\geq j$ for otherwise the tip $(i-r,j-r)$ of the bigon would define a crossing occurring before step $i$, again contradicting the choice of $i$. It follows that $\Ipath{j-r}{r}$ contains $[0,1]$ and that the forward branches $\tilde{d}_i\Ipath{i-j}{+\infty}$ and $\tilde{d}_j\Ipath{0}{+\infty}$ have a unique intersection point. The bigon labels a disk diagram $\Delta$ composed of paths and staircases as described in Theorem~\ref{th:geodesic}. 
\begin{itemize}
\item 
If $[0,1]$ and $\alpha:= [i-j,i-j+1]$ label the same arc of a path part in $\Delta$ there are two possibilities: either  it holds initially that $c[0,1]=c(\alpha)$ or $\alpha$ was switched in the course of the algorithm. In the former case, we know by the preprocessing phase and Lemma~\ref{lem:flat-bigons} that the left-to-right order of $[0,1]$ and $\alpha$ is coherent with its extension in the backward direction. This implies that ultimately in the backward direction $\tilde{d}_i$  lies on the same side of $\tilde{d}_j$ as does $\alpha$. In the latter case, as described in the end of the proof of Lemma~\ref{lem:zipper-complexity}, we know by the insertion procedure that at least one of the limit points of $\tilde{d}_i$  lies on the same side of $\tilde{d}_j$ as does $\alpha$. 
Since $(i,j)$ is the only crossing in the forward direction, we conclude in both cases that $\tilde{d}_j$  and $\tilde{d}_i$ have alternating limit points. 
\item Otherwise, $[0,1]$ and $\alpha$ label two distinct arcs, say $a_0$ and $a_{i-j}$, of a staircase part $\sigma$ of $\Delta$.  Let $\Ipath{i-v}{v-u}$ and  $\Ipath{j-v}{v-u}$, $0\leq u<v\leq r$, be the index paths corresponding to the sides $\sigma_L$ and $\sigma_R$ of $\sigma$ as pictured in Figure~\ref{fig:push-1}. By Remark~\ref{rem:01-right-side}, $\Ipath{j-v}{v-u}$ must label the right side $\sigma_R$ while  $\Ipath{i-v}{v-u}$ cannot contain $[0,1]$. It follows that $v<i$, whence $\Ipath{i-v}{v-u}\subset \Ipath{1}{i-1}$. If $a_{i-j}$ belongs to a horizontal part of $\sigma_L$, the first vertex in this  part has a $\bar{1}$ turn and we let $m$ be the index of this vertex. Otherwise, $a_{i-j}$ belongs to a vertical part whose last vertex has a $\bar{1}$ turn. It follows that the vertex of index
$i-j+1$ had a $\bar{1}$ turn at step $i-j$ of the algorithm. We set $m=i-j+1$ in this case. See Figure~\ref{fig:push-1}. 
  \begin{figure}
    \centering
    \includesvg[\textwidth]{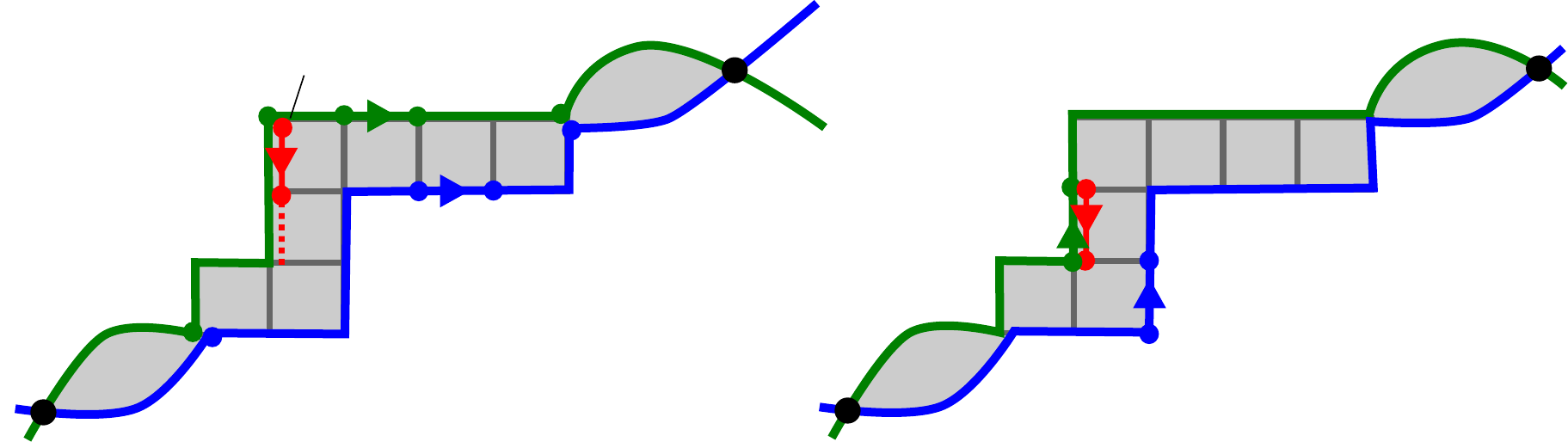}
    \caption{$a_{i-j}$ may belong to a horizontal (left figure) or vertical (right figure) part of $\sigma_L$.}
    \label{fig:push-1}
  \end{figure}
In both cases, $c_{m-1}$ had a $\bar{1}$ turn at index $m$ and by Claim~\ref{claim:parallel-right}, there is an arc occurrence $[x,x-1]$, with $x< m-2$, that lies to the right of $a_{i-j}$. We view $\Delta$ as a subset of the Poincar\'e disk so that $\Delta_L$ and  $\Delta_R$ can be seen as portions of $\tilde{d}_i$ and $\tilde{d}_j$ respectively. Let $\tilde{q}$ be the lift of $c'$ that extends the above occurrence $[x,x-1]$ in $\D$. 
We denote by ${q}_+:=\tilde{q}\Ipath{x}{+\infty}$ and ${q}_-:=\tilde{q}\Ipath{x}{-\infty}$ the portion of $\tilde{q}$ respectively after and before its vertex with index $x$. See Figure~\ref{fig:push-2}.
  \begin{figure}
    \centering
    \includesvg[.6\textwidth]{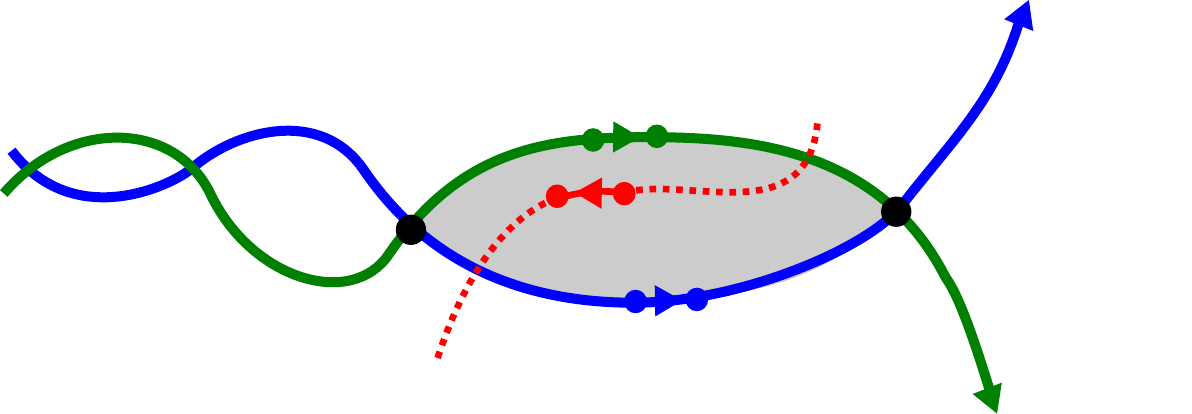}
    \caption{$q_+$ cannot cross $\tilde{d}_i$, $q_-$ cannot cross $\tilde{d}_j$, and $\tilde{d}_j$ cannot cross $\tilde{d}_i\Ipath{i}{+\infty}$.}
    \label{fig:push-2}
  \end{figure}

We claim that $q_+$ cannot cross $\tilde{d}_i$.
Otherwise, we would get a pair of homotopic paths, one piece of $q_+$ starting at index $x$ and one piece of 
$\tilde{d}_i$ ending at index $m-1$, with opposite orientations. By Lemma~\ref{lem:no-common-index}, the $q_+$ piece would not contain index $m-1$ and would thus have length at most $m-1-x < m-1$. In turn, this would imply that the $\tilde{d}_i$ piece would only contain vertices with indices in $[1,m]$. The crossing of $q_+$ and $\tilde{d}_i$ would thus occur before step $i$, a contradiction.

We next claim that $q_-$ cannot cross $\tilde{d}_j$. 
To see this, first note that $q_+$ crosses $\tilde{d}_j$ before $a_0$ (i.e., at a point with index in $\Ipath{j-r}{r-j}$) by the previous claim. If $q_-$ crossed $\tilde{d}_j$, say at  $q_-(s)=\tilde{d}_j(t)$, then we would have a bigon between $\tilde{q}$ and $\tilde{d}_j$ whose $\tilde{d}_j$ side contains $a_0$. Since no crossing occurs before step $i$, we would have either $s\geq i$ or $t\geq i$. We infer in both cases that the index path of the $\tilde{q}$ side of the bigon contains index 0. This would contradict Lemma~\ref{lem:no-common-index}.

It follows from the above claims that the two lifts $\tilde{q}$ and $\tilde{d}_i$ of $c'$ have alternating limit points. 
\end{itemize}
It remains to consider the case $i=\ell$ where the first crossing appears after the last arc insertion. Since all the arcs have been inserted without introducing crossings it means that the crossings of the computed \immersion{} $\Pi$  have the form $(0,j)$. We first claim that each bigon of $\Pi$ must have its two sides oriented the same way. Otherwise, the tips of the bigon have the form $(0,j)$ and $(j,0)$ for some index $j\neq 0$, in contradiction with Lemma~\ref{lem:no-common-index}. 
Let $\tilde{d}_\ell$ and $\tilde{d}_j$ be two intersecting lifts of $c'$ in the Poincar\'e disk. If  $\tilde{d}_\ell$ and $\tilde{d}_j$ intersect only once, then we are done as they must have alternating limit points. We now suppose that $\tilde{d}_\ell$ and $\tilde{d}_j$ intersect at least twice and we consider the bigon $\Delta$ between their \emph{last two} intersections in the forward direction. 
By Lemma~\ref{lem:short-bigon}, the length of $\Delta$ is smaller than $\ell$. Let $(\ell-j,0)$ and $(0,j)$ be the tips of $\Delta$, so that $\tilde{d}_j\Ipath{0}{j}\sim \tilde{d}_\ell\Ipath{\ell-j}{j}$. 
\begin{itemize}
\item 
If $[0,1]$ and $\alpha=[\ell-j,\ell-j+1]$ label the same arc $\tilde{d}_j[0,1] = \tilde{d}_\ell(\alpha)$ then, by the insertion procedure, one of the limit points of $\tilde{d}_\ell$ is on the same side of $\tilde{d}_j$ as $\alpha$. The same argument as in the general case $i<\ell$ allows us to conclude that $\tilde{d}_\ell$ and $\tilde{d}_j$ have alternating limit points. 
\item
We finally suppose that $[0,1]$ and $\alpha$ label distinct arcs, say $a_0$ and $a_{\ell-j}$, of a staircase part $\sigma$ of $\Delta$. 
As in the general case (i.e., $i<\ell$),  $a_0$ must see $\sigma$ on its left, while $a_{\ell-j}$ must see $\sigma$ on its right. Hence, the $\tilde{d}_j$ side of $\sigma$ is canonical while the side along $\tilde{d}_\ell$ is not and $\alpha$ must have been switched. By Claim~\ref{claim:parallel-right}, there is an arc occurrence $\beta$ to the right of $\tilde{d}_\ell[\ell-j,\ell-j+1](\alpha)$ with the opposite orientation. Let $\tilde{q}$ be the lift of $c'$ that extends $\beta$. As in the general case $i<\ell$, we can show that the part of $\tilde{q}$ after $\beta$ cannot cross $\tilde{d}_\ell$, while the part before $\beta$ cannot cross $\tilde{d}_j$. Using that $(0,j)$ is the last crossing along $\tilde{d}_\ell$ and $\tilde{d}_j$, we equally conclude that $\tilde{d}_\ell$ and $\tilde{q}$ have alternating limit points. See Figure~\ref{fig:push-2} with $i=\ell$.
\end{itemize}
\end{proof}

\begin{proof}[Proof of Theorem~\ref{th:simple-curve}]
  Let $c$ be a combinatorial curve of length $\ell$ on a combinatorial surface of size $n$. We compute its canonical form in $O(n+\ell)$ time and check in linear time that $c$ is primitive. In the negative, we conclude that either $c$ is contractible, hence reduced to a vertex, or that  $c$ has no embedding by Proposition~\ref{prop:non-primitive-formulas}. In the affirmative, we apply the {\zipper} algorithm to compute \animmersion{} $\Pi$ of some geodesic $c'$ homotopic to $c$. According to Proposition~\ref{prop:zipper}, we have $i(c)=0$ if and only if $\Pi$ has no crossings. This is easily verified in $O(n+\ell)$ time by checking for each vertex $v$ of the system of quads that the set of paired arc occurrences with $v$ as middle vertex form a well-parenthesized sequence with respect to the local ordering $\prec_v$ induced by $\Pi$. We conclude the proof thanks to  Lemma~\ref{lem:zipper-complexity}.
\end{proof}

\section{Concluding remarks}
The existence of a singular bigon claimed in Theorem~\ref{th:singular-bigon} relies on Theorem 4.2 of Hass and Scott~\cite{hs-ics-85}. As noted by the authors themselves this result is ``surprisingly difficult to prove''. Except for this result and the recourse to some hyperbolic geometry in the general strategy of Section~\ref{sec:strategy}
our combinatorial framework allows us to provide simple algorithms and to give simple proofs of results whose known demonstrations are rather involved. 
Concerning Proposition~\ref{prop:bigon}, the existence of \animmersion{} without bigon could be achieved in our combinatorial viewpoint by showing that if \animmersion{} has bigons, then one of them can be swapped to reduce the number of crossings. (The example in Figure~\ref{fig:no-singular-bigon} shows that such a bigon need not be singular.) What is more, if those swappable bigons could be found easily this would provide an algorithm to compute a minimally crossing \immersion{} of two curves by iteratively swapping bigons as in Section~\ref{sec:computing-immersions} for the case of a single curve. However, we were unable to show the existence or even an appropriate definition of a swappable bigon. Note that in the analogous approaches using Reidemeister-like moves by de Graff and Schrijver~\cite{gs-mcmcr-97} or by Paterson~\cite{p-cails-02}, the number of moves required to reach a minimal configuration is unknown\footnote{In a very recent work, Chang et al.~\cite{celms-tcsvl-18} were able to show that a \emph{single} curve with $n$ self-intersections on an orientable surface could be put in minimal configuration by applying $O(n^4)$ Reidemeister-like moves. In our framework a combinatorial perturbation of a closed walk of length $n$ may have $\Omega(n^2)$ self-intersections, so that their approach leads to $O(n^8)$ moves to reach a minimal configuration.}. Comparatively, the number of bigon swaps would be just half the excess crossing of a given \immersion{}. We would thus obtain a polynomial time algorithm for computing a minimally crossing \immersion{} of two curves (there is an exponential time algorithm by a result of Neumann-Coto~\cite[Prop. 2.2]{n-csgs-01}).

Although the geometric intersection number of a combinatorial curve of length $\ell$ may be $\Omega(\ell^2)$, it is not clear that the complexity in Theorem~\ref{th:main-result} is optimal. In particular, it would be interesting to see if the {\zipper} algorithm of Section~\ref{subsec:unzip} yields minimally crossing curves even with curves that are not homotopic to simple curves, thus improving Theorem~\ref{th:compute-immersion}. This would yield a quasi-linear algorithm for computing the geometric intersection number based on the counting of inversions in a permutation~\cite{p-blrw-84}. It is also tempting to check whether the {\zipper} algorithm applies to compute the geometric intersection number of two curves rather than a single curve.
Another intriguing question related to the computation of minimally crossing \immersion{}s comes from the fact that they are not unique. Given a combinatorial \immersion{} we can construct a continuous realization as described in the proof of Lemma~\ref{lem:geometric-vs-combinatorial}. Say that two realizations are in the \define{same configuration} if there is an ambient isotopy of the surface where they live that brings one realization to the other. It was shown by Neumann-Coto~\cite{n-csgs-01} that every minimally crossing \immersion{} is in the configuration of shortest geodesics for some Riemaniann metric $\mu$, but Hass and Scott~\cite{hs-ccgs-99} gave counterexamples to the fact that we could always choose $\mu$ to be hyperbolic. We conclude with the following open question:
\textit{Is there an algorithm to construct or detect combinatorial \immersion{}s that have a realization in the configuration of geodesics for some hyperbolic metric?}

\bibliographystyle{alpha}
\newcommand{\etalchar}[1]{$^{#1}$}

\newpage
\appendix
\noindent
\textbf{\huge Appendix}
\section{Proof of the direct implication of Proposition~\lowercase{\ref{prop:bigon}}}\label{app:bigon}
\begin{nonumberprop}
  If a combinatorial \immersion{} of one or two primitive curves has excess (self-)crossing then it contains a bigon or a monogon.
\end{nonumberprop}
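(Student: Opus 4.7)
The plan is to reduce to the case where both curves are in canonical form and then extract a (flat) bigon from excess crossings by decomposing the double points of the immersion into the maximal double paths of Section~\ref{sec:double-paths}.

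First, using Theorem~\ref{th:canonical}, I would bring $c$ and $d$ to their canonical forms $c_R, d_R$ by a sequence of elementary homotopy moves (spur removals and bracket flips). Each such move can be carried out at the level of the immersion, and a local case analysis shows that every move either strictly decreases the number of combinatorial crossings of $\mathcal I$ -- in which case the move itself witnesses a monogon or bigon of the original immersion -- or preserves the crossing count. Hence either we are done, or we may assume $c = c_R$ and $d = d_R$ are canonical while $i_{\mathcal I}(c,d) > i(c,d)$ still holds.

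Next, by Corollary~\ref{cor:partition-double-paths}, partition the double points of $(c,d)$ into maximal double paths, and for each such $\mathcal D = (\Ipath{\Imath}{k}_c, \Ipath{\Jmath}{k}_d)$ count how many of its double points are combinatorial crossings of $\mathcal I$. The two strands coming from $c$ and from $d$ along $\mathcal D$ trace the same image path, and a crossing occurs at an interior vertex exactly when the relative order of the two strand occurrences flips between the incoming and outgoing arcs. The parity of the total number of such flips along $\mathcal D$ is therefore fixed by the relative position of the diverging arcs at its two endpoints, and unwinding the definitions of Section~\ref{sec:double-paths} this parity is $1$ precisely when $\mathcal D \in \mathcal D_+ \cup \mathcal D_0 \cup \mathcal D_-$. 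Summing over all maximal double paths and invoking Proposition~\ref{prop:counting-crossings},
\[ i_{\mathcal I}(c,d) \;\geq\; |\mathcal D_+| + |\mathcal D_0| + |\mathcal D_-| \;=\; i(c,d), \]
with equality if and only if every crossing double path contributes exactly one crossing of $\mathcal I$ and every non-crossing double path contributes none.

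Since by assumption $i_{\mathcal I}(c,d) > i(c,d)$, some maximal double path $\mathcal D$ must contain at least two combinatorial crossings of $\mathcal I$, say at indices $(\Imath+r_1,\Jmath+r_1)$ and $(\Imath+r_2,\Jmath+r_2)$ with $r_1 < r_2$. The index paths $\Ipath{\Imath+r_1}{r_2-r_1}_c$ and $\Ipath{\Jmath+r_1}{r_2-r_1}_d$ then have identical image paths (both sub-paths of the common image path of $\mathcal D$), positive length, and tips that are combinatorial crossings of $\mathcal I$; this is precisely a (flat) bigon. The single-curve case is identical, taking $c = d$ and $\Imath\neq \Jmath$. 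The main obstacle I anticipate is the first step: bracket flips and spur removals at the level of an immersion can create or destroy double points elsewhere in $\mathcal I$, so the local accounting of crossings must be done carefully to ensure that every decrease in the crossing count is attributable to an explicit bigon or monogon of the original immersion. A backup route bypassing canonicalization would be to pair each crossing of $\mathcal I$ directly with an element of the set $B/\tau$ from Lemma~\ref{lem:strategy}, using annular diagrams for $c\sim c_R$ and $d\sim d_R$, and to extract a bigon in the universal cover whenever this pairing is not injective.
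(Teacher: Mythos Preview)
Your main route has two genuine gaps. First, the reduction step: you bring $c,d$ to canonical form by a sequence of elementary moves on the immersion, and claim that if any move strictly decreases the crossing count then it ``witnesses a monogon or bigon of the original immersion.'' But after several moves the underlying curve has changed; a length-one bigon produced by a later transposition is a bigon of the \emph{intermediate} immersion, not of $\mathcal I$. You flag this yourself, and it is not a detail---without it the argument does not transfer anything back to $\mathcal I$. Second, even in the canonical case your invocation of Proposition~\ref{prop:counting-crossings} is off: the sets $\mathcal D_0$ and $\mathcal D_-$ are defined using $c_L$ and the annular diagram $\Delta$ between $c_R$ and $c_L$, not purely in terms of maximal double paths of $(c_R,d_R)$, so ``parity $1$ precisely when $\mathcal D\in\mathcal D_+\cup\mathcal D_0\cup\mathcal D_-$'' is not what you have established. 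The underlying parity idea along a maximal double path is sound, but to make the count match $i(c,d)$ you would also need to treat the forward--backward case (where $c$ and $d^{-1}$ share arcs at an isolated double point) and argue directly that crossing double paths of $(c_R,d_R)$ are in bijection with $B/\tau$, bypassing the $\mathcal D_\pm,\mathcal D_0$ bookkeeping entirely.

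The paper's combinatorial proof runs the elementary moves in the \emph{opposite} direction: start from a minimal immersion $\mathcal J$ and move toward the given $\mathcal I$ by elementary homotopies interleaved with adjacent transpositions, maintaining throughout an explicit pairing of the excess crossings into bigon tips. The invariant ``at least $\lceil p/2\rceil$ bigons with pairwise distinct tips whenever there are $p$ excess crossings'' is checked move by move (transpositions create, slide, or merge bigon tips; homotopies in good position preserve them). This sidesteps both of your difficulties: the bigons are always bigons of the \emph{current} immersion, and no counting formula for $i(c,d)$ is needed beyond the existence of some minimal $\mathcal J$. Your backup route via lifts in $\D$---two lifts with alternating limit points that meet more than once project to a bigon---is exactly the short main-text argument the paper gives for Proposition~\ref{prop:bigon}; that works, and is the cleanest fix.
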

We first introduce some terminology. 
An \define{elementary homotopy} on a combinatorial curve $c$ consists in adding or removing a spur, or replacing in $c$ a possibly empty part of a facial walk by its complementary part. For a free elementary homotopy we can also apply a circular shift to the indices of the closed curve $c$. The equivalence relation generated by (free) homotopies is called combinatorial (free) homotopy. 

Let $\rho: G\to S$ be a cellular embedding of $G$ in a topological surface $S$ corresponding to the  combinatorial surface we are working with. Every combinatorial curve $c$ can be realized by a continuous curve on $S$ by replacing each arc in $c$ with a continuous arc deduced from the restriction of $\rho$ to the corresponding arc. We denote this continuous realization by $\rho(c)$. It is part of the folklore that (free)  combinatorial homotopy coincides with continuous (free) homotopy, meaning that $c$ is (freely) homotopic to $d$ if an only if $\rho(c)$ is (freely) homotopic to $\rho(d)$. 

An \define{elementary move} of a combinatorial \immersion{} consists either in an elementary homotopy or in an \define{adjacent transposition}, i.e. in exchanging the left-to-right order of two occurrences in a same arc where one occurrence is next to the right of the other. We further require before performing an elementary homotopy that the \immersion{} is in \define{good position}. This means, if the elementary homotopy applies to a nonempty part $u$ of a facial walk of some face, that each arc occurrence in $u$ should be the rightmost element of its arc, i.e. the most interior to the face. When removing a spur, we just require that the two arc occurrences to be removed are adjacent in left-to-right order.  See Figure~\ref{fig:elementary-moves}.  
\begin{figure}[h]
  \centering
\includesvg[.85\linewidth]{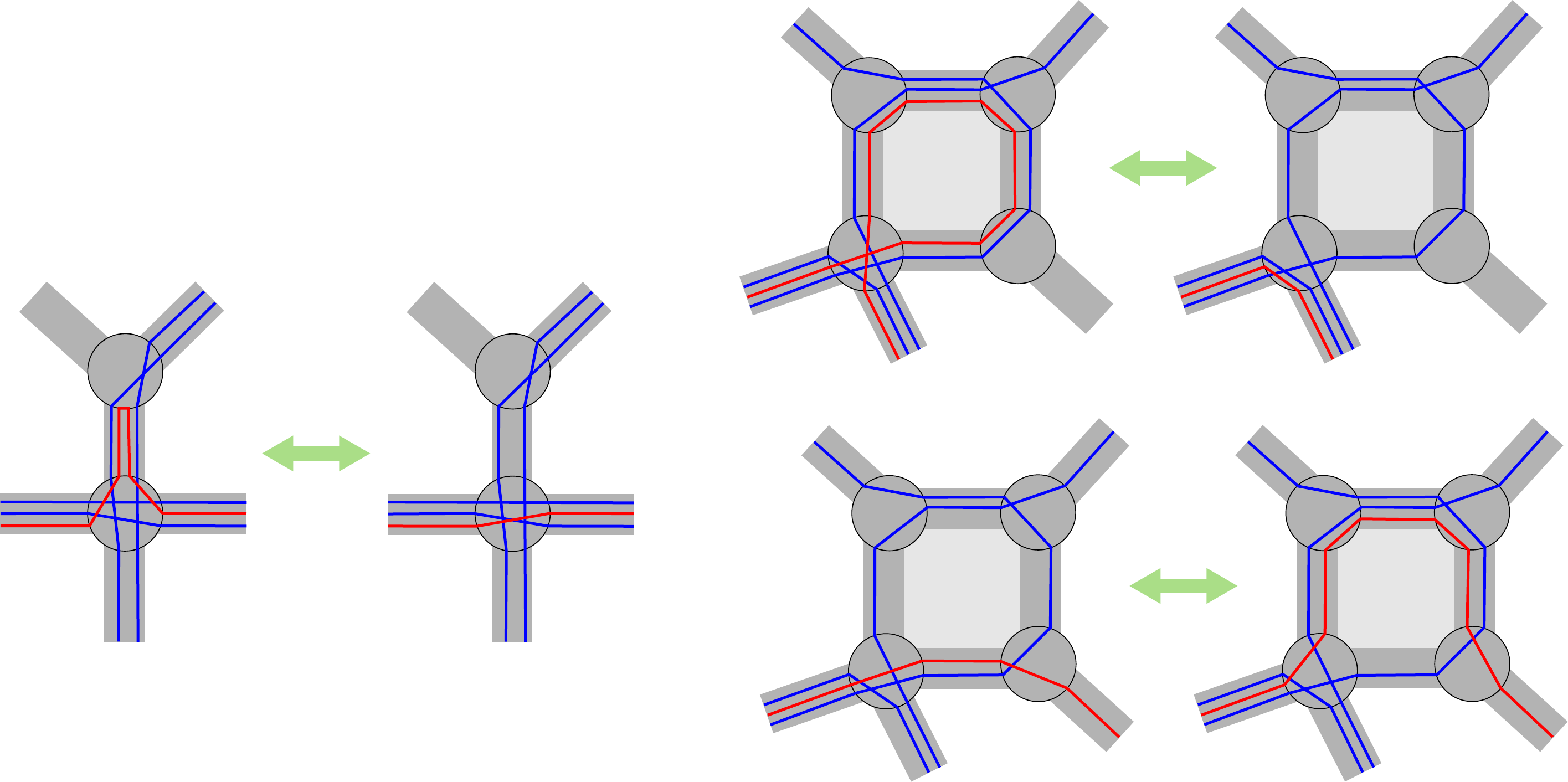}
\caption{Left, the removal of a spur in good position. Right, two elementary homotopies replacing a subpath $u$ of a facial walk by the complementary subpath $v$. In the upper right, $v$ is empty. The \immersion{}s being in good positions, crossings may only appear or disappear by pairs.
}
  \label{fig:elementary-moves}
\end{figure}
The elementary homotopy does not modify the order of the remaining arc occurrences. When inserting a spur we make the inserted arc occurrences adjacent and when inserting part of a facial walk we insert each arc occurrence after the rightmost element of its arc. Remark that by a sequence of adjacent transpositions we can always enforce \animmersion{} to be in good position. 

\begin{proof}[Proof of the Proposition]
We consider the case of \animmersion{} $\Pi$ of two curves $c$ and $d$. Suppose that $\Pi$ has $p$ excess crossings. We shall prove the stronger claim that $\Pi$ has at least $\lceil p/2\rceil$ bigons whose tips are pairwise distinct. Let $\Psi$ be \animmersion{} of two curves $c'$ and $d'$ respectively homotopic to $c$ and $d$ such that $\Psi$ has no excess crossings. Consider a sequence of elementary homotopies from $(c', d')$ to $(c,d)$. Following the above remark, we can insert adjacent transpositions between each elementary homotopy in order to obtain a sequence of elementary moves from $\Psi$ to $\Pi$. The claim is trivially verified by $\Psi$. We now check that the claim remains true after each elementary move. If the move is an adjacent transposition we may assume that we exchange an arc occurrence of $c$ with an arc occurrence of $d$ since we only consider intersections between $c$ and $d$. Without loss of generality we also assume exchanging the forward occurrences $[\Ibar,\Ibar+1]_c$ and $[\Jbar,\Jbar+1]_d$. There are three cases to consider. 
  \begin{enumerate}
  \item If none of $\Dpoint{\Imath}{\Jmath}$ and $\Dpoint{\Imath+1}{\Jmath+1}$ is a crossing then the transposition adds two crossings that we may pair as the tips of a new bigon. We now have $p+2$ excess crossings with at least $\lceil (p+2)/2\rceil$ bigons as required. 
\item\label{it:one-of-two} If $\Dpoint{\Imath}{\Jmath}$ is a crossing but $\Dpoint{\Imath+1}{\Jmath+1}$ is not and if $\Dpoint{\Imath}{\Jmath}$ is paired to form the tips of a bigon, say $(\Ipath{\Imath}{\ell}_c,\Ipath{\Jmath}{k}_d)$, we may just replace this bigon by $(\Ipath{\Imath+1}{\ell-1}_c,\Ipath{\Jmath+1}{k-1}_d)$ sliding its tip  $\Dpoint{\Imath}{\Jmath}$ to $\Dpoint{\Imath+1}{\Jmath+1}$. A similar procedure applies when $\Dpoint{\Imath+1}{\Jmath+1}$ is a crossing but $\Dpoint{\Imath}{\Jmath}$ is not. In each case the number of  excess crossings and bigons is left unchanged. 
\item\label{it:two-of-two} It remains the case where both $\Dpoint{\Imath}{\Jmath}$ and $\Dpoint{\Imath+1}{\Jmath+1}$ are crossings. If none of the two is paired to form the tips of a bigon, then the transposition removes two crossings and no pairing, so that the claim remains trivially true. If exactly one of the two is paired or if the two are paired together, we loose one bigon and two crossings after the transposition so that the claim remains true. Otherwise, there are two bigons of the form $(\Ipath{\Imath}{\ell}_c,\Ipath{\Jmath}{k}_d)$ and $(\Ipath{\Imath+1}{\ell'}_c,\Ipath{j+1}{k'}_d)$ that we can recombine to form the bigon $(\Ipath{\Imath+\ell}{1+\ell'-\ell}_c,\Ipath{j+k}{1+k'-k}_d)$. We again have one less bigon and two less crossings. 
  \end{enumerate}
We now consider the application of an elementary homotopy as described before the proof. There are again three possibilities.
\begin{enumerate}
\item 
If the homotopy replaces a nonempty subpath $u$ of a facial walk $uv\inv$ by the nonempty complementary part $v$ then no crossing may appear or disappear as we assume the \immersion{} in good position. In particular, no crossing may use an internal vertex of $u$ and $u$ is either entirely included in or excluded from any side of any bigon. We can replace $u$ by $v$ in any bigon side where $u$ occurs to obtain valid bigons in the new \immersion{} after the elementary homotopy is applied. The number of  excess crossings and bigons is left unchanged. 
\item When $u$  is empty in the above replacement, or when inserting a spur, we may only add crossings by pairs forming bigons with one zero-length side and the complement of $u$ or the spur as the other side. A similar analysis as in case~\eqref{it:one-of-two} of a transposition applies to take care of each pair.
\item When $v$ is empty in the above replacement, or when removing a spur, we may only remove crossings by pairs and a similar analysis as in case~\eqref{it:two-of-two} of a transposition applies to take care of each pair.
\end{enumerate}
This ends the proof for the case of two curves.
A similar proof holds for the existence of a bigon or a monogon in \animmersion{} with excess self-intersection. This time the excess crossings are either paired to form bigons or left alone as the tips of monogons.
 \end{proof}

\section{Proof of Lemma~\lowercase{\ref{lem:swap}}}\label{app:swap}
\begin{nonumberlem}
   Swapping the two sides of a singular bigon of \animmersion{} of a geodesic primitive  curve decreases its number of crossings by at least two. 
\end{nonumberlem}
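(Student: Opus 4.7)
The plan is to track how the swap modifies the pairings at the two tips and argue that nothing else can reintroduce crossings.

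Writing the curve cyclically as $c = U \cdot A \cdot X \cdot B \cdot Y$ with $A = c\Ipath{\Imath}{\ell}$, $B = c\Ipath{\Jmath}{\varepsilon\ell}$ and $X,Y$ the two complementary subpaths, the swap produces $c' = U \cdot B \cdot X \cdot A \cdot Y$. This is a valid closed walk because the two sides share their endpoints $v = c(\Ibar) = c(\Jbar)$ and $w = c(\Overline{\Imath+\ell}) = c(\Overline{\Jmath+\varepsilon\ell})$, and the singular-bigon condition that the two sides have disjoint interiors ensures that every arc occurrence of $c$ lifts to a unique arc occurrence of $c'$. A short argument in $\pi_1(S,v)$ based on the identity $[A\cdot Y] = [B\cdot Y]$, which follows from $A \sim B$ rel endpoints, shows $c' \sim c$ freely. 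Define $\mathcal I'$ by keeping every local ordering $\preceq_u$ unchanged.

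The core computation is at the tips. Let $\alpha_1 = [\Imath,\Imath-1]$, $\alpha_2 = [\Imath,\Imath+1]$, $\beta_1 = [\Jmath,\Jmath-1]$, $\beta_2 = [\Jmath,\Jmath+1]$ be the four arc occurrences at $v$ coming from the bigon. In $\mathcal I$ the visits at $\Imath$ and $\Jmath$ pair $(\alpha_1,\alpha_2)$ and $(\beta_1,\beta_2)$, and the crossing hypothesis at $(\Imath,\Jmath)$ says these pairs are linked in $\preceq_v$: the four occurrences appear cyclically as $\alpha_1,\beta_1,\alpha_2,\beta_2$ up to reversal. After the swap, the two visits to $v$ in $c'$ carry pairs $(\alpha_1,\beta_2)$ and $(\beta_1,\alpha_2)$, and the chords $\{\alpha_1,\beta_2\}$ and $\{\beta_1,\alpha_2\}$ are unlinked in every four-point cyclic order where $\{\alpha_1,\alpha_2\}$ and $\{\beta_1,\beta_2\}$ were linked—they are either nested or adjacent. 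Hence the tip crossing at $v$ is destroyed; the same argument at $w$ destroys the other tip crossing.

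It remains to check that no new crossing is created. At any vertex $u \notin \{v,w\}$, the swap preserves both the set of arc occurrences at $u$ and their pairing into consecutive (incoming, outgoing) slots (it merely permutes the slots inside the bigon while every arc keeps its rank in $\preceq_a$), so the local crossing count is unchanged. At $v$, for every third arc-occurrence pair $\gamma = \{\gamma_1,\gamma_2\}$ coming from another visit of $c$ to $v$, $\gamma$ itself is unchanged while its contribution to the crossing count changes from
\[
\operatorname{lk}(\gamma,\{\alpha_1,\alpha_2\}) + \operatorname{lk}(\gamma,\{\beta_1,\beta_2\})
\quad\text{to}\quad
\operatorname{lk}(\gamma,\{\alpha_1,\beta_2\}) + \operatorname{lk}(\gamma,\{\beta_1,\alpha_2\}).
\]
A direct case check on the four gaps of $\preceq_v$ determined by $\alpha_1,\beta_1,\alpha_2,\beta_2$ in which $\gamma_1,\gamma_2$ may lie shows that the first sum always dominates the second; symmetrically at $w$. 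Summing, the total number of crossings drops by at least two.

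The main obstacle is the degenerate case where the two tips coincide at a single vertex, namely $\Jbar = \Overline{\Imath+\ell}$ or $\Ibar = \Overline{\Jmath+k}$: then some of the six bigon-related arc occurrences at the shared vertex are identified and the linking arithmetic above must be redone. Conditions 2 and 3 in the definition of a singular bigon precisely rule out the cyclic layouts in which the swap would fail to produce a net decrease of two, so I would close the proof by enumerating the admissible layouts at the shared vertex and re-running the linking count. The case $\varepsilon = -1$ is handled by the same machinery after reversing the orientation of one bigon side in the swap.
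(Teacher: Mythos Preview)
Your approach is correct in spirit and considerably cleaner than the paper's. The paper partitions all double points into groups $\D_1,\D_2,\D_3,\D_{p,q},\D'_{p,q},\D_{x,p},\D_{p,0},\ldots$ and verifies non-increase in each, with the tip-versus-tip interactions ($\D_2$, the case $c(\Ibar)=c(\Overline{\Imath+\ell})$) consuming several figures and dozens of configurations. Your observation that the swap literally leaves every (incoming, outgoing) pair unchanged except at the two tip visits, together with the single chord-smoothing inequality
\[
\operatorname{lk}(\gamma,\{\alpha_1,\alpha_2\})+\operatorname{lk}(\gamma,\{\beta_1,\beta_2\})
\;\ge\;
\operatorname{lk}(\gamma,\{\alpha_1,\beta_2\})+\operatorname{lk}(\gamma,\{\beta_1,\alpha_2\}),
\]
replaces almost all of that case work. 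This is a genuinely different and more conceptual route.

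There is one gap, though, and it is not the one you flag. You treat as degenerate only the index coincidences $\Jbar=\Overline{\Imath+\ell}$ or $\Ibar=\Overline{\Jmath+k}$, but the harder situation is when the two tip \emph{vertices} coincide, $v=c(\Ibar)=c(\Overline{\Imath+\ell})=w$, while all four indices stay distinct. In that case the two $w$-tip pairs sit at $v$ among your ``third pairs $\gamma$'' and are \emph{not} unchanged: they are being re-paired by the simultaneous smoothing at $w$. Your inequality, as stated, does not apply to a moving $\gamma$, and this is exactly what forces the paper into its $\D_2$ enumeration. The fix is short and stays entirely within your framework: perform the two smoothings \emph{sequentially}. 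First replace $\{\alpha_1,\alpha_2\},\{\beta_1,\beta_2\}$ by $\{\alpha_1,\beta_2\},\{\beta_1,\alpha_2\}$; your inequality (applied with each other chord, including the still-untouched $w$-tip pairs, as $\gamma$) gives a drop of at least one. Then smooth the $w$-tip; since its two pairs are still linked, the same inequality (now with the already-smoothed $v$-pairs playing the role of $\gamma$) gives another drop of at least one. Adding this one paragraph makes your argument complete for the generic case and strictly shorter than the paper's $\D_2$ analysis; the residual index-coincidence cases you already isolate are then genuinely the only ones requiring conditions~2 and~3 of the singular-bigon definition.
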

\begin{proof}
  Consider a singular bigon $(\Ipath{\Imath}{\ell},\Ipath{\Jmath}{\ell})$ of \animmersion{} $\Pi$ of a closed primitive canonical curve $c$. Let $\Psi$ be the \immersion{} after the bigon has been swapped. We shall partition the set of potential double points and show that the net change of the number of crossings with respect to $\Pi$ and $\Psi$ is non positive in each part. As the tips of the bigon are not crossings in $\Psi$, this will prove the lemma. 
We set for $\Overline{x}\in\Z/|c|\Z\setminus (\Ipath{\Imath}{\ell}\cup\Ipath{\Jmath}{\ell})$ and $0< p,q<\ell$:
\[\small
  \begin{array}{ll}
\D_{p,q}=\{\Dpoint{\Imath+p}{\Jmath+q}, \{\Dpoint{\Imath+q}{\Jmath+p}\},& \D_{x,p}=\{\Dpoint{x}{\Imath+p}, \Dpoint{x}{\Jmath+p}\}, \\
\D_{p,0}=\{\Dpoint{\Imath}{\Imath+p},\Dpoint{\Imath}{\Jmath+p},\Dpoint{\Jmath}{\Imath+p}, \Dpoint{\Jmath}{\Jmath+p}\}, &
\D_{x,0}=\{\Dpoint{x}{\Imath}, \Dpoint{x}{\Jmath}\}, \\
\D_{p,\ell}=\{\Dpoint{\Imath+\ell}{\Imath+p},\Dpoint{\Imath+\ell}{\Jmath+p},
\Dpoint{\Jmath+\ell}{\Imath+p}, \Dpoint{\Jmath+\ell}{\Jmath+p}\}, &
 \D_{x,\ell}=\{\Dpoint{x}{\Imath+\ell}, \Dpoint{x}{\Jmath+\ell}\},\\
\D'_{p,q}=\{\Dpoint{\Imath+p}{\Imath+q}, \{\Dpoint{\Jmath+q}{\Jmath+p}\}, 
  & \hspace{-1cm} \D_1=\{\Dpoint{y}{z}\mid y,z \not\in(\Ipath{\Imath}{\ell}\cup\Ipath{\Jmath}{\ell}) \}, \\
\D_2=\{\Dpoint{\Imath}{\Imath+\ell}, \Dpoint{\Imath}{\Jmath+\ell}, \Dpoint{\Jmath}{\Jmath+\ell}, \Dpoint{\Jmath}{\Imath+\ell}\},&
\D_3 =\{\Dpoint{\Imath}{\Jmath}, \Dpoint{\Imath+\ell}{\Jmath+\ell}\}
\end{array}
\]
When $\Ibar=\Overline{\Jmath+\ell}$ or $\Jbar=\Overline{\Imath+\ell}$, the set $\D_2$ consists of three double points only (or zero if $c(\Ibar)\neq c(\Overline{\Imath+\ell})$). Note that we cannot have both equalities as this would imply $\Jbar=\Overline{\Jmath+2\ell}$, whence $\ell=|c|/2$ and $c$ would be a square, in contradiction with the hypothesis that $c$ is primitive.
For each double point in $\D_1$ the four incident arc occurrences are  left in place in $\Pi$ and $\Psi$. It ensues that $\D_1$ has the same crossings in $\Pi$ and $\Psi$.  The double points in $\D_3$ are the tips of the bigon. They are crossings in $\Pi$ and not in $\Psi$, whence a net change of $-2$ crossings. For each of $\D_{x,p}, \D_{p,q}$ or $\D'_{p,q}$, the first double point in their above definition is a crossing in $\Pi$ (resp. $\Psi$) if and only if the second one is a crossing in $\Psi$ (resp. $\Pi$). Their net change of crossings is thus null. For $\D_{x,0}$, $\D_{x,\ell}$, $\D_{p,0}$ and $\D_{p,\ell}$  there are a few case analysis depending on the relative ordering of the two arc occurrences incident to $\Overline{x}$ (resp. $\Overline{\Imath+p}$, $\Overline{\Jmath+p}$) with respect to the crossing $\Dpoint{\Imath}{\Jmath}$ (resp. $\Dpoint{\Imath+\ell}{\Jmath+\ell}$. In each case (see Figure~\ref{fig:Dx1}) the number of crossings cannot increase from $\Pi$ to $\Psi$.
\begin{figure}
  \centering
  \includegraphics[width=\linewidth]{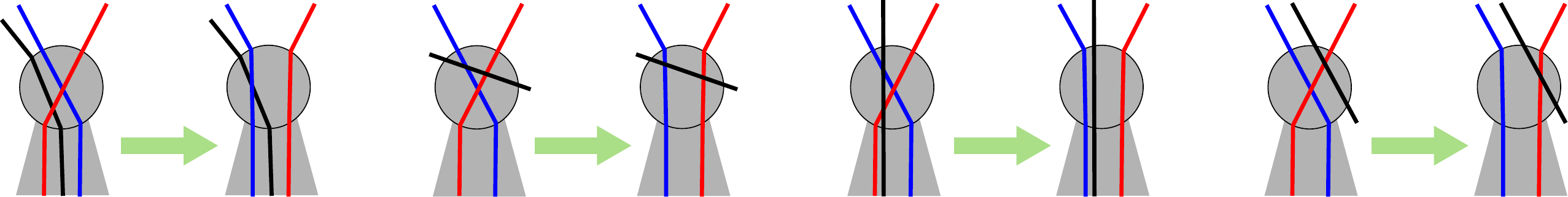}
  \caption{Up to some obvious symmetries the case $\D_{x,0}$, $\D_{x,\ell}$, $\D_{p,0}$ or $\D_{p,\ell}$ has four distinct possible configurations. The blue and red strands represent the crossing $\Dpoint{\Imath}{\Jmath}$ (resp. $\Dpoint{\Imath+\ell}{\Jmath+\ell}$) and each arrow links the left configuration before the bigon swap (in $\Pi$) with the right configuration after the swap is applied (in $\Psi$).}
  \label{fig:Dx1}
\end{figure}
For $\D_2$ there are three cases according to whether $\Overline{\Imath+\ell}=\Jbar$, $\Overline{\Jmath+\ell}=\Ibar$, or none of the two identifications occurs. Recall from the paragraph before the lemma that we cannot have both identifications. The first two  cases can be treated the same way. See Figure~\ref{fig:D2special}. 
\begin{figure}
  \centering
  \includegraphics[width=.8\linewidth]{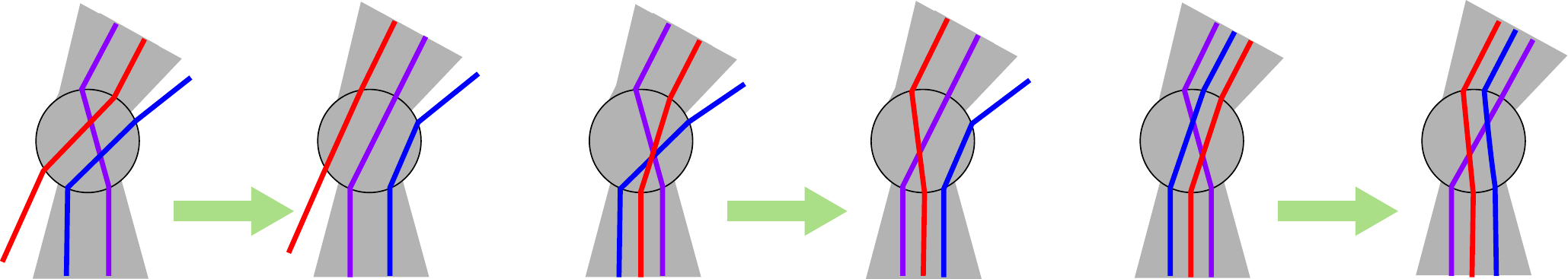}
  \caption{When $\Overline{\Imath+\ell}=\Jbar$, the end of $\Ipath{\Imath}{\ell}$ overlaps with the beginning of $\Ipath{\Jmath}{\ell}$ (in the purple strand). The third configuration on the right is actually forbidden by the definition of a singular bigon! It corresponds to the non-singular bigon on Figure~\ref{fig:singular-def}.}
  \label{fig:D2special}
\end{figure}
The number of possible configurations is much larger in the last case, i.e. when $\Overline{\Imath+\ell}\neq\Jbar$ and $\Overline{\Jmath+\ell}\neq\Ibar$. We essentially have to shuffle two circular orderings of length four corresponding to the two crossing tips. Without loss of generality we can assume that the path from $\Ibar$ to $\Overline{\Imath+\ell}$ is the right side of the bigon. We consider the following arc occurrences: 
\[
\begin{array}{llll}
  \alpha=\Focc{\Imath},&\alpha'=\Bocc{\Imath}, & \beta=\Focc{\Jmath},& \beta'=\Bocc{\Jmath}\\
\gamma=\Focc{\Imath+\ell},& \gamma'=\Bocc{\Imath+\ell},&  \delta=\Focc{\Jmath+\ell},& \delta'=\Bocc{\Jmath+\ell}
\end{array}
\]
Since $\Dpoint{\Imath}{\Jmath}$ and $\Dpoint{\Imath+\ell}{\Jmath+\ell}$ are crossings we must see $(\alpha,\beta,\alpha',\beta')$ in this counterclockwise order around the vertex $c(\Ibar)=c(\Overline{\Imath+\ell})$ and similarly for $(\gamma,\delta,\gamma',\delta')$. We denote by $S_1$ and $S_2$, respectively, these two circular sequences. We need to consider the effect of the bigon swapping on all the possible shuffles of $S_1$ and $S_2$. The restriction of these shuffles to $\alpha,\beta,\gamma$ and $\delta$ gives the 6 possible shuffles of $(\alpha,\beta)$ and $(\gamma,\delta)$. Among them the order $(\alpha,\delta,\gamma,\beta)$ cannot occur. Indeed, since the bigon is a thick double path the arcs $c(\alpha)$ and $c(\beta)$ either coincide or form a corner of a quad. This would force  $c(\delta)$ and $c(\gamma)$ to lie in a similar configuration. In turn, the constrained order $S_2$ would also enforce  $c(\delta')$ and $c(\delta)$ to coincide or form a corner of a quad in contradiction with the hypothesis that $c$ has no spurs or brackets. Similar arguments show that the orders $(\alpha,\gamma,\beta,\delta)$, $(\alpha,\delta,\beta,\gamma)$, $(\alpha,\gamma,\delta,\beta)$ and $(\alpha,\beta,\delta,\gamma)$ can only occur as factors in the possible shuffles of $S_1$ and $S_2$. These 4 orders thus leads to 24 distinct shuffles of $S_1$ and $S_2$ by factoring with the 6 shuffles of $(\alpha',\beta')$ with $(\gamma',\delta')$. By exchanging the roles of $(\alpha,\beta)$ and $(\gamma,\delta)$ and by turning clockwise instead of counterclockwise we see that $(\alpha,\delta,\beta,\gamma)$ leads to the same orders as $(\alpha,\gamma,\beta,\delta)$ and a similar correspondence holds for $(\alpha,\gamma,\delta,\beta)$ and $(\alpha,\beta,\delta,\gamma)$. We thus only need to check the 12 configurations depicted on Figure~\ref{fig:D2quater}. 
\begin{figure}
  \centering
  \includegraphics[width=\linewidth]{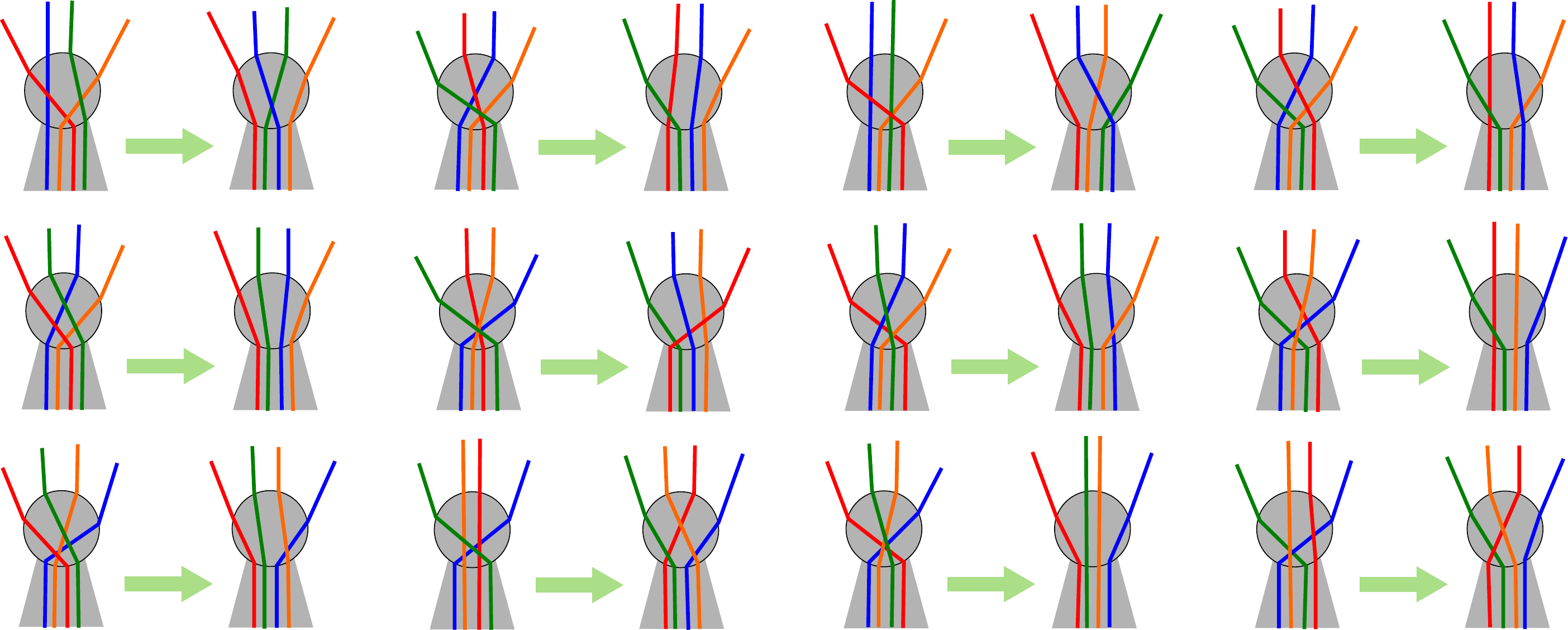}
  \caption{Effect of swapping the singular bigon $(\Ipath{\Imath}{\ell},\Ipath{\Jmath}{\ell})$ on $\D_2$ for all the orders including the factor $(\alpha,\gamma,\beta,\delta)$ or  $(\alpha,\gamma,\delta,\beta)$. The blue strand represents $(\alpha, \alpha')$, the red one $(\beta, \beta')$, the orange one $(\gamma, \gamma')$ and the green one $(\delta, \delta')$.}
  \label{fig:D2quater}
\end{figure}
It remains to consider the order $(\alpha,\beta,\gamma,\delta)$.  Because of the constrained order $S_2$, $\gamma'$ and $\delta'$ cannot lie between $\gamma$ and $\delta$. This leaves out $\binom{4}{2}=6$ possible shuffles of $(\gamma',\delta')$ and $(\alpha,\beta,\gamma,\delta)$:
\[
\begin{array}{lll}
S'_1: (\alpha,\beta,\gamma,\delta,\gamma',\delta') &
S'_2: (\alpha,\beta,\delta',\gamma,\delta,\gamma') &
S'_3: (\alpha,\beta,\gamma',\delta',\gamma,\delta) \\
S'_4: (\alpha,\delta',\beta,\gamma,\delta,\gamma') &
S'_5: (\alpha,\gamma',\beta,\delta',\gamma,\delta) &
S'_6: (\alpha,\gamma',\delta',\beta,\gamma,\delta)
\end{array}
\]
We finally shuffle each of these orders with $(\alpha', \beta')$. Since $\alpha'$ and $\beta'$ cannot lie between $\alpha$ and $\beta$, we obtain $\binom{6}{2}=15$ possible shuffles when considering either $S'_1, S'_2$ or $S'_3$, $\binom{5}{2}=10$ possible shuffles with $S'_4$ or $S'_5$, and $\binom{4}{2}=6$ possible shuffles with  $S'_6$. By swapping left and right and turning clockwise instead of counterclockwise, we remark that $S'_1$ and $S'_3$ lead to the same orders and similarly for $S'_4$ and $S'_5$. We thus only need to consider the shuffles of $(\alpha', \beta')$ and $S'_1, S'_2, S'_4$ or $S'_6$ to complete the inspection of all the cases. Those shuffles are represented on Figures~\ref{fig:D2-S1},~\ref{fig:D2-S2},~\ref{fig:D2-S4},~\ref{fig:D2-S6} respectively.
\begin{figure}
  \centering
  \includesvg[\linewidth]{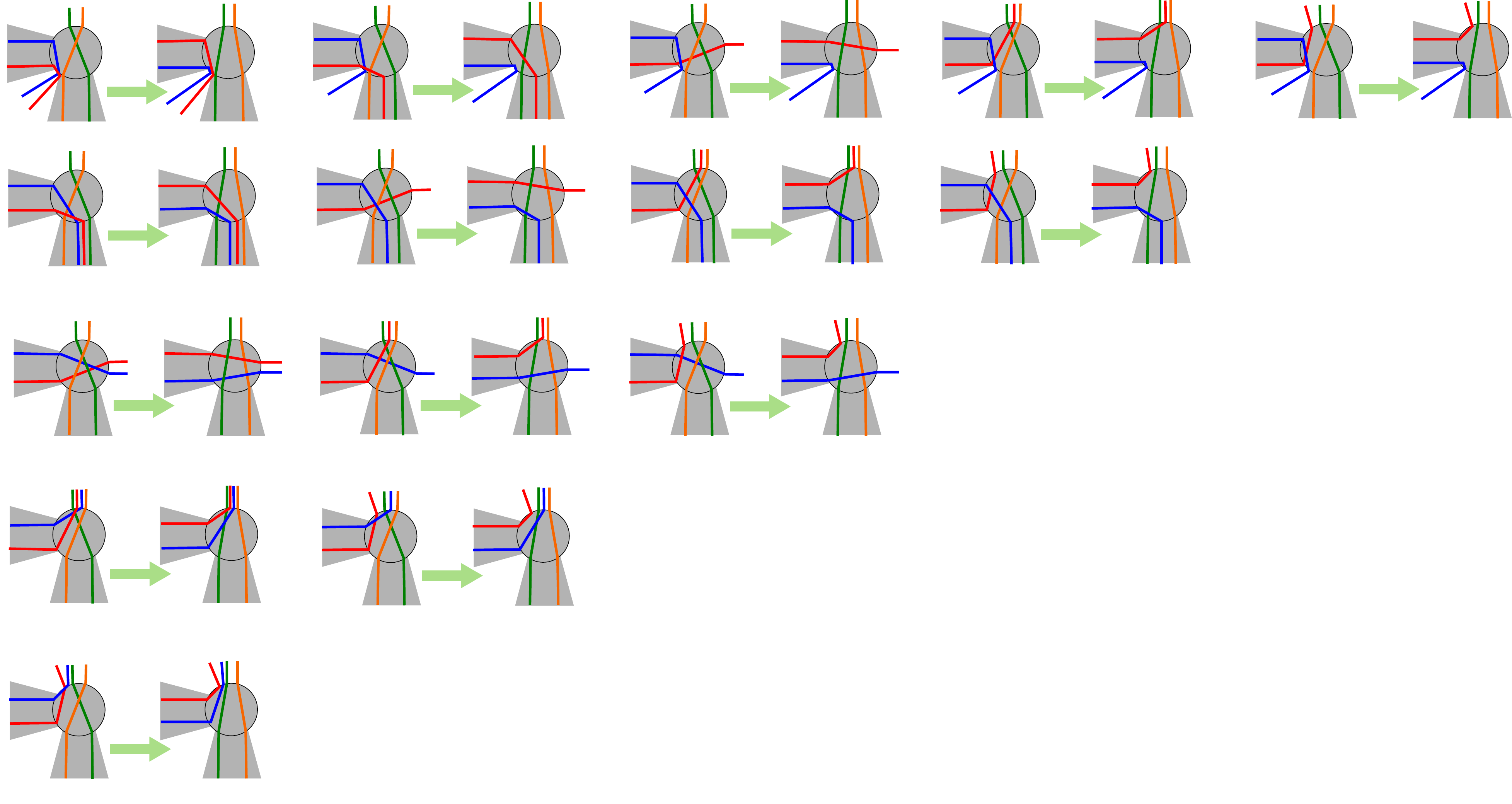}
  \caption{Effect of the bigon swapping on the 15 shuffles of $S'_1$ and $(\alpha', \beta')$.}
  \label{fig:D2-S1}
\end{figure}
\begin{figure}
  \centering
  \includesvg[\linewidth]{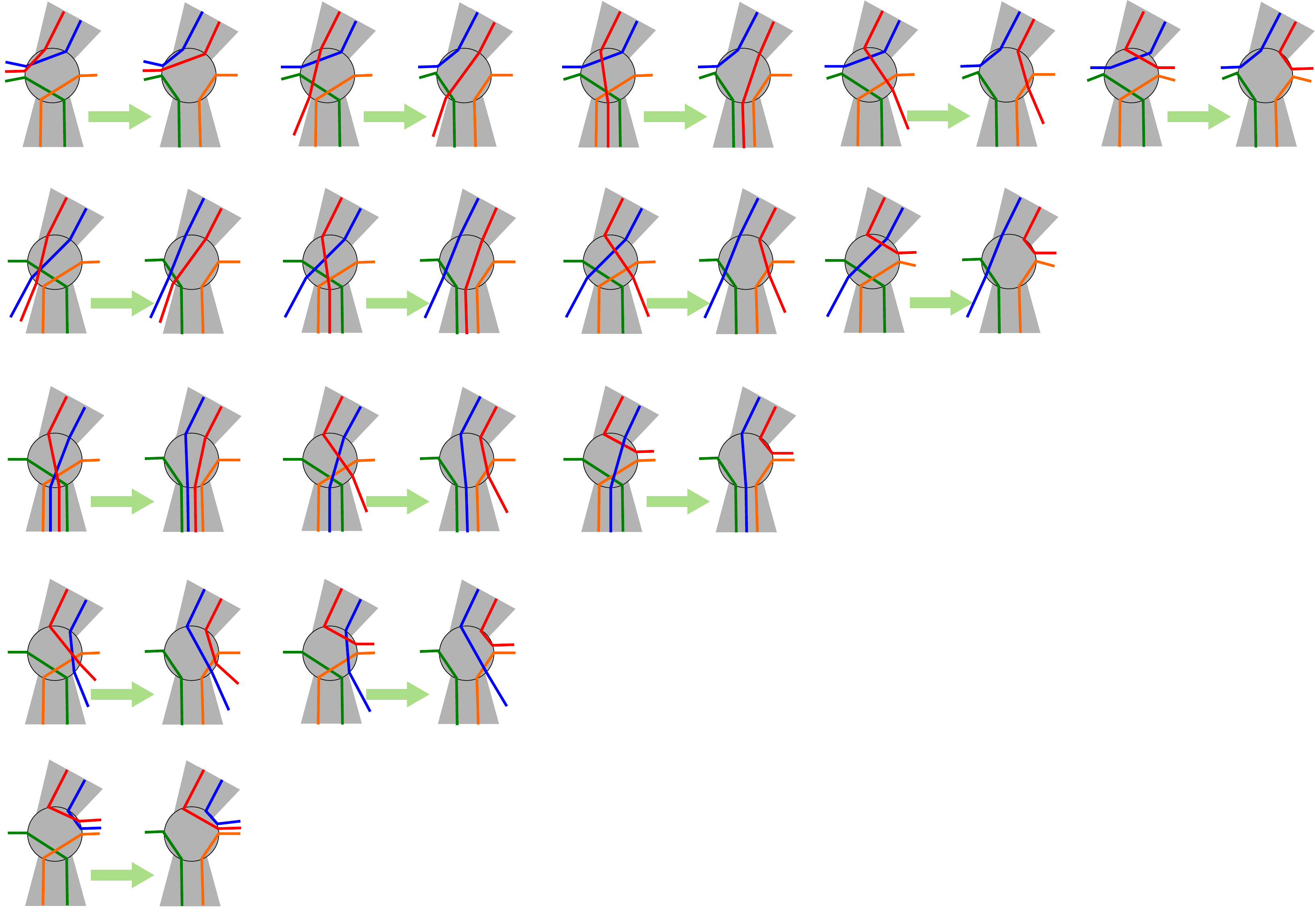}
  \caption{Effect of the bigon swapping on the 15 shuffles of $S'_2$ and $(\alpha', \beta')$.}
  \label{fig:D2-S2}
\end{figure}
\begin{figure}
  \centering
  \includesvg[\linewidth]{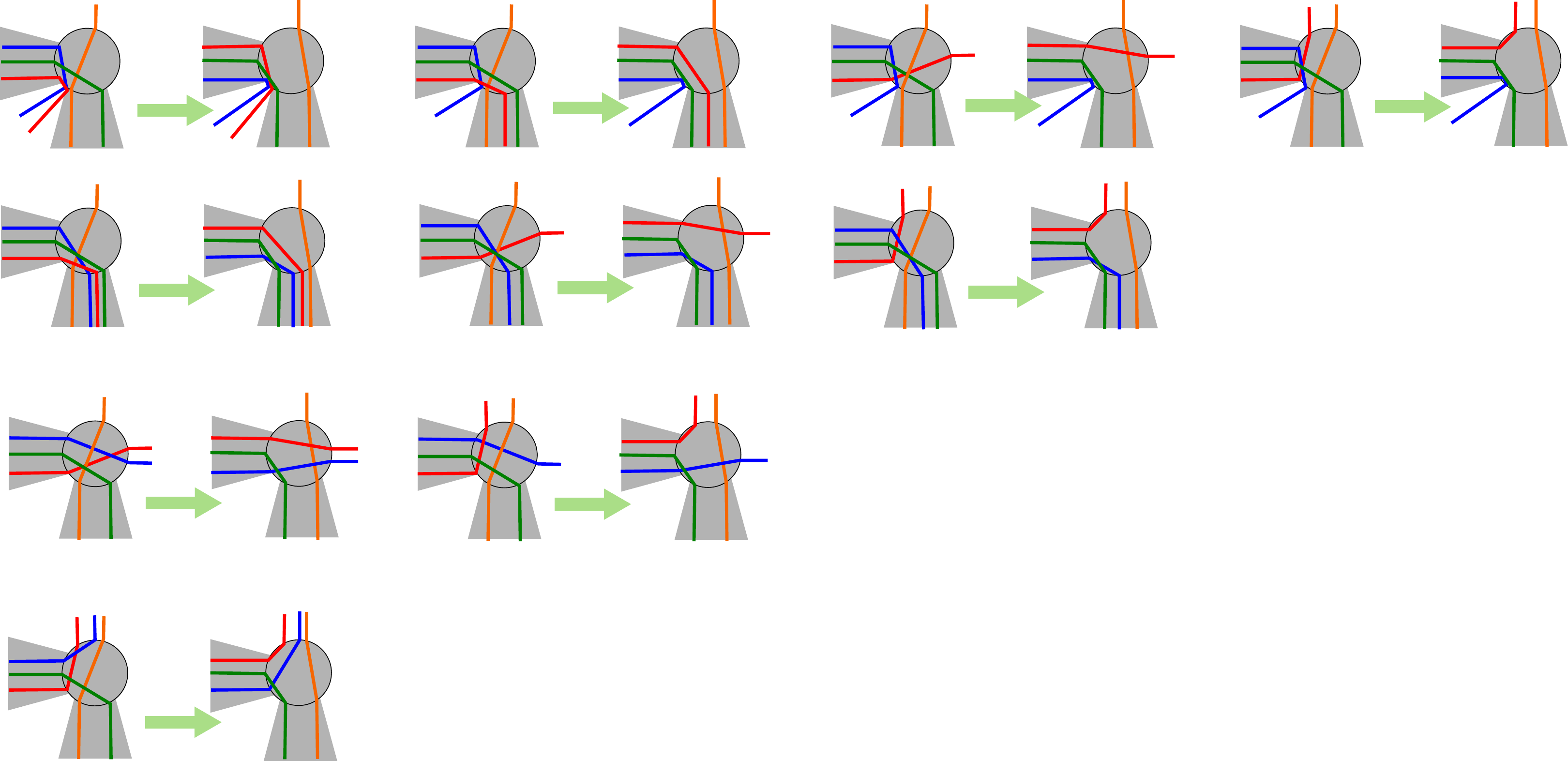}
  \caption{Effect of the bigon swapping on the 10 shuffles of $S'_4$ and $(\alpha', \beta')$.}
  \label{fig:D2-S4}
\end{figure}
\begin{figure}
  \centering
  \includesvg[\linewidth]{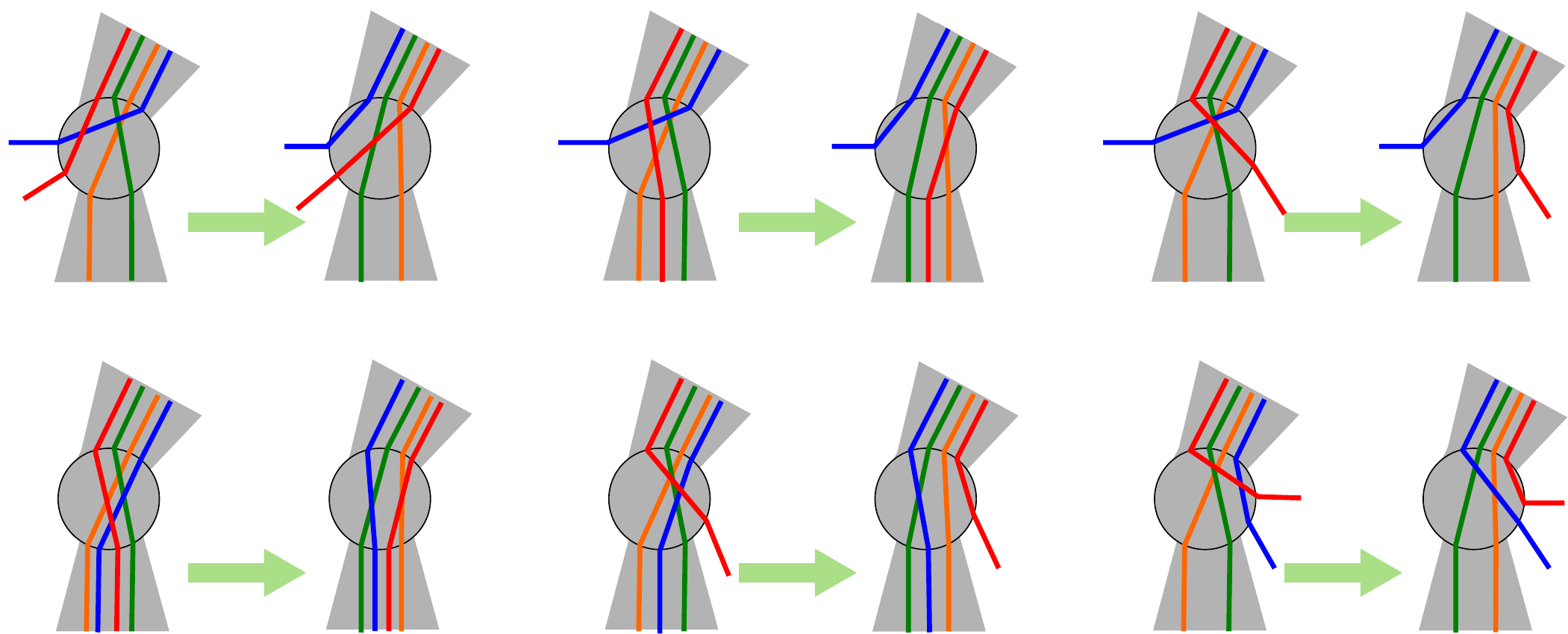}
  \caption{Effect of the bigon swapping on the 6 shuffles of $S'_6$ and $(\alpha', \beta')$.}
  \label{fig:D2-S6}
\end{figure}

In each of the configurations, we trivially check that the number of crossings is not increasing. This allows to conclude the lemma when the two index paths of a singular bigon are directed the same way. A similar analysis can be made when their directions are opposite, that is when the considered  bigon has the form $(\Ipath{\Imath}{\ell},\Ipath{\Jmath}{-\ell})$.
\end{proof}

\end{document}